\title{A Temporal Logic for Strategic Hyperproperties} 
\author{Raven Beutner}{CISPA Helmholtz Center for Information Security, Germany} {}{}{}
\author{Bernd Finkbeiner}{CISPA Helmholtz Center for Information Security, Germany}{}{}{}
\authorrunning{R. Beutner and B. Finkbeiner} 
\keywords{hyperproperties, temporal logic, alternating-time temporal logic, model checking, multi-agent systems, information flow, asynchronous hyperproperties} 
\newcommand{\cmark}{\ding{51}}%
\newcommand{\xmark}{\ding{55}}%
\newcommand{\EXPTIME}{\texttt{EXPTIME}}
\newcommand{\EXPSPACE}{\texttt{EXPSPACE}}
\newcommand{\PSPACE}{\texttt{PSPACE}}
\newcommand{\NLOGSPACE}{\texttt{NLOGSPACE}}
\newcommand{\ltlnext}{\tikz[baseline]{
		\draw[line width=.12ex]
		(0,.6ex) circle (.8ex);
}}
\newcommand{\ltlg}{\tikz[baseline]{
		\draw[line width=.12ex,line join=round]
		(0ex,-.2ex) -- (0ex,1.3ex) -- (1.5ex,1.3ex) -- (1.5ex,.-.2ex) -- cycle;
}}
\newcommand{\ltlf}{\tikz[baseline]{
		\draw[line width=.12ex,line join=round]
		(0ex,.6ex) -- (.95ex,1.55ex) -- (1.9ex,.6ex) -- (.95ex,-.35ex) -- cycle;
}}
\newcommand{\HyperATLS}{\texttt{HyperATL}$^*$}
\newcommand{\CTL}{\texttt{CTL}}
\newcommand{\CTLS}{\texttt{CTL}$^*$}
\newcommand{\LTL}{\texttt{LTL}}
\newcommand{\HyperLTL}{\texttt{HyperLTL}}
\newcommand{\ATL}{\texttt{ATL}}
\newcommand{\ATLS}{\texttt{ATL}$^*$}
\newcommand{\AHLTL}{\texttt{AHLTL}}
\newcommand{\HyperCTLS}{\texttt{HyperCTL}$^*$}
\newcommand{\atomic}{\mathbf{AP}}
\newcommand{\agents}{\Xi}
\newcommand{\agent}{\xi}
\newcommand{\refLemma}[1]{Lemma~\ref{lem:#1}}
\newcommand{\refTheo}[1]{Theorem~\ref{theo:#1}}
\newcommand{\refDef}[1]{Definition~\ref{def:#1}}
\newcommand{\refFig}[1]{Fig.~\ref{fig:#1}}
\newcommand{\refSection}[1]{Sec.~\ref{sec:#1}}
\newcommand{\refTable}[1]{Table~\ref{tab:#1}}
\newcommand{\refProp}[1]{Proposition~\ref{prop:#1}}
\definecolor{controlColor}{RGB}{0,0, 205}
\definecolor{varColor}{RGB}{128,0,128}
\definecolor{constColor}{RGB}{0,139,0}
\newcommand{\myvar}[1]{{\ttfamily \color{varColor} #1}}
\newcommand{\mycontrol}[1]{{\ttfamily \color{controlColor} #1}}
\newcommand{\myconst}[1]{{\ttfamily \color{constColor} #1}}
\begin{document}

\maketitle

\begin{abstract}
Hyperproperties are commonly used in computer security to define information-flow policies and other requirements that reason about the relationship between multiple computations. In this paper, we study a novel class of hyperproperties where the individual computation paths are chosen by the strategic choices of a coalition of agents in a multi-agent system. We introduce \HyperATLS, an extension of computation tree logic with path variables and strategy quantifiers. \HyperATLS~can express strategic hyperproperties, such as that the scheduler in a concurrent system has a \emph{strategy} to avoid information leakage. \HyperATLS~is particularly useful to specify \emph{asynchronous} hyperproperties, i.e., hyperproperties where the speed of the execution on the different computation paths depends on the choices of the scheduler.  Unlike other recent logics for the specification of asynchronous hyperproperties, our logic is the first to admit decidable model checking for the full logic. We present a model checking algorithm for \HyperATLS~based on alternating word automata and show that our algorithm is asymptotically optimal by providing a matching lower bound. 
We have implemented a prototype model checker for a fragment of \HyperATLS, able to check various security properties on small programs.
\end{abstract}

\section{Introduction}

Hyperproperties \cite{DBLP:journals/jcs/ClarksonS10} are system properties that specify a relation between the traces of the system.
Such properties are of increasing importance as they can, for example, characterize the information-flow in a system \cite{DBLP:phd/dnb/Rabe16}.
Consequently, several logics for the specification of hyperproperties have been developed, including hyper variants of \CTLS (and \LTL)~\cite{DBLP:conf/post/ClarksonFKMRS14,DBLP:phd/dnb/Rabe16}, \texttt{PDL}-$\Delta$~\cite{DBLP:conf/concur/GutsfeldMO20} and \texttt{QPTL}~\cite{DBLP:conf/cav/FinkbeinerHHT20}.  
A prominent example is the temporal hyperlogic \HyperLTL~\cite{DBLP:conf/post/ClarksonFKMRS14}, which extends linear-time temporal logic (\LTL)~\cite{DBLP:conf/focs/Pnueli77} with explicit trace quantification. 
In \HyperLTL~we can, for instance, express non-interference (NI), i.e., the requirement that the observable output of system does not depend on high-security inputs~\cite{DBLP:conf/sp/GoguenM82a}.
A prominent formulation of NI for non-deterministic systems is \emph{generalized non-interference} (GNI)~\cite{DBLP:conf/sp/McCullough88,DBLP:conf/cav/CoenenFST19}, which can be expressed as the \HyperLTL~formula
$$\forall \pi_1.~\forall \pi_2.~\exists \pi_3.~\ltlg(\bigwedge_{a \in H} a_{\pi_1} \leftrightarrow a_{\pi_3}) \land \ltlg(\bigwedge_{a \in O} a_{\pi_2} \leftrightarrow a_{\pi_3}),$$
where $H$ and $O$ are two sets of propositions, with $H$ representing the high-security input and $O$ the output.
The formula states that for any pair of traces $\pi_1, \pi_2$ there exists a third trace that agrees on the high-security inputs with $\pi_1$ and on the outputs with $\pi_2$ (for simplicity we assume that no low-inputs are present).
The existence of such a trace guarantees that any observation made on the outputs is compatible with every possible sequence of high-security inputs.
The non-determinism is thus the sole explanation for the system output.

In this paper, we introduce a novel class of hyperproperties that reason about \emph{strategic behavior} in a multi-agent system. 
As a motivation for why strategic hyperproperties are desirable, consider $\mathit{GNI}$ from above.
As \HyperLTL~only quantifies existentially or universally over the paths in the system, the entire system is treated  either as fully controllable or fully adversarial.
Moreover, the witness trace $\pi_3$ can be constructed with full knowledge of both $\pi_1$ and $\pi_2$; this means that the entire output and input history can be used to resolve the non-determinism of the system appropriately. 
Now consider a system where the non-determinism arises from a scheduling decision between two possible subprograms $P_1, P_2$.
Each subprogram reads the next input \myvar{h} of the system. Suppose that $P_1$ assumes that \myvar{h} is even and otherwise leaks information, while $P_2$ assumes that \myvar{h} is odd and otherwise leaks information.
In the trace-based view of $\mathit{GNI}$, the witness trace $\pi_3$ is fixed knowing the entire \emph{future} input sequence, allowing the construction of a leakage-avoiding path $\pi_3$; The system satisfies $\mathit{GNI}$.
An \emph{actual} scheduler, who chooses which of $P_1, P_2$ handles the next input, can only avoid a leakage if it knows what the \emph{next} input will be, which is impossible in a real-world system.
The \HyperLTL~formulation of $\mathit{GNI}$ is, in this case, unable to express the desired property.  
In our scenario, we need to reason about the \emph{strategic} behaviour of the system, i.e., we want to check if there exist a strategy for the scheduler that avoids leakage.

\subparagraph{Strategic Hyperproperties}
Reasoning about strategic behavior in multi-agent systems has been studied before. The seminal work on alternating-time temporal logic~\cite{DBLP:journals/jacm/AlurHK02} introduced an extension of \CTL~(and \CTLS \cite{DBLP:journals/jacm/EmersonH86}) that is centred around the idea of viewing paths as the outcome of a game, where some agents are controlled via a strategy. 
The \ATLS~quantifier $\llangle A \rrangle \varphi$ requires the agents in $A$ to have a strategy that enforces the path formula $\varphi$ to become true. 
This makes \ATLS~an ideal logic for reasoning about \emph{open} systems, where one is less interested in the pure existence of a path, but rather in the actually realizability of an outcome in a multi-agent system.
\ATL~has numerous variations and extensions, which, for example, introduce knowledge modalities \cite{DBLP:journals/sLogica/HoekW03a} or imperfect observation \cite{DBLP:conf/atal/BerthonMM17}. 
While strategy quantifiers in \ATLS~can be nested (like in \CTLS), the logic is still unable to express hyperproperties, as the scope of each quantifier ends with the beginning of the next (see \cite{DBLP:journals/eatcs/Finkbeiner17}).

It is very useful to reason about the strategic behaviour of the agents in a multi-agent system with respect to a hyperproperty. 
In the example above, one would like to ask if the scheduler has a \emph{strategy} (based on the finite history of inputs only) such that unintended information-flow (which is a hyperproperty) is prevented (in the above example such an answer should be negative). 
There exist multiple angles to approach this:
One could, for instance, interpret strategic hyperproperties such that a coalition of agents tries to achieve a set of outcomes satisfying some hyperproperty (expressed, for example,  in \HyperLTL). 
Model checking the resulting logic would then subsume realizability of \HyperLTL, which is \emph{undecidable} even for simple alternation-free formulas \cite{DBLP:conf/cav/FinkbeinerHLST18}.

In this paper, we introduce a new temporal logic, called \HyperATLS, that combines the strategic behaviour in multi-agent systems with the ability to express hyperproperties. 
Crucially, we focus on the strategic behaviour of a coalition of agents along a \emph{single path}, i.e., we view path quantification as the outcome of a game. 
Syntactically, we follow a similar approach as alternating-time temporal logic~\cite{DBLP:journals/jacm/AlurHK02}. 
We use the strategy quantifier $\llangle A \rrangle \pi. \varphi$ to specify that the agents in $A$ have a strategy such that each possible outcome, when bound to the path variable $\pi$, satisfies $\varphi$.
A formula of the from $\llangle A_1 \rrangle \pi_1.  \llangle A_2 \rrangle \pi_2. \varphi$ now requires the existence of strategy for the agents in $A_1$ such that for all possible outcomes of the game $\pi_1$, the agents in $A_2$ have a strategy such that for all possible outcomes $\pi_2$, the combination of $\pi_1, \pi_2$ satisfies $\varphi$ (which is a formula that can refer to propositions on paths $\pi_1, \pi_2$).
The strategic behaviour chosen by each quantifier is thus limited to the current path and can be based on the already \emph{fixed} outcomes of outer quantifiers (i.e., the entire strategy for the agents in $A_2$ can depend on the full outcome of $\pi_1$).
Sometimes, however, it is useful not to reason incrementally about the strategy for a single path at a time, but rather to reason about a \emph{joint} strategy for multiple paths.
To express this, we endow our logic with an explicit construct to resolve the games in parallel (syntactically we surround quantifiers by $[\cdot]$ brackets).
The formula $[\llangle A_1 \rrangle \pi_1.  \llangle A_2 \rrangle \pi_2.]~\varphi$ requires winning strategies for the agents in $A_1$ (for the first copy) and for $A_2$ (for the second copy) where the strategies can observe the current state of \emph{both} copies.
This enables collaboration between the agents in $A_1$ and $A_2$.

Similar to \ATLS, the empty (resp.~full) agent set corresponds to universal (resp.~existential) quantification. 
\HyperATLS~therefore subsumes \HyperCTLS~(and thus \HyperLTL) as well as \ATLS.
The logic is thus a natural extension of both the temporal logics for hyperproperties and the alternating-time logics from the non-hyper realm (see \refFig{expr}).

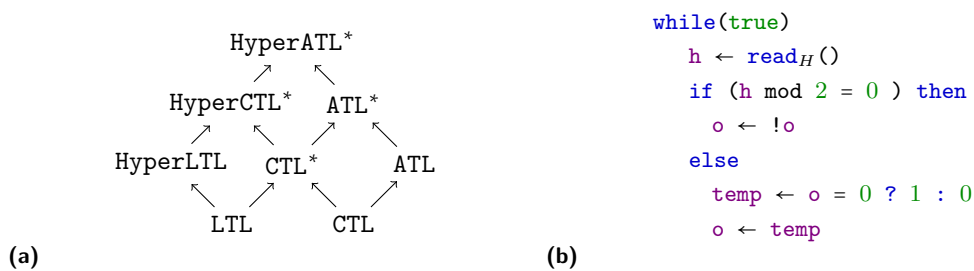
\begin{figure}
	\begin{subfigure}[b]{0.5\textwidth}
		\begin{center}
			\begin{tikzpicture}[scale=0.8]
				
				\node[] at (1,3) (hatl) {\HyperATLS};

				\node[] at (0,2) (hctl) {\HyperCTLS};
				
				\node[] at (2,2) (atls) {\ATLS};
				
				\node[] at (-1,1) (hltl) {\texttt{HyperLTL}};
				
				\node[] at (1,1) (ctls) {\CTLS};
				
				\node[] at (3,1) (atl) {\texttt{ATL}};

				\node[] at (0,0) (ltl) {\texttt{LTL}};
				
				\node[] at (2,0) (ctl) {\texttt{CTL}};

				\draw[->] (ltl) -- (ctls);
				\draw[->] (ltl) -- (hltl);
				
				\draw[->] (ctl) -- (ctls);
				\draw[->] (ctl) -- (atl);
				
				\draw[->] (hltl) -- (hctl);
				
				\draw[->] (ctls) -- (hctl);
				\draw[->] (ctls) -- (atls);
				
				\draw[->] (atl) -- (atls);
				
				\draw[->] (hctl) -- (hatl);
				\draw[->] (atls) -- (hatl);
				
			\end{tikzpicture}
		\end{center}
		\vspace{-0.5cm}
		\subcaption{ }\label{fig:expr}
	\end{subfigure}
	\begin{subfigure}[b]{0.5\textwidth}
		\begin{center}
			\begin{minipage}{0.6\textwidth}
				{\small\ttfamily
					\textcolor{controlColor}{while}(\textcolor{constColor}{true})\\
					\makebox[0.3cm]{} \myvar{h} $\leftarrow$ \textcolor{controlColor}{read}$_H$()\\
					\makebox[0.3cm]{} \textcolor{controlColor}{if} (\myvar{h} mod \textcolor{constColor}{$2$}~$=$~\textcolor{constColor}{$0$} ) \textcolor{controlColor}{then}\\
					\makebox[0.6cm]{} \myvar{o} $\leftarrow$ !\myvar{o}\\
					\makebox[0.3cm]{} \textcolor{controlColor}{else}\\
					\makebox[0.6cm]{} \textcolor{varColor}{temp} $\leftarrow$ \myvar{o}~$ = $~\myconst{$0$} \mycontrol{?}~\textcolor{constColor}{$1$} \mycontrol{:}~\textcolor{constColor}{$0$} \\
					\makebox[0.6cm]{} \myvar{o} $\leftarrow$ \textcolor{varColor}{temp}	
				}
			\end{minipage}
		\end{center}
		\vspace{-0.5cm}
		\subcaption{}\label{fig:exProg}
	\end{subfigure}
	\caption{\textbf{(a)}: Expressiveness of temporal logics. An arrow $A \to B$ indicates that $A$ is a syntactic fragment of $B$. \textbf{(b)}: Example program that violates (synchronous) observational-determinism. }
\end{figure}

\subparagraph{Strategic Non-Interference}
Consider again the example of $\mathit{GNI}$ expressed in \HyperLTL.
In \HyperATLS, we can express a more refined, strategic notion of non-interference, that postulates the existence of a \emph{strategy} for the non-determinism. 
As a first step, we consider a program no longer as a Kripke structure (a standard model for temporal hyperlogics), but as a game structure played between two players. Player $\agent_N$ is responsible for resolving the non-determinism of the system, and player $\agent_H$ is responsible for choosing the high-security inputs to the system.
We can now express that $\agent_N$ has a strategy to produce matching outputs (without knowing the future inputs by $\agent_H$). 
Consider the following formula $\mathit{stratNI}$:
$$\forall \pi_1.~\llangle \{\agent_N\} \rrangle \pi_2.~\ltlg (\bigwedge_{a \in O} a_{\pi_1} \leftrightarrow a_{\pi_2})$$
This formula requires that for every possible reference path $\pi_1$, the non-determinism always has a \emph{strategy} to produce identical outputs.
One can show that $\mathit{stratNI}$ implies $\mathit{GNI}$: The existence of a leakage ``disproving'' strategy implies the existence of a leakage ``disproving'' trace. 
A particular strength of this formulation is that we can encode additional requirements on the strategy. 
For example: if the internal non-determinism arises from the scheduling decisions between multiple components, we can require fairness of the scheduling strategy. 

\subparagraph{Asynchronous Hyperproperties}
Strategic hyperproperties are also very natural to express \emph{asynchronous hyperproperties}.
While existing hyperlogics traverse the traces of a system synchronously, one often requires an asynchronous traversal to account, for example, for the unknown speed of execution of software that runs on some unkown platform.  
In a multi-agent system, the scheduling decision (i.e., whether a system progresses or remains in its current state) can then be seen as the decision made by scheduling agent (called $\mathit{sched}$ in the following).
If not already present, we can artificially add such a scheduling agent via a system transformation. 
By either including or excluding this agent in a strategy quantifier, we can then naturally reason about asynchronous executions of programs.
Instead of reasoning about the asynchronous scheduling of a system directly, we thus reason about the existence of a strategy for the scheduling agent. 

As an example consider the program in \refFig{exProg}, which continuously reads an input and flips the output \myvar{o} either directly, or via a temporary variable.
Based on the input, the exact timepoint of the change in \myvar{o} differs.
A synchronous formulation of \emph{observational-determinism} (OD) \cite{DBLP:conf/csfw/HuismanWS06}, which requires the output to be identical on all traces, does not hold.
In \HyperATLS, we can naturally express a variant of OD where we search for a strategy for the scheduling agent $\mathit{sched}$, who aligns the outputs on both traces by stuttering them appropriately:
$$[\llangle \{\mathit{sched}\} \rrangle  \pi_1.~\llangle \{\mathit{sched}\} \rrangle \pi_2.]~\ltlg (\bigwedge_{a \in O} a_{\pi_1} \leftrightarrow a_{\pi_2}) $$
The program in \refFig{exProg} (with an added asynchronous scheduler) satisfies this variant, because $\mathit{sched}$ can stutter the change in \myvar{o} in order to align with the second trace.

To demonstrate the expressiveness of this strategic view on asynchronous hyperproperties, we compare our approach to \AHLTL, a recent temporal logic for asynchronous hyperproperties~\cite{hyperaltl}.
While \AHLTL~model checking is undecidable in general, recent work~\cite{hyperaltl} has identified a large fragment for which model checking is possible. 
We show that this fragment can be encoded within \HyperATLS.
Every property in this (largest known) decidable fragment can thus be expressed in \HyperATLS, for which model checking is decidable for the \emph{full} logic.

\subparagraph{Model Checking}
We show that model checking of \HyperATLS~on concurrent game structures is decidable and present an automata-theoretic algorithm. 
Our algorithm incrementally reduces model checking to the emptiness of an automaton.
We show that alternating automata are well suited to keep track of all possible path assignments satisfying a formula by encoding the game structure in the transition function of the automaton.
We characterize the model checking complexity in terms of the number of complex quantifiers (where the agent team is non-trivial) and simple quantifiers (i.e., $\exists$ or $\forall$).
We provide a lower bound, based on a novel construction that encodes a doubly exponential counter within a single strategy quantifier, that (in almost all cases) matches the upper bound from our algorithm.

\subparagraph{Prototype Model Checker}
On the practical side, we present a prototype model checker for an efficient fragment of \HyperATLS~by reducing the model checking to solving of a parity game. 
The fragment supported by our tool does, in particular, include all alternation free \HyperLTL~formulas \cite{DBLP:conf/cav/FinkbeinerRS15}, the $\forall^*\exists^*$-model checking approach from \cite{DBLP:conf/cav/CoenenFST19} as well as all formulas in the decidable fragment of \AHLTL~\cite{hyperaltl}.

\subparagraph{Contributions}
In summary, our contributions include the following:
\begin{itemize}
	\item We introduce a novel logic to express strategic hyperproperties and demonstrate that it is well suited to express, e.g.,~information-flow control and, in particular, asynchronous hyperproperties. 
	
	\item We give an automata-based model checking algorithm for our logic and provide a lower bound on the model checking problem.	
	
	\item We show that our logic can express all formulas in the largest known decidable fragment of the existing hyperlogic \AHLTL~\cite{hyperaltl}.

	\item We provide a prototype-model checker for an efficiently checkable fragment of \HyperATLS~and use it to verify information-flow polices and asynchronous hyperproperties.
	
\end{itemize}

\section{Preliminaries}

In this section we introduce some basic preliminaries needed in the following.

\subparagraph{Concurrent Game Structure}
As our model of multi-agent systems, we consider concurrent game structures (CGS) \cite{DBLP:journals/jacm/AlurHK02}.
The transition relation in a CGS is based on the decision by individual agents (or players).
Formally, a CGS is a tuple $\mathcal{G} = (S, s_0, \agents, \mathscr{M}, \delta, \atomic, L)$ where $S$ is a finite set of states, $s_0 \in S$ the initial state, $\agents$ a finite set of agents and $\mathscr{M}$ a finite set of moves.
We call a function $\sigma: \agents \to \mathscr{M}$ a global move vector and for a set of agent $A \subseteq \agents$ a function $\sigma: A \to \mathscr{M}$ a partial move vector. 
$\delta: S \times (\agents \to \mathscr{M}) \to S$ is a transition function that maps states and move vectors to successor states. 
Finally, $\atomic$ is a finite set of propositions and $L : S \to 2^\atomic$ a labelling function. 
Note that every Kripke structure (a standard model for temporal logics \cite{DBLP:books/daglib/0020348}) can be seen as a $1$-player CGS.
For disjoint sets of agents $A_1, A_2$ and partial move vectors $\sigma_i : A_i \to \mathscr{M}$ for $i \in \{1, 2\}$ we define $\sigma_1 + \sigma_2$ as the move vector obtained as the combination of the individual choices. 
For $\sigma : A \to \mathscr{M}$ and $A' \subseteq A$, we define $\sigma_{\mid A'}$ as the move vector obtained by restring the domain of $\sigma$ to $A'$.

In a concurrent game structure (as the name suggests) all agents choose their next move concurrently, i.e., without knowing what moves the other player have chosen. 
We introduce the concept of multi-stage CGS (MSCGS), in which the move selection proceeds in stages and agents can base their decision on the already selected moves of (some of the) other agents. 
This is particularly useful when we, e.g.,~want to base a scheduling decision on the moves selected by the other agents.
Formally, a MSCGS is a CGS equipped with a function $d : \agents \to \mathbb{N}$, that orders the agents according to informedness. 
Whenever $d(\agent_1) < d(\agent_2)$, $\agent_2$ can base its next move on the move selected by $\agent_1$.
A CGS thus naturally corresponds to a MSCGS with $d = \mathbf{0}$, where $\mathbf{0}$ is the constant $0$ function.

\subparagraph{Alternating Automata}
An alternating parity (word) automaton (APA) is a tuple $\mathcal{A} = (Q, q_0, \Sigma, \rho, c)$ where $Q$ is a finite set of states, $q_0$ an initial state, $\Sigma$ a finite alphabet, $\rho : Q \times \Sigma \to \mathbb{B}^+(Q)$ a function mapping states to positive boolean combinations of states and $c : Q \to \mathbb{N}$ a colouring of nodes with natural numbers \cite{DBLP:journals/jacm/ChandraKS81}. 
For $\varphi \in \mathbb{B}^+(Q)$, $B \subseteq Q$ we write $B \models \varphi$ if the assignment obtained from $B$ satisfies $\varphi$.
A tree is a set $T \subseteq \mathbb{N}^*$ that is prefixed closed, i.e., $\tau \cdot n \in T$ implies $\tau \in T$.
We refer to elements in $\tau \in T$ as nodes and denote with $|\tau|$ the length of $\tau$ (or equivalently the depth of the node). 
For a node $\tau \in T$ we denote with $\mathit{children}(\tau)$ the immediate children of $\tau$, i.e., $\mathit{children}(\tau) = \{\tau \cdot n \in T \mid n \in \mathbb{N}\}$.
An $X$-labelled tree is a pair $(T, r)$ where $T$ is a tree and $r : T \to X$ a labelling with $X$. 
A run of an APA $\mathcal{A} = (Q, q_0, \Sigma, \rho, c)$ on a word $u \in \Sigma^\omega$ is a $Q$-labelled tree $(T, r)$ that satisfies the following: \textbf{(1)} $r(\epsilon) = q_0$, \textbf{(2)} For all $\tau \in T$, $\{r(\tau') \mid \tau' \in \mathit{children}(\tau)\} \models \rho(r(\tau), u(|\tau|))$.
A run $(T, r)$ is accepting if for every infinite path $\pi$ in $T$ the minimal colour (given by $c$) that occurs infinitely many times is even. 
We denote with $\mathcal{L}(\mathcal{A})$ the set of words for which $\mathcal{A}$ has an accepting run.
We call an alternating automaton $\mathcal{A}$ non-deterministic (resp.~universal) if the transition function $\delta$ is a disjunction (resp.~conjunction) of states.
If $\delta$ is just a single state, we call $\mathcal{A}$ deterministic.
Crucially alternating, non-deterministic, universal and deterministic parity automaton are all equivalent in the sense that they accept the same class of languages (namely $\omega$-regular ones) although they can be (double) exponentially more succinct:

\begin{theorem}[\cite{DBLP:journals/tcs/MiyanoH84,DBLP:journals/jacm/DrusinskyH94}]\label{theo:alt1}
	For every alternating parity automaton $\mathcal{A}$ with $n$ states, there exists a non-deterministic parity automaton $\mathcal{A}'$ with $2^{\mathcal{O}(n)}$-states that accepts the same language.
	For every non-deterministic or universal parity automaton $\mathcal{A}$ with $n$ states, there exists a deterministic parity automaton $\mathcal{A}'$ with $2^{\mathcal{O}(n)}$-states that accepts the same language.
\end{theorem}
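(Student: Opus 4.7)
The plan is to appeal to the classical automata-theoretic constructions provided in the cited references and sketch the underlying ideas, rather than to give a self-contained proof. Both statements are standard results in the theory of $\omega$-automata, and the two claims correspond to the two citations, respectively.

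For the first claim, I would invoke the Miyano--Hayashi subset-style construction, suitably lifted from B\"uchi to parity conditions. The intuition is that a non-deterministic automaton $\mathcal{A}'$ can track the set of states that the alternating automaton $\mathcal{A}$ might currently occupy (similarly to the classical subset construction for finite automata) while, at each step, non-deterministically guessing a minimal satisfying assignment $B \models \rho(q, a)$ for every state $q$ in the current subset and taking the union of these sets as the successor. To enforce the parity acceptance along all infinite branches of the implicit run tree, one pairs the subset with an obligation component recording which states still owe a visit to a good colour, so the state space is of the form $2^Q \times 2^Q$ augmented with colour information. The total number of states is $2^{\mathcal{O}(n)}$, as required.

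For the second claim, I would apply Safra-style determinization of non-deterministic parity automata. The deterministic automaton has Safra trees as states, each node of which is labelled by a subset of $Q$ together with the auxiliary data needed to determine the minimal infinitely recurring colour along any branch. The number of such trees over $n$ states is $2^{\mathcal{O}(n)}$ (in the sense used by the paper). If the input is universal rather than non-deterministic, I would first dualize it to an equivalent non-deterministic automaton by negating the transition function and shifting the colours (so that the parity condition is inverted), then apply the Safra construction, and finally dualize the resulting deterministic automaton back (which for parity automata amounts to a shift of the colour function, preserving determinism and the state count).

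The main obstacle in writing either part out fully is the careful bookkeeping required to handle the parity condition (as opposed to plain B\"uchi): tracking infinitely recurring colours inside subsets of states demands either nested Safra trees or a latest-appearance record. Both constructions are, however, fully worked out in the cited works of Miyano--Hayashi and Drusinsky--Harel, so the proof ultimately reduces to citing these classical results and observing that their complexity bounds directly yield the stated $2^{\mathcal{O}(n)}$ state count.
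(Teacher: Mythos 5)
Your proposal is correct and matches the paper, which states this theorem purely as a cited classical result (Miyano--Hayashi for alternation removal, Drusinsky--Harel/Safra-style determinization) and gives no proof of its own. Your sketches of the breakpoint construction and of determinization with dualization for the universal case are the standard arguments behind those citations, and your caveat about the bookkeeping for parity (and about reading $2^{\mathcal{O}(n)}$ in the paper's loose sense) is appropriate.
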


\begin{theorem}[\cite{DBLP:journals/tocl/KupfermanV01}]\label{theo:altneg}
	For every alternating parity automaton $\mathcal{A}$ with $n$ states, there exists a alternating parity automaton $\overline{\mathcal{A}}$ with $\mathcal{O}(n^2)$-states that accepts the complemented language.
\end{theorem}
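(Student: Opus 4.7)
The plan is a straightforward De Morgan dualization at the level of the transition function, paired with a colour shift to flip the parity condition. Given $\mathcal{A} = (Q, q_0, \Sigma, \rho, c)$, I define $\overline{\mathcal{A}} := (Q, q_0, \Sigma, \overline{\rho}, \overline{c})$ on the same state set, where $\overline{\rho}(q, a)$ is obtained from $\rho(q, a)$ by swapping $\land$ with $\lor$ and $\mathit{true}$ with $\mathit{false}$ while leaving state atoms unchanged, and where $\overline{c}(q) := c(q) + 1$. The resulting automaton has exactly $n$ states, which is well within the claimed $\mathcal{O}(n^2)$ bound.

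To establish $\mathcal{L}(\overline{\mathcal{A}}) = \Sigma^\omega \setminus \mathcal{L}(\mathcal{A})$, I would use the standard game-theoretic semantics of alternating automata. A word $w \in \Sigma^\omega$ induces a two-player parity game $G_{\mathcal{A}, w}$ whose positions track a current state $q$ and a position $i$: the existential player resolves disjunctions in $\rho(q, w(i))$, the universal player resolves conjunctions, and play continues at position $i+1$ from the chosen state atom. A play is won by the existential player iff the minimal colour occurring infinitely often is even, and one can check directly from the run-tree definition in the excerpt that an accepting run of $\mathcal{A}$ on $w$ corresponds to a memoryless winning strategy for the existential player in $G_{\mathcal{A}, w}$. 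The game $G_{\overline{\mathcal{A}}, w}$ is, by construction, obtained from $G_{\mathcal{A}, w}$ by swapping the roles of the two players (dualization exchanges conjunction and disjunction) and by negating the parity condition (shifting every colour by $1$ flips even and odd). Hence the existential player wins $G_{\overline{\mathcal{A}}, w}$ iff the universal player wins $G_{\mathcal{A}, w}$, and invoking memoryless determinacy of parity games gives $w \in \mathcal{L}(\overline{\mathcal{A}}) \iff w \notin \mathcal{L}(\mathcal{A})$.

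The main obstacle is not the construction itself but the careful bookkeeping in the correspondence between run trees and game strategies: one must verify that a run tree $(T, r)$ in the sense of the excerpt, whose branches encode choices for both players simultaneously, can be rearranged into a positional strategy for the existential player in $G_{\mathcal{A}, w}$, and conversely that a memoryless winning strategy can be unfolded into an accepting run tree with the colour condition preserved along every branch. Once this bridge is in place, dualization and determinacy make the complementation argument essentially automatic; no state blowup is incurred, which comfortably meets the quadratic bound stated in the theorem.
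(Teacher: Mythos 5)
Your construction is the classical Muller--Schupp dualization and it is correct; note, however, that the paper does not prove this statement at all -- it imports it from Kupferman and Vardi, so there is no ``paper proof'' to match against. Your argument (dualize the transition formulas by exchanging $\land$ with $\lor$ and $\top$ with $\bot$, shift every colour by $1$ to flip the min-even parity condition, and justify correctness through the acceptance game plus determinacy of parity games) is the standard and most elementary route. It actually proves something slightly stronger than the stated theorem: the dual automaton has exactly $n$ states, so the complement is achieved with \emph{linear} rather than quadratic size. The $\mathcal{O}(n^2)$ bound in the citation stems from a different construction in that reference, which complements by first translating the alternating parity automaton into a \emph{weak} alternating automaton (incurring the quadratic blowup in the number of states times the number of colours) and then dualizing; that detour buys a structurally simpler target automaton, which matters for some downstream algorithms but is not needed to justify the bound as stated. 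Two small points to tighten: for the equivalence between accepting run trees and winning strategies you only need determinacy of the acceptance game (memorylessness is a convenience, not a necessity, and the arena is infinite since it tracks the word position, so you should invoke positional determinacy for parity games over arbitrary arenas rather than the finite-arena version); and you should say explicitly that a set of children modelling a positive Boolean formula encodes the existential player's choice of a satisfying assignment while the universal player's alternatives are realized by the branching of the tree -- once that is spelled out, the bridge you describe as ``bookkeeping'' is indeed routine.
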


\section{HyperATL*}\label{sec:hypteratl}

In this section we introduce \HyperATLS. 
Our logic extends the standard temporal logic \CTLS~\cite{DBLP:journals/jacm/EmersonH86} by introducing path variables and strategic quantification \cite{DBLP:journals/jacm/AlurHK02}. 
Assume a countably infinite set of path variables $\mathit{Var}$, a set of agents $\agents$ and a set of atomic propositions $\atomic$.
\HyperATLS~formulas are generated by the following grammar
$$ \varphi := \llangle A \rrangle \pi. \varphi \mid a_\pi \mid \varphi \land \varphi \mid \neg \varphi \mid \varphi~\mathcal{U}~\varphi \mid \ltlnext \varphi$$
where $\pi \in \mathit{Var}$ is a path variable, $a \in \atomic$ an atomic proposition and $A \subseteq \agents$ a set of agents.
As in \HyperCTLS, $a_\pi$ means that proposition $a$ holds in the current step on path $\pi$. Via $\llangle A \rrangle \pi. \varphi$ we can quantify over paths in a system (which we consider as the outcome of a game). 
$\llangle A \rrangle \pi. \varphi$ requires the agents in $A$ to have a \emph{strategy} (defined below) such that each outcome under that strategy, when bound to trace variable $\pi$, satisfies $\varphi$.
We abbreviate as usual $\varphi_1 \lor \varphi_2 := \neg (\neg \varphi_1 \land \neg \varphi_2)$, and the temporal operators globally ($\ltlg$), eventually ($\ltlf$) and release ($\mathcal{R}$).
Trivial agent sets, i.e., $A= \emptyset$ or $A = \agents$ correspond to classical existential or universal quantification.
We therefore write $\forall \pi$ instead of $\llangle \emptyset \rrangle \pi$ and $\exists \pi$ instead of $\llangle \agents \rrangle \pi$.  
We call a quantifier \emph{simple} if the agent-set is trivial and otherwise \emph{complex}.
We call a formula \emph{linear} if it consists of an initial quantifier prefix followed by a quantifier-free (\LTL) formula.

\subparagraph{Semantics}
Let us fix a MSCGS $\mathcal{G} = (S, s_0, \agents, \mathscr{M}, \delta, d, \atomic, L)$ as a model.
We first need to formalize the notion of a strategy in the game structure. A strategy for any agent is a function that maps finite history of plays in the game to a move in $\mathscr{M}$. 
As the plays in an MSCGS progress in stages the decision can be based not only on the past sequence of states, but also on the fixed moves of all agents in previous stages.
Formally, a strategy for an agent $\agent$ is a function $f_\agent : S^+ \times (\{\agent' \mid d(\agent') < d(\agent)\} \to \mathscr{M}) \to \mathscr{M}$.
Given a set of agents $A$, a set of strategies $F_A = \{f_\agent \mid \agent \in A\}$ and a state $s \in S$, we define $\mathit{out}(\mathcal{G}, s, F_A)$ as the set of all runs $u \in S^\omega$ such that \textbf{(1)} $u(0) = s$ and \textbf{(2)} for every $i \in \mathbb{N}$ there exists a global move vector $\sigma$ with $\delta(u(i), \sigma) = u(i+1)$ and for all $\agent \in A$ we have $\sigma(\agent) = f_\agent(u[0, i], \sigma_{\mid  \{\agent' \mid d(\agent') < d(\agent)\}})$.
The agents in $A$ choose their move in each step based on the finite history of the play and the decision of all other agents in an earlier stage.
Note that in case where $d = \mathbf{0}$, a strategy is just a function $S^+ \to \mathscr{M}$, ignoring the moves selected by other agents. 

The semantics of a formula is now defined in terms of a path assignment $\Pi : \mathit{Var} \to S^\omega$, mapping path variables to infinite sequences of states in $\mathcal{G}$.
For a path $t \in S^\omega$ we write $t[i, \infty]$ to refer to the infinite suffix of $t$ starting at position $i$. 
We write $\Pi[i, \infty]$ to denote the path assignment defined by $\Pi[i, \infty](\pi) = \Pi(\pi)[i, \infty]$. 
We can then inductively define the satisfaction relation for \HyperATLS:
\begin{align*}
	\Pi &\models_\mathcal{G} a_\pi &&\text{iff  } a \in L(\Pi(\pi)(0))\\
	\Pi &\models_\mathcal{G} \neg \varphi &&\text{iff  }\Pi \not\models_\mathcal{G} \varphi\\
	\Pi &\models_\mathcal{G} \varphi_1 \land \varphi_2 &&\text{iff  } \Pi \models_S \varphi_1 \text{ and } \Pi \models_\mathcal{G} \varphi_2\\
	\Pi &\models_\mathcal{G} \varphi_1~\mathcal{U}~\varphi_2 &&\text{iff  } \exists i \geq 0.  \Pi[i, \infty] \models_\mathcal{G} \varphi_2  \text{ and } \forall 0 \leq j < i. \Pi[j, \infty] \models_\mathcal{G} \varphi_1\\
	\Pi &\models_\mathcal{G} \ltlnext \varphi &&\text{iff  }\Pi[1, \infty] \models_\mathcal{G} \varphi\\
	\Pi &\models_\mathcal{G} \llangle A \rrangle \pi.~\varphi &&\text{iff  } \exists F_A: \forall t \in \mathit{out}(\mathcal{G}, \Pi(\epsilon)(0), F_A): \Pi[\pi \mapsto t] \models_\mathcal{G} \varphi
\end{align*}
Here $\Pi(\epsilon)$ refers to the path that was last added to the assignment (similar to the  \HyperLTL-semantics \cite{DBLP:conf/post/ClarksonFKMRS14}). If $\Pi$ is the empty assignment, we define $\Pi(\epsilon)(0)$ as the initial state $s_0$ of $\mathcal{G}$.
Note that the games are local to each path but based on all outer paths.:
In a formula of the from $\forall \pi_1. \llangle A \rrangle \pi_2. \varphi$ the agents in $A$ know the already fixed, \emph{full} trace $\pi_1$ but behave as a strategy w.r.t. $\pi_2$. 
We write $\mathcal{G} \models \varphi$ whenever $\emptyset \models_\mathcal{G} \varphi$ where $\emptyset$ is the empty path assignment.

\begin{proposition}\label{prop:subsume}
	\HyperATLS~subsumes \HyperCTLS (and thus \HyperLTL) and \ATLS (see \refFig{expr}).
\end{proposition}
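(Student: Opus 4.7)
The plan is to establish the two containments separately. For \HyperCTLS, view any Kripke structure as a 1-player CGS in which the single agent's moves correspond to choosing the next successor (an observation already made in the preliminaries). Under the abbreviations $\forall \pi := \llangle \emptyset \rrangle \pi$ and $\exists \pi := \llangle \agents \rrangle \pi$ introduced in Section~\ref{sec:hypteratl}, every \HyperCTLS~formula is syntactically a \HyperATLS~formula, so only semantic alignment needs checking. This reduces to two observations about the outcome set $\mathit{out}(\mathcal{G}, s, F_A)$: when $A = \emptyset$ no agent fixes a move, so $\mathit{out}(\mathcal{G}, s, \emptyset)$ equals the set of all paths from $s$, matching universal path quantification; when $A = \agents$ every move is fixed by the strategies, so $\mathit{out}(\mathcal{G}, s, F_\agents)$ is a singleton and the outer existential over $F_\agents$ ranges precisely over all paths from $s$, matching existential path quantification. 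A routine induction on the structure of formulas then carries the semantic equivalence through the temporal and boolean cases.

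For \ATLS, I would define by mutual recursion a translation $(\cdot)^\pi$ from \ATLS~state formulas $\varphi$ and path formulas $\psi$ into \HyperATLS~formulas parametrized by a ``current path'' variable $\pi$: atomic propositions $a$ are relativized to $a_\pi$; boolean and temporal operators are applied homomorphically; and each nested strategy quantifier $\llangle A \rrangle \psi$ is mapped to $\llangle A \rrangle \pi'.\, \psi^{\pi'}$ for a fresh $\pi'$. The top-level claim is that an \ATLS~formula $\varphi$ holds in a CGS $\mathcal{G}$ iff $\mathcal{G} \models \llangle \agents \rrangle \pi.\, \varphi^\pi$, where the outer quantifier just forces evaluation to start at $s_0$ via $\Pi(\epsilon)(0)$ (alternatively one could use $\llangle \emptyset \rrangle \pi$, since a top-level state formula is closed under the choice of initial path).

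The main obstacle is verifying that the state at which an inner \HyperATLS~game is launched coincides with the state at which the corresponding \ATLS~quantifier would be evaluated. This is subtle because \HyperATLS~starts the outcomes of $\llangle A \rrangle \pi.\,\varphi$ at $\Pi(\epsilon)(0)$, the current position of the most recently bound path, whereas \ATLS~evaluates strategy quantifiers at the ``current'' point of the ambient path. I would capture the correspondence as an invariant: whenever the \HyperATLS~evaluator reaches the translation of an \ATLS~subformula with assignment $\Pi$ having most-recent path $\pi$ and position $0$ in the current suffix, the corresponding \ATLS~subformula is evaluated at state $\Pi(\pi)(0)$. Since temporal suffixes $\Pi[i,\infty]$ advance this reference in lockstep with the suffix semantics of $\ltlnext$ and $\mathcal{U}$, the invariant is preserved by every inductive case, and the inductive proof of semantic equivalence then follows by a standard case split on the shape of the formula.
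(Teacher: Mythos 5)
Your proposal is correct and follows exactly the route the paper intends: the paper states this proposition without an explicit proof, relying on the observations it has already made (trivial agent sets $\llangle \emptyset \rrangle$ and $\llangle \agents \rrangle$ collapse to $\forall$/$\exists$ over paths, Kripke structures are $1$-player CGSs, and \ATLS{} embeds by relativizing each quantifier's scope to a fresh path variable, which is unambiguous since in \ATLS{} each quantifier's scope ends at the next). Your elaboration of the one genuine subtlety---that nested games are anchored at $\Pi(\epsilon)(0)$, which the suffix semantics keeps in lockstep with the \ATLS{} evaluation point---is exactly the detail that makes the embedding go through, so the writeup is a sound filling-in of the paper's unstated argument.
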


We sometimes consider \HyperATLS~formulas with extend path quantification:
We write $\llangle A \rrangle_\mathcal{G} \, \pi. \varphi$ to indicate that the path $\pi$ is the result of the game played in $\mathcal{G}$. 
We can thus refer to different structures in the same formula. 
For example, $\forall_{\mathcal{G}_1}\,\pi_1.~\llangle A \rrangle_{\mathcal{G}_2}\,\pi_2.~\ltlg (o_{\pi_1} \leftrightarrow o_{\pi_2})$ states that for each path $\pi_1$ in $\mathcal{G}_1$ the agents in $A$ have a strategy in $\mathcal{G}_2$ that produces the same outputs as on $\pi_1$. 
As for \HyperLTL, extended quantification reduces to the standard semantics \cite[\S 5.4]{DBLP:phd/dnb/Rabe16}.

\subparagraph{Parallel-Composition}
We extend \HyperATLS~with a \emph{syntactic} construct that allows multiple traces to be resolved in a single bigger game, where individual copies of the system progress in parallel.
Consider the following modification to the \HyperATLS~syntax where $k \geq 1$:
$$ \varphi := [\llangle A_1 \rrangle \pi_1.~\cdots~\llangle A_k \rrangle \pi_k.]~\varphi \mid a_\pi \mid \neg \varphi \mid \varphi \land \varphi \mid \varphi~\mathcal{U}~\varphi \mid \ltlnext \varphi$$
When surrounding strategy quantifiers by $[\cdot]$ the resulting traces are the outcome of a game played on a bigger, parallel game of the structure. 
This way, the agents in each copy can base their decisions not only on the current state of their copy but on the combined state of all $k$ copies (which allows for a coordinated behaviour among the copies).
For a player $\agent$, and a CGS $\mathcal{G} = (S, s_{0}, \agents, \mathscr{M}, \delta, \atomic, L)$, a $k$-strategy for $\agent$ is a function $f_\agent : (S^k)^+ \to \mathscr{M}$. 
The strategy can thus base its decision on a finite history of tuples of paths (in each separate copy).
For a system $\mathcal{G}$,  sets of $k$-strategies strategies $F_{A_1}, \cdots, F_{A_k}$ and states $s_1, \cdots, s_k$, we define $\mathit{out}(\mathcal{G}, (s_1, \cdots, s_k), F_{A_1}, \cdots, F_{A_k})$ as all plays $u \in (S^k)^\omega$ such that \textbf{(1)} $u(0) = (s_1, \cdots, s_k)$ and \textbf{(2)} for every $i \in \mathbb{N}$ there exist move vectors $\sigma_1, \cdots, \sigma_k$ such that $u(i+1) = \left(\delta(t_1, \sigma_1), \cdots, \delta(t_k, \sigma_k)\right)$ where $u(i) = (t_1, \cdots, t_k)$ and for every $j \in \{1, \cdots, k\}$, agent $\agent \in A_j$ and strategy $f_\agent \in F_{A_j}$ , $\sigma_j(\agent) = f_\agent(u[0,i])$.
Agents can thus control the individual progress of their system and base their decision on the history of the other quantifiers. 
Note that in case where $k = 1$ this is identical to the construction seen above. 
For simplicity we gave the semantics for a CGS (i.e., a MSCGS without stages), it can be generalized easily.
We can now extend our semantics by
{\small\begin{align*}
	\Pi &\models_\mathcal{G} [\llangle A_1 \rrangle \pi_1.~\cdots~\llangle A_k \rrangle \pi_k.]~\varphi \; \text{iff  } \\
	&\exists F_{A_1}, \cdots, F_{A_k}: \forall (t_1, \cdots, t_k) \in \mathit{out}(\mathcal{G}, (\Pi(\epsilon)(0), \cdots, \Pi(\epsilon)(0)), F_{A_1}, \cdots, F_{A_k}): \Pi[\pi_i \mapsto t_i]_{i=1}^{k} \models_\mathcal{G} \varphi
\end{align*}}%
Note that $[\llangle A \rrangle \pi.]~\varphi$ is equivalent to $\llangle A \rrangle \pi. \varphi$. This does, of course, not hold once we consider multiple strategy quantifiers grouped together by $[\cdot]$.

\subparagraph{Comparison with $\forall\exists$-HyperLTL model checking \cite{DBLP:conf/cav/CoenenFST19}}
To give some more intuition for to the self-composition, we can compare our syntactic constructs for self composition with the model checking algorithm introduced in \cite{DBLP:conf/cav/CoenenFST19}.
The idea of the method from \cite{DBLP:conf/cav/CoenenFST19} is to check a $\forall\exists$-formula by viewing the existential quantifier as a player who has to decide on a next state (in her copy) by reacting to the moves of the universal quantifier.  
If such a strategy exists, the $\forall\exists$-formula  also holds, whereas the absence of a strategy does, in general, not imply that the formula does not hold (as the strategy bases its decision on finite plays whereas the existential path is chosen with the universally quantified path already fixed).
This game based view of the existential player can be natively expressed in \HyperATLS:
While the \HyperATLS-formula $\forall \pi_1. \exists \pi_2. \varphi$ is equivalent to the same \HyperLTL-formula (i.e., the existential trace $\pi_2$ is chosen knowing the entire trace $\pi_1$), model checking of the formula $[\forall \pi_1.\exists \pi_2]. \varphi$ corresponds to the strategy search for the existential player that is only based on finite prefixes of $\pi_1$ (which directly corresponds to \cite{DBLP:conf/cav/CoenenFST19}). 
We can actually show that if any MSCGS $\mathcal{G}$ satisfies $[\forall \pi_1.\llangle A \rrangle \pi_2.] \varphi$ then it also satisfies $\forall \pi_1.~\llangle A \rrangle \pi_2. \varphi$ (see the appendix). 
This gives a more general proof of the soundness of \cite{DBLP:conf/cav/CoenenFST19}.
As our prototype implementation supports $[ \llangle A_1 \rrangle \pi_1.\llangle A_2 \rrangle \pi_2.]$-formulas, our tool subsumes the algorithm in \cite{DBLP:conf/cav/CoenenFST19} (see \refSection{proto}).

\section{Examples of Strategic Hyperproperties}\label{sec:ex}

After having introduced the formal semantics of \HyperATLS~we now demonstrate how the strategic quantification can be useful for expressing hyperproperties.
We organize our example in two categories. 
We begin with examples from information-flow control and highlight the correspondence with existing properties and security paradigms. 
Afterwards, we show how the strategic hyperproperties are naturally well suited to express asynchronous hyperproperties.

\subsection{Strategic Information-Flow Control}\label{sec:examples1}

We focus our examples on game structures that result from a reactive system. 
Let $H$ (resp.~$L$) be the set of atomic propositions forming the high-security (resp.~low-security) inputs of a system (we assume $H \cap L = \emptyset$).
The game structure then comprises $3$-players $\agent_N, \agent_H, \agent_L$, responsible for resolving non-determinism and selecting high-security and low-security inputs.
In particular, the move from $\agent_H$ (resp.~$\agent_L$) determines the values of the propositions in $H$ (resp.~$L$) in the next step. 
We call a system \emph{input-total}, if in each state, $\agent_H$ and $\agent_L$ can choose all possible valuations for the input propositions.

\subparagraph{Strategic Non-Interference}
In the introduction, we already saw that $\mathit{GNI}$ \cite{DBLP:conf/sp/McCullough88} is (in some cases) a too relaxed notion of security as it can base the existence of a witness trace on knowledge of the entire input-sequence.
Note that $\mathit{GNI}$ can be extended to also allow for input from a low-security source that may affect the output.
The \HyperATLS~formula $\mathit{stratNI}$ (given in the introduction) instead postulates a \emph{strategy} for $\agent_N$ that incrementally constructs a path that ``disproves`` information leakage.
We can show that $\mathit{stratNI}$ implies $\mathit{GNI}$. Loosely speaking, whenever there is a strategy for the non-determinism based on the finite history of inputs, there also exists a path when given the full history of inputs (as in $\mathit{GNI}$).

\begin{lemma}\label{lem:stratNIandGNI}
	For any system $\mathcal{G}$ that is input-total, we have that if $\mathcal{G} \models \mathit{stratNI}$  then $\mathcal{G} \models \mathit{GNI}$.
\end{lemma}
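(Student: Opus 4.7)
The plan is to take the strategy guaranteed by $\mathit{stratNI}$ and realize one particular outcome of it as the existential witness required by $\mathit{GNI}$. Fix a system $\mathcal{G}$ with $\mathcal{G} \models \mathit{stratNI}$ and pick arbitrary paths $\pi_1, \pi_2$ to serve as the universally quantified paths in $\mathit{GNI}$. I need to exhibit a path $\pi_3$ whose high-inputs agree with $\pi_1$ and whose outputs agree with $\pi_2$.

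First I would apply $\mathit{stratNI}$ with its outer universal path instantiated to $\pi_2$ (the trace whose outputs we want to match). This yields a strategy $f_{\agent_N}$ for the non-determinism such that every outcome $t \in \mathit{out}(\mathcal{G}, s_0, \{f_{\agent_N}\})$ satisfies $\ltlg \bigwedge_{a \in O} (a_t \leftrightarrow a_{\pi_2})$. To single out the particular outcome that additionally mimics the high-inputs of $\pi_1$, I would define auxiliary strategies $f_{\agent_H}$ and $f_{\agent_L}$ that at step $i$ choose moves whose effect on the successor state is to produce the labelling of $\pi_1(i+1)$ restricted to $H$ (for $\agent_H$; for $\agent_L$ any fixed valuation suffices). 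Input-totality of $\mathcal{G}$ guarantees that, from every reachable state, such moves exist, so both strategies are well-defined. Let $\pi_3$ be the unique outcome of the game under $\{f_{\agent_N}, f_{\agent_H}, f_{\agent_L}\}$.

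It then remains to verify the two $\mathit{GNI}$ conjuncts for $\pi_3$. Since $\pi_3 \in \mathit{out}(\mathcal{G}, s_0, \{f_{\agent_N}\})$, the $\mathit{stratNI}$ guarantee yields $\ltlg \bigwedge_{a \in O} (a_{\pi_3} \leftrightarrow a_{\pi_2})$ directly, while the construction of $f_{\agent_H}$ yields $\ltlg \bigwedge_{a \in H} (a_{\pi_1} \leftrightarrow a_{\pi_3})$ by a straightforward induction on the step index. The only delicate point, and the step for which the input-totality hypothesis is tailored, is that $\agent_H$ must be able to mimic $\pi_1$'s inputs irrespective of what $\agent_N$ plays; without input-totality, $f_{\agent_N}$ could steer the play into a state from which the desired high-input labelling of $\pi_1$ is unreachable, and the construction would fail. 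Input-totality rules out precisely this obstruction, which is why it appears as a hypothesis of the lemma.
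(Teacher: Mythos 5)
Your proof is correct and rests on the same key idea as the paper's: use input-totality to let the input player realize a prescribed high-input sequence against the winning strategy for $\agent_N$, so that the resulting single outcome serves as the $\mathit{GNI}$ witness. The only difference is presentational --- the paper argues by contraposition (from a failing $\mathit{GNI}$ pair it refutes every candidate strategy), whereas you construct the witness $\pi_3$ directly, and in somewhat more detail than the paper does.
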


\subparagraph{Simulation-based Non-Interference}
Other attempts to non-interference are based on the existence of a bisimulation (or simulation) \cite{DBLP:conf/csfw/SabelfeldS00, DBLP:conf/ershov/Sabelfeld03,DBLP:conf/esorics/MantelS10}. 
While trace-based notions of non-interference (such as $\mathit{GNI}$) only require the existence of a path that witnesses the absence of a leak, simulation based properties require a lock-step relation in which this holds.
Given a system $\mathcal{G}$ with initial states $s_0$. For states $s, s'$ and evaluations $i_L \in 2^L$ and $i_H \in 2^H$, we write $s \Rightarrow^{i_L}_{i_H} s'$ if $L(s') \cap L = i_L$ and  $L(s') \cap H = i_H$ and $s'$ is a possible successor of $s$.
A \emph{security simulation} is a relation $R$ on the states of $S$ such that whenever $s R t$, we have \textbf{(1)} $s$ and $t$ agree on the output propositions, and \textbf{(2)} for any $i_L \in 2^L$ and $i_H, i'_H \in 2^H$ if $s \Rightarrow^{i_L}_{i_H} s'$ then there exists a $t'$ with $t \Rightarrow^{i_L}_{i'_H} t'$ and $s' R t'$.
Note that this is not equivalent to the fact that $\Rightarrow$ is a simulation in the standard sense \cite{DBLP:books/sp/Milner80}.
While a standard simulation relation is always reflexive, reflexivity of security simulations guarantees the security of the system \cite{DBLP:conf/ershov/Sabelfeld03,DBLP:conf/csfw/SabelfeldS00}. 
A system is thus called \emph{simulation secure} if there exists a security simulation $R$ with $s_0 R s_0$.
It is easy to see that every input-total system that is \emph{simulation secure} already satisfies $\mathit{GNI}$.
The converse does, in general, not hold.
We can show that \HyperATLS~can express \emph{simulation security} by using the parallel-composition of quantifiers. 
$$
[\forall_\mathcal{G} \, \pi_1.~\llangle \{\agent_N\} \rrangle_{\mathcal{G}_\mathit{shift}} \, \pi_2.]~\ltlg (\bigwedge_{a \in L } a_{\pi_1} \leftrightarrow \bigcirc a_{\pi_2}) \to \ltlg (\bigwedge_{a \in O} a_{\pi_1} \leftrightarrow \bigcirc a_{\pi_2})
$$
Here we consider \HyperATLS~with extended quantifier, where we annotate each quantifier with the game structure it is resolved on.
$\mathcal{G}_\mathit{shift}$ is the structure $\mathcal{G}$ where we added a dummy initial state, that shifts the behaviour of the system by one position, which is again corrected via the next operator in the \LTL~formula.
This allows the strategy for $\agent_N$ in the second copy to base its decision on already fixed step in the first copy, i.e., it corresponds to a strategy with a fixed lookahead of $1$ step. 
We can show:

\begin{lemma}\label{lem:simSec}
	A input-total system $\mathcal{G}$ is \emph{simulation secure} if and only if it satisfies $\mathit{simNI}$.
\end{lemma}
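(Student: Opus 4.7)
The plan is to prove both directions by translating between a security simulation $R$ and a winning strategy for $\agent_N$ in the second copy of the parallel game that underlies $\mathit{simNI}$. The critical observation is that the dummy initial state in $\mathcal{G}_\mathit{shift}$ places $\pi_2$ one step behind $\pi_1$, so at the moment $\agent_N$ in copy $2$ commits to the transition from $t_{n+1}$ to $t_{n+2}$, it already sees the state $s_{n+1}$ of $\pi_1$ together with its low- and high-input labels. This one-step lookahead is exactly what condition~(2) of a security simulation quantifies over, and it is the hinge of both implications.

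For $(\Rightarrow)$, assume a security simulation $R$ with $s_0 R s_0$. I would define $\agent_N$'s strategy in copy $2$ by greedy maintenance of the invariant $s_n R t_{n+1}$. At each step, if $\agent_L$'s current move in copy $2$ matches the low input of $s_{n+1}$ (the only case in which the premise of the implication can still hold), condition~(2) of $R$ applied with source pair $s_n R t_{n+1}$, successor $s_{n+1}$, low input $i_L$ and the $i'_H$ that $\agent_H$ plays in copy $2$ yields a $t_{n+2}$ with $t_{n+1} \Rightarrow^{i_L}_{i'_H} t_{n+2}$ and $s_{n+1} R t_{n+2}$; the strategy picks a move producing it. Condition~(1) of $R$ then guarantees matching outputs at every step, establishing the consequent. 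If the premise has already been broken, $\agent_N$ may play arbitrarily.

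For $(\Leftarrow)$, assume $\mathcal{G} \models \mathit{simNI}$ witnessed by a strategy $F$ for $\agent_N$ in copy $2$. Define $R \subseteq S \times S$ by $(s,t) \in R$ iff some finite prefix of a parallel play consistent with $F$, along which the low-input premise has held at every position so far, reaches $(s_n, t_{n+1}) = (s, t)$ for some $n \geq 0$. The pair $s_0 R s_0$ is witnessed by the trivial two-step prefix ending in $(s_1, s_0)$. For condition~(1), extend the witnessing prefix to an infinite play by repeatedly using input-totality to let $\agent_L$ in copy $2$ match the copy-$1$ low inputs; since $F$ wins, outputs must coincide at every position, giving $L(s) \cap O = L(t) \cap O$. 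For condition~(2), given $s \Rightarrow^{i_L}_{i_H} s'$ and any $i'_H$, extend the witnessing prefix by one step in which $\agent_L$ in both copies plays $i_L$ (preserving the premise), $\agent_H$ in copy $1$ plays $i_H$ with $\agent_N$ choosing the move producing $s'$, and $\agent_H$ in copy $2$ plays $i'_H$ while $\agent_N$ follows $F$; the resulting $t_{n+2}$ witnesses $(s', t_{n+2}) \in R$ and $t \Rightarrow^{i_L}_{i'_H} t_{n+2}$.

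The main subtlety is the careful alignment of time indices induced by the shift and the justification that $\agent_N$ in copy $2$ may respond to $\agent_L$'s and $\agent_H$'s moves in its own copy: this is handled by viewing copy $2$ as an MSCGS in which $\agent_N$ is placed in a stage after $\agent_L$ and $\agent_H$, which together with the one-step shift of $\mathcal{G}_\mathit{shift}$ gives $\agent_N$ access to exactly the information that condition~(2) of a security simulation universally quantifies over. Input-totality is essential in the backward direction to realise arbitrary $(i_L, i_H)$ and $(i_L, i'_H)$ pairs as legal adversary moves when extending play prefixes.
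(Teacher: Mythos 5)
Your proposal is correct and follows essentially the same route as the paper's proof: the forward direction builds a strategy for $\agent_N$ that greedily maintains $R$-relatedness (exploiting the one-step shift to see the already-fixed move in copy $1$), and the backward direction defines $R$ from the pairs reachable under the winning strategy along prefixes with matching low inputs. Your version is somewhat more explicit about the index alignment, the role of input-totality in realising adversary moves, and the need to extend prefixes to infinite plays when invoking the winning condition, but the underlying argument is the same.
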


\subparagraph{Non-Deducibility of Strategies}
Lastly, we consider the notion of \emph{non-deducibility of strategies} ($\mathit{NDS}$) \cite{DBLP:conf/sp/WittboldJ90}.
$\mathit{NDS}$ requires not only that each output is compatible with each sequence of inputs but also with each input-\emph{strategy}. 
This becomes important as a high-security input player who can observe the internal state of a system might be able to leak information deliberately. 
As a motivating example consider the following (first introduced in \cite{DBLP:conf/sp/WittboldJ90}): 
Suppose we have a system that reads a binary input \myvar{h} from a high-security input and outputs \myvar{o}. 
The system maintains a bit $b$ of information in its state, initially chosen non-deterministically.
In each step, the system reads the input \myvar{h}, outputs $\text{\myvar{h}}  \oplus b$ (where $\oplus$ is the xor-operation), non-deterministically chooses a new value for $b$ and then repeats. 
As $\oplus$ essentially encodes a onetime pad it is not hard to see, that this system is secure from a purely trace-based point of view (as expressed in e.g.~$\mathit{GNI}$): 
Any possible combination of input and output can achieved when resolving the non-deterministic choice of $b$ appropriately. 
If the high-input player is, however, able to observe the system (in the context of \cite{DBLP:conf/sp/WittboldJ90} the system shares the internal bit on a private channel), she can communicate arbitrary sequence of bits to the low security environment.
Whenever she wants to send bit $c$, she inputs $c \oplus b$ where $b$ is the internal bit she has access to (note that $(c \oplus b) \oplus b = c$).
For such system system we therefore do no want to specify that every possible output sequence is compatible with all possible inputs, but instead compatible with all possible input-\emph{strategies} (based on the state of the system). 
Phrased differently, there should \emph{not} be a output sequence such that a strategy can reliably \emph{avoid} this output.
We can express $\mathit{NDS}$ in \HyperATLS~as follows:
$$
\neg \left(\exists \pi_1.~\llangle \agent_H \rrangle \pi_2.~\ltlg (\bigwedge_{a \in L} a_{\pi_1} \leftrightarrow a_{\pi_2}) \to \lozenge (\bigvee_{a \in O} a_{\pi_1} \not\leftrightarrow a_{\pi_2})\right)
$$
This formula states that there does not exist a trace $\pi_1$ such that $\agent_H$ has a strategy to avoid the output of $\pi_1$ (provided with the same low-security inputs). 
$\mathit{NDS}$ is a stronger requirement than $\mathit{GNI}$, as shown by the following Lemma:
\begin{lemma}\label{lem:NDS}
	For any system $\mathcal{G}$, if $\mathcal{G} \models \mathit{NDS}$ then $\mathcal{G} \models \mathit{GNI}$.
\end{lemma}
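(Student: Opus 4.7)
The plan is to unfold the definition of $\mathit{NDS}$ so that the buried negation becomes an existential with a useful conjunction, and then to instantiate it with a carefully chosen high-input strategy derived from the GNI witness demand. The first step is to simplify the negated strategy quantifier. Using $\neg(\ltlg \varphi \to \lozenge \psi) \equiv \ltlg \varphi \land \ltlg \neg \psi$ together with the dual of $\llangle \agent_H \rrangle$, one gets the equivalent reformulation
\[
\mathcal{G} \models \mathit{NDS} \iff \forall \pi_1,\ \forall F_H,\ \exists \pi_2 \in \mathit{out}(\mathcal{G}, s_0, \{F_H\}) : \ltlg\!\bigwedge_{a\in L}\! a_{\pi_1} \leftrightarrow a_{\pi_2} \; \land\; \ltlg\!\bigwedge_{a\in O}\! a_{\pi_1} \leftrightarrow a_{\pi_2}.
\]
So $\mathit{NDS}$ actually guarantees, for every reference trace and every adversarial strategy of $\agent_H$, an outcome that \emph{conjunctively} matches both the $L$-inputs and the $O$-outputs of the reference.

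Next, to establish $\mathit{GNI}$, fix arbitrary paths $\pi_1,\pi_2$ of $\mathcal{G}$ and produce a witness $\pi_3$ agreeing with $\pi_1$ on $H$ and with $\pi_2$ on $O$. Define a strategy $F_H^{\pi_1}$ for $\agent_H$ that, at any history of length $i$, plays a move whose contribution forces the next state to carry the $H$-valuation $L(\pi_1(i+1)) \cap H$. The existence of such a move at every reachable state, independent of the concurrent choices of $\agent_N$ and $\agent_L$, is exactly what input-totality of the underlying reactive game structure (assumed throughout \refSection{examples1}) provides. By construction, every outcome of $F_H^{\pi_1}$ agrees with $\pi_1$ on $H$ at every position.

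Now apply the unfolded $\mathit{NDS}$ with reference trace $\pi_2$ (playing the role of NDS's $\pi_1$) and strategy $F_H^{\pi_1}$. This yields an outcome $\pi_3 \in \mathit{out}(\mathcal{G}, s_0, \{F_H^{\pi_1}\})$ with $\ltlg \bigwedge_{a \in L} a_{\pi_2} \leftrightarrow a_{\pi_3}$ and, more importantly, $\ltlg \bigwedge_{a \in O} a_{\pi_2} \leftrightarrow a_{\pi_3}$. Combined with the $H$-property inherited from $F_H^{\pi_1}$, namely $\ltlg \bigwedge_{a \in H} a_{\pi_1} \leftrightarrow a_{\pi_3}$, this is precisely the conjunct required by $\mathit{GNI}$ for the pair $(\pi_1,\pi_2)$, so $\mathcal{G} \models \mathit{GNI}$.

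The only real obstacle is the first step: one must be careful with the double negation, the implication, and the $\llangle \agent_H \rrangle$-dual simultaneously, since getting the direction wrong turns the $\mathit{NDS}$ guarantee into a vacuous implication. Once that reformulation is in place, the rest is a routine chase through the semantics, with input-totality being used solely to ensure that the ``mimicking'' strategy $F_H^{\pi_1}$ is well-defined at every reachable state.
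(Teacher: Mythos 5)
Your proof is correct and rests on the same key idea as the paper's: a fixed trace of high-security inputs induces an oblivious ``mimicking'' strategy for $\agent_H$, so strategy quantification subsumes trace quantification. The paper argues the contrapositive ($\neg\mathit{GNI}$ yields a strategy witnessing $\neg\mathit{NDS}$) while you unfold the nested negation in $\mathit{NDS}$ and instantiate the resulting universally quantified strategy with $F_H^{\pi_1}$ directly; both versions rely, as you note explicitly (and the paper only implicitly, despite the lemma's ``for any system'' phrasing), on input-totality of the reactive game structure to make that strategy well-defined at every history.
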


\subsection{Asynchronous Hyperproperties}

Reasoning about the strategic behaviour of agents is particularly useful when reasoning about asynchronous hyperproperties, as each asynchronous execution can be considered the result of the decision of an asynchronous scheduler.
We call a player an asynchronous scheduler if it can decide whether the system progresses (as decided by the other agents) or stutters.
Note that this differs from the asynchronous turn-based games defined in \cite{DBLP:journals/jacm/AlurHK02}. In our setting, the scheduler does not control which of the player controls the next move but rather decides if the system as whole progresses or stutters. 
In cases where the system does not already include such an  asynchronous scheduler (if we e.g.~use a Kripke structure interpreted as a $1$-player CGS), we can include a scheduler via a simple system transformation: 

\begin{definition}\label{def:stut}
	Given a MSCGS $\mathcal{G} = (Q, q_0, \agents, \mathscr{M}, \delta, d, \atomic, L)$ and a fresh agent \emph{sched} not already included in the set of agents of $\agents$.
	We define the stutter version of $\mathcal{G}$, denoted $\mathcal{G}_\mathit{stut}$, by $\mathcal{G}_\mathit{stut} = (Q \times \{0, 1\}, (q_0, 0), \agents \uplus \{\mathit{sched}\}, \mathscr{M} \times \{0, 1\}, \delta', d', \atomic \uplus \{\mathit{stut}\}, L')$ where 
	$$
	\delta'((s, b), \sigma) = \begin{cases}
		\begin{aligned}
			&(\delta(s,\mathit{proj}_1 \circ \sigma_{\mid \agents}), 0) \quad &&\text{if } (\mathit{proj}_2 \circ \sigma)(\mathit{sched}) = 0\\
			&(s, 1) \quad &&\text{if } (\mathit{proj}_2 \circ \sigma)(\mathit{sched}) = 1
		\end{aligned}
	\end{cases}
	$$
	$L'((s, 0)) = L(s)$ and $L'(s, 1) = L(s) \cup \{\mathit{stut} \}$. Finally $d'(\agent) = d(\agent)$ for $\agent \in \agents$ and $d'(\mathit{sched}) = m + 1$ where $m$ is the maximal element in the codomain of $d$. 
\end{definition}

Here $\mathit{proj}_i$ is the projection of the $i$th element in a tuple. 
$\mathcal{G}_\mathit{stut}$ thus progresses as $\mathcal{G}$ with the exception of the additional scheduling player. In each step, the $\{0,1\}$-decision of $\mathit{sched}$, which can be based on the decision by the other agents (as $\mathit{sched}$ is in the last stage of the game), decides if the system progresses or remains in its current state. 
The extended state-space $Q \times \{0, 1\}$ is used to keep track of the stuttering which becomes visible via the new atomic proposition $\mathit{stut}$. 
Our construction will be particularly useful when comparing our logic to \AHLTL~\cite{hyperaltl}.

\subparagraph{Observational Determinism}
As an example we consider observational-determinism which states that the output along all traces is identical. In \HyperLTL: $\mathit{OD} := \forall \pi_1. \forall \pi_2.~\ltlg (\bigwedge_{a \in O} a_{\pi_1} \leftrightarrow a_{\pi_2})$.
The example in \refFig{exProg} does not satisfy this property, as the output is changed at different timepoints. 
If we consider any system as a multi-agent system including the scheduler $\mathit{sched}$,  we can use \HyperATLS~to natrually express an asynchronous version of OD via:
$$\mathit{OD}_{\mathit{asynch}} := [\llangle \{\mathit{sched}\} \rrangle \pi_1. \llangle \{\mathit{sched}\} \rrangle \pi_2. ] \, \mathit{fair}_{\pi_1} \land \mathit{fair}_{\pi_2} \land \ltlg (\bigwedge_{a \in O} a_{\pi_1} \leftrightarrow a_{\pi_2})$$
where $\mathit{fair}_{\pi_i} := \ltlg \lozenge \neg \mathit{stut}_{\pi_i}$, asserts that the system may not be stuttered forever.
Note that we encapsulated the quantifiers by $[\cdot]$ thus resolving the games in parallel. 
The schedulers for both copies of the system can thus observe the current state of the other copy. 
The example from \refFig{exProg}, after a transformation via \refDef{stut}, satisfies this formula, as the output can be aligned by the scheduling player.

\subparagraph{One-Sided Stuttering}
By resolving the stuttered traces incrementally (i.e., omitting the $[\cdot]$-brackets) we can also express \emph{one-sided stuttering}, i.e., allow only the second copy to be stuttered. 
As an example assume $P^o$ is a program written in the high-level programming language and $P^a$ the complied program into binary code. 
Let $S^\mathit{o}$ and $S^\mathit{a}$ be the state systems of both programs.
Using \HyperATLS~we can now verify that the compiler did not leak information, i.e., the assembly code does provide the same outputs as the original code.
As the compiler breaks each program statement into multiple assembly instructions, we can not require the steps to match in a synchronous manner. Instead the system $S^o$ should be allowed to stutter for the assembly program to catch up.
We can express this as follows:
{$$\forall_{S^\mathit{a}} \, \pi_1. \; \llangle \{\mathit{sched}\} \rrangle_{S^\mathit{o}_\mathit{stut}} \,\pi_2. \mathit{fair}_{\pi_2} \land \ltlg (\bigwedge_{a \in O} a_{\pi_1} \leftrightarrow a_{\pi_2})$$}%
I.e., for every execution of the assembly code we can stutter the program such that the observations align.
Here, $S^o_\mathit{stut}$ denotes the modified version obtained by introducing an explicit scheduler (\refDef{stut}). 
Note that we again use the extend path quantifier by annotating a quantifier with the system, thereby effectively comparing both systems with respect to a hyperproperty.

\section{HyperATL* Model Checking}\label{sec:mc}

In this section we present an automata-based algorithm for \HyperATLS~model checking. 
The crucial insight in our algorithm is how to deal with the strategic quantification.
Lets briefly recall \ATLS~model checking \cite{DBLP:journals/jacm/AlurHK02}:
In \ATLS, checking if $\llangle A \rrangle \varphi$ holds in some state $s$, can be reduced to the non-emptiness check of the intersection of two tree automata. 
One accepting all possible trees that can be achieved via strategy for players in $A$, and one accepting all trees whose paths satisfy the path formula $\varphi$ \cite{DBLP:journals/jacm/AlurHK02}.
In our hyperlogic this is not possible.
When checking $\llangle A \rrangle \pi. \varphi$ we can not construct an automaton accepting all trees that satisfy $\varphi$, as the satisfaction of $\varphi$ depends on the the paths assigned to the outer path-quantifiers (that are not yet fixed).
Instead, we construct an automaton that accepts all \emph{path assignments} for the outer quantifiers such that there exists a winning strategy for the agents in $A$.
We show that alternating automata are well suited to keep track of all path assignments for which a strategy exists as they allow us to encode the strategic behaviour of $\mathcal{G}$ within the transition function of the automaton.

\begin{figure}
	\begin{tabular}{|c|c|}
		\hline
		$\stackanchor{$a_{\pi_i}$}{\color{red}$\neg a_{\pi_i}$}$ & $\begin{aligned}
			&\mathcal{A}_\varphi = (\{q_\mathit{init}\}, q_\mathit{init}, \Sigma_\varphi, \rho, \mathbf{0})\\
			&\rho(q_\mathit{init}, [s_1, \cdots, s_n]) = \begin{cases}
				\begin{aligned}
					&&\stackanchor{$\top$}{\color{red}$\bot$} \quad &&&\text{if } a \in L(s_i)\\
					&&\stackanchor{$\bot$}{\color{red}$\top$} \quad &&&\text{if } a \not\in L(s_i)
				\end{aligned}
			\end{cases}
		\end{aligned}$ \\\hline
		$\varphi_1 \stackanchor{$\lor$}{\color{red}$\land$} \varphi_2$ & $\begin{aligned}
			&\mathcal{A}_\varphi = (Q_1 \cup Q_2 \cup \{q_\mathit{init}\}, q_\mathit{init}, \Sigma_\varphi, \rho, c_1 \uplus c_2 \uplus [q_\mathit{init} \mapsto 0])\\
			&\rho(q, [s_1, \cdots, s_n]) = \begin{cases}
				\begin{aligned}
					&&\rho_1(q_{0, 1}, [s_1, \cdots, s_n])\, \stackanchor{$\lor$}{\color{red}$\land$}\, \rho_2(q_{0, 2}, [s_1, \cdots, s_n]) \quad &&&\text{if } q = q_\mathit{init}\\
					&&\rho_i(q, [s_1, \cdots, s_n]) \quad &&&\text{if } q \in Q_i
				\end{aligned}
			\end{cases}
		\end{aligned}$  \\\hline
		$\ltlnext \varphi_1$ & $\begin{aligned}
			&\mathcal{A}_\varphi = (Q_1 \cup\{q_\mathit{init}\}, q_\mathit{init}, \Sigma_\varphi, \rho, c_1 \uplus [q_\mathit{init} \mapsto 0])\\
			&\rho(q, [s_1, \cdots, s_n]) = \begin{cases}
				\begin{aligned}
					&&q_{0, 1} \quad &&&\text{if } q = q_\mathit{init}\\
					&&\rho_1(q, [s_1, \cdots, s_n]) \quad &&&\text{if } q  \in Q_1
				\end{aligned}
			\end{cases}
		\end{aligned}$  \\\hline
		$\varphi_1 \stackanchor{$\mathcal{U}$}{\color{red}$\mathcal{R}$} \varphi_2$ & $\begin{aligned}
			&\mathcal{A}_\varphi = (Q_1 \cup Q_2 \cup \{q_\mathit{init}\}, q_\mathit{init}, \Sigma_\varphi, \rho, c_1 \uplus c_2 \uplus [q_\mathit{init} \mapsto \stackanchor{$1$}{\color{red}$0$}])\\
			&\rho(q, [s_1, \cdots, s_n]) = \begin{cases}
				\begin{aligned}
					&&\rho_2(q_{0, 2}, [s_1, \cdots, s_n]) \stackanchor{$\lor$}{\color{red}$\land$} \big(\rho_1(q_{0, 1}, [s_1, \cdots, s_n]) \stackanchor{$\land$}{\color{red}$\lor$} q_\mathit{init}\big) \quad &&&\text{if } q = q_\mathit{init}\\
					&&\rho_i(q, [s_1, \cdots, s_n]) \quad &&&\text{if } q \in Q_i
				\end{aligned}
			\end{cases}
		\end{aligned}$\\\hline
	\end{tabular}
	\caption{APA construction for \LTL~temporal operators. $\mathcal{A}_{\varphi_i} = (Q_i, q_{0, i}, \Sigma_{\varphi_i}, \rho_i, c_i)$ is the inductively constructed automaton for $\varphi_i$.}\label{fig:ltltran}
\end{figure}

\subsection{Automata-Theoretic Model Checking}

We assume that the formula $\varphi$ to be checked is given in \emph{negation normal form}, i.e., negations only occur directly in front of atomic propositions or in front of a strategy quantifier. By including conjunction ($\land$) and release ($\mathcal{R}$) every formula can be translated into a negation normal form of linear size.
We, furthermore, assume that if $\neg \llangle A \rrangle \pi. \varphi$ occurs in the formula we have $A \neq \emptyset$.
Note that in this case where $A = \emptyset$ we can push the negation in as $\neg \forall \pi. \varphi \equiv \exists \pi. \neg \varphi$.
For infinite words  $t_1, \cdots, t_n \in \Sigma^\omega$ we define $\mathit{zip}(t_1, \cdots, t_n) \in (\Sigma^n)^\omega$ as the word obtained by combining the traces pointwise, i.e.,  $\mathit{zip}(t_1, \cdots, t_n)(i) := (t_1(i), \cdots, t_n(i))$. 
Our algorithm now progresses in a bottom-up manner.
Assume that some subformula $\varphi$ occurs under quantifiers binding path variables $\pi_1, \cdots, \pi_n$.
We say that an automaton $\mathcal{A}$ over $S^n$ is $\mathcal{G}$-equivalent to $\varphi$, if for any paths $t_1, \cdots, t_n$ it holds that $[\pi_i \mapsto t_i]_{i=1}^n \models_\mathcal{G} \varphi$ if any only if $\mathit{zip}(t_1, \cdots, t_n) \in \mathcal{L}(\mathcal{A})$.
$\mathcal{G}$-equivalence thus means that an automaton accepts the zipping of traces exactly if the trace assignment constructed from them satisfies the formula.
By induction on the structure of the formula we construct an automaton that is $\mathcal{G}$-equivalent to each sub formula.

For the standard boolean combinators and \LTL~temporal operators our construction follows the typical translation from \LTL~to alternating automata \cite{DBLP:conf/lics/MullerSS88,DBLP:books/sp/cstoday95/Vardi95} given in \refFig{ltltran}. 
The interesting case is now the elimination of a strategy quantifier of the from $\varphi = \llangle A \rrangle \pi. \psi$.
Given an inductively constructed APA $\mathcal{A}_\psi$ over $\Sigma_\psi = S^{n+1}$. We aim for an automaton $\mathcal{A}_\varphi$ over $\Sigma_\varphi = S^{n}$.
The automata should accept all traces $t$ over $S^{n}$ such that there exist a strategy for agents in $A$ such that all traces compatible with this strategy $t'$ when added to $t$ (the trace $t \times t' \in (S^{n+1})^\omega$) is accepted by $\mathcal{A}_\psi$.
Let $\mathcal{G} = (S, s_0, \agents, \mathscr{M}, \delta, d, \atomic, L)$ be the given MSCGS.
We distinguish between the cases where $A = \agents$ (i.e., existential quantification) and $A \neq \agents$.

\subparagraph{Existential Quantification}
We first consider the case where $A = \agents$, i.e., $\varphi = \exists \pi. \psi$. Model checking can be done similar to \cite{DBLP:conf/cav/FinkbeinerRS15}. Let $\mathcal{A}_\psi = (Q, q_0, \Sigma_\psi, \lambda : Q \times \Sigma_\psi \to 2^Q, c)$ be the inductively constructed automaton, translated into a non-deterministic automaton of exponential size via \refTheo{alt1}.
We then construct $\mathcal{A}_\varphi := (S \times Q \cup \{q_{\mathit{init}}\}, q_{\mathit{init}}, \Sigma_\varphi, \rho, c')$ where $c'(s, q) = c(q)$ and $\rho$ is defined via
\begin{align*}
	\rho(q_{\mathit{init}}, [s_1, \cdots, s_n]) &= \{(s', q') \mid q' \in \lambda(q, [s_1, \cdots, s_n, s_n^\circ]) \land \exists \sigma : \agents \to \mathscr{M}. \delta(s_n^\circ, \sigma) = s'\} \\
	\rho((s, q), [s_1, \cdots, s_n]) &= \{(s', q') \mid q' \in \lambda(q, [s_1, \cdots, s_n, s]) \land \exists \sigma : \agents \to \mathscr{M}. \delta(s, \sigma) = s'\}
\end{align*}
where we define $s_n^\circ = s_n$ if $n \geq 1$ and $s_n^\circ = s_0$ otherwise.
Note that $\mathcal{A}_\varphi$ is again a non-deterministic automaton.
Every accepting run of $\mathcal{A}_\varphi$ on $\mathit{zip}(t_1, \cdots, t_n)$ now corresponds to a path $t$ in $\mathcal{G}$ such that $\mathcal{A}_\psi$ accepts $\mathit{zip}(t_1, \cdots, t_n, t)$.

\subparagraph{(Complex) Strategic Quantification }
We now consider the case where $A \neq \agents$. Our automaton must encode the strategic behaviour of the agents.
We achieve this, by encoding the strategic play in the game structure within the transition function of an automaton.
Let $\mathcal{A}^\mathit{det}_\psi = (Q, q_0, \Sigma_\psi, \lambda : Q \times \Sigma_\psi \to Q, c)$ be a \emph{deterministic} automaton obtained from the inductively constructed $\mathcal{A}_\psi$ via \refTheo{alt1}.
Note that $\mathcal{A}^\mathit{det}_\psi$ is, in the worst case, of double exponential size (in the size of $\mathcal{A}_\psi$).
To encode the strategic behaviour in $\mathcal{G}$ we use the alternation available in an automaton by disjunctively choosing moves for controlled players in $A$, followed by a conjunctive treatment of all adversarial player. 
The stages of a game, naturally correspond to the order of the move selection.
Define the set $A_i := A \cap d^{-1}(i)$ and $\overline{A}_i := (\agents \setminus A) \cap d^{-1}(i)$ and let $m$ be the maximal element in the codomain of $d$.
The choice of each agent in $A$ followed by those not in $A$ can be encoded into a boolean formula. 
We define $\mathcal{A}_\varphi := (S \times Q \cup \{q_\mathit{init}\}, q_\mathit{init}, \Sigma_\varphi, \rho, c')$ where $\rho$ is defined by
{\small\begin{align*}
	\rho(q_\mathit{init}, [s_1, \cdots, s_n]) &= \bigvee\limits_{\sigma_1 : A_1 \to \mathscr{M}} \bigwedge\limits_{\sigma'_1 : \overline{A}_1 \to \mathscr{M}} \cdots \bigvee\limits_{\sigma_m : A_m \to \mathscr{M}} \bigwedge\limits_{\sigma'_m : \overline{A}_m \to \mathscr{M}}   
	\big(\delta(s^\circ_n, \sum_{i=1}^{m} \sigma_i  + \sigma'_1), \lambda(q_0, [s_1, \cdots, s_n, s^\circ_n])   \big)\\
	\rho\left((s, q), [s_1, \cdots, s_n]\right) &= \bigvee\limits_{\sigma_1 : A_1 \to \mathscr{M}} \bigwedge\limits_{\sigma'_1 : \overline{A}_1 \to \mathscr{M}} \cdots \bigvee\limits_{\sigma_m : A_m \to \mathscr{M}} \bigwedge\limits_{\sigma'_m : \overline{A}_m \to \mathscr{M}}   
	\big(\delta(s, \sum_{i=1}^{m} \sigma_i  + \sigma'_1), \lambda(q, [s_1, \cdots, s_n, s])   \big)
\end{align*}}%
and $c'(s, q) = c(q)$ (We can define $c'(q_\mathit{init})$ arbitrarily). 
In case $n = 0$, we again define $s^\circ_n$ as the initial state $s_0$, otherwise $s^\circ_n = s_n$.
Note that in the case where the MSCGS is a CGS, i.e., $d = \mathbf{0}$ the transition function has the from $\lor \land$, where the choices in $A$ are considered disjunctively and the choices by all other agents conjunctively.  
Our construction can be extended to handle the self composition $[\llangle A_1 \rrangle \pi_1.~\cdots~\llangle A_k \rrangle \pi_k]$ (see the appendix for details). 

\subparagraph{Negated Quantification}
We extend our construction to handle negation outside of quantifiers, i.e., a formula $\varphi = \neg \llangle A \rrangle \pi. \psi$ via $\mathcal{A}_\varphi := \overline{\mathcal{A}_{\llangle A \rrangle \pi. \psi}}$ by using \refTheo{altneg}.

\begin{proposition}\label{prop:MCequiv}
	$\mathcal{A}_\varphi$ is $\mathcal{G}$-equivalent to $\varphi$. 
\end{proposition}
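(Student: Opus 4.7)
The plan is to prove the proposition by structural induction on the formula $\varphi$, following the inductive definition of the automaton. For the base cases ($a_{\pi_i}$, $\neg a_{\pi_i}$) and the Boolean and temporal operator cases, correctness follows from the standard translation from \LTL~to alternating parity automata (see \cite{DBLP:conf/lics/MullerSS88}); one only has to observe that the labels of the input word carry exactly the information needed to evaluate $a_{\pi_i}$ via the $i$-th component of $[s_1,\ldots,s_n]$, so no interaction with the game structure is needed at these steps.

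The interesting cases are the quantifier cases. For the existential case $\varphi = \exists \pi.\psi$, I would argue that any accepting run of $\mathcal{A}_\varphi$ on $\mathit{zip}(t_1,\ldots,t_n)$ is (since the constructed automaton is non\-deterministic) a single infinite branch whose $S$-components project to an infinite path $t$ in $\mathcal{G}$—this is exactly what the existential quantifier over move vectors in $\rho$ enforces—while the $Q$-components give an accepting run of the inductively built $\mathcal{A}_\psi$ on $\mathit{zip}(t_1,\ldots,t_n,t)$, so $[\pi_i\mapsto t_i]_{i=1}^n[\pi\mapsto t]\models_\mathcal{G}\psi$ by the induction hypothesis. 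The converse is symmetric: a witnessing $t$ and an accepting run of $\mathcal{A}_\psi$ stitch together into a run of $\mathcal{A}_\varphi$.

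For the complex case $\varphi = \llangle A\rrangle\pi.\psi$ with $A\neq\agents$, the heart of the argument is to translate between winning strategies in $\mathcal{G}$ and accepting runs of an alternating automaton whose transition function has exactly the quantifier alternation of the stage-by-stage move selection in the MSCGS. For the ``$\Rightarrow$'' direction, given a family of winning strategies $F_A$, I would build a run tree of $\mathcal{A}_\varphi$ by resolving each disjunction over $\sigma_i : A_i\to\mathscr{M}$ using $F_A$ applied to the history reconstructed from the $S$-components along the branch, and branching over all conjunctive choices of the opposing moves $\sigma'_i:\overline{A}_i\to\mathscr{M}$. Every infinite branch of this tree projects to a play in $\mathit{out}(\mathcal{G},\Pi(\epsilon)(0),F_A)$, so by the induction hypothesis (applied to $\mathcal{A}^\mathit{det}_\psi$, which is $\mathcal{G}$-equivalent to $\psi$ via Theorem~\ref{theo:alt1}) every such branch is accepting. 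For ``$\Leftarrow$'', from an accepting run $(T,r)$ I would extract a strategy by reading off, for each finite history in $\mathcal{G}$ compatible with the outer path assignment, the disjunctive choices $\sigma_i$ taken at the unique node of $T$ corresponding to that history.

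The main obstacle I expect is exactly this last step: ensuring that the extracted strategy is well-defined, i.e., that across the many branches of the run tree the disjunctive moves assigned to the same finite history agree. This is where determinism of $\mathcal{A}^\mathit{det}_\psi$ is essential; without it, two branches that share the same $S$-history could carry different $Q$-states and prescribe different moves, making the extraction inconsistent. Since $\mathcal{A}^\mathit{det}_\psi$ is deterministic, the $Q$-component along any branch is a function of the history of the $S$-components and the outer assignment, so a canonical choice exists at each history. Finally, for the negated quantification case $\varphi=\neg\llangle A\rrangle\pi.\psi$, the construction directly sets $\mathcal{A}_\varphi := \overline{\mathcal{A}_{\llangle A\rrangle\pi.\psi}}$ and the claim follows from the inductive hypothesis together with Theorem~\ref{theo:altneg}. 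The extension to parallel composition $[\llangle A_1\rrangle\pi_1.\cdots\llangle A_k\rrangle\pi_k.]$ is analogous: the automaton tracks a tuple of game states in $S^k$ rather than a single state, and the stage-by-stage quantifier alternation in $\rho$ ranges over partial move vectors in all $k$ copies simultaneously, so the strategy extraction and run-tree construction argument carries over unchanged.
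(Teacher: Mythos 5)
Your proposal is correct and follows essentially the same route as the paper's proof: structural induction, the standard \LTL-to-APA argument for the Boolean and temporal cases, and, for $\llangle A \rrangle \pi.\,\psi$, the two translations between winning strategies in $\mathcal{G}$ and accepting run trees of $\mathcal{A}_\varphi$ (build the tree by resolving each disjunction with the strategy and branching over all adversarial move vectors; conversely read a strategy off the disjunctive choices witnessed in the tree). One small clarification on the obstacle you single out: determinism of $\mathcal{A}^{\mathit{det}}_\psi$ forces the $Q$-components of same-history nodes to agree, but not the disjunctive move choices the run tree happens to witness at them, so it does not by itself make the extraction canonical; the paper instead passes to an auxiliary automaton $\tilde{\mathcal{A}}_\varphi$ that records the chosen move vectors in its states and fixes one node per history (hereditarily along children) --- the extracted strategy need only be \emph{consistent}, not unique, while determinism is what is genuinely indispensable in the strategy-to-run-tree direction, where the $Q$-label of each node must be a function of the history alone.
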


By following our inductive construction we obtain an automaton over the singleton alphabet (empty state sequences) that is non-empty iff the model satisfies the formula.
Emptiness of an alternating parity automaton can then be checked in polynomial size (assuming a fixed number of colours) \cite{DBLP:conf/stoc/KupfermanV98}.

We can observe a gap in complexity of algorithm between simple and complex quantification. The former case requires a translation of an alternating automata to a non-deterministic one whereas the latter requires a full determinisation.
To capture the complexity of our algorithm we define $\mathcal{T}_c(k, n)$ as a tower of $k$ exponents in $n$, i.e., $\mathcal{T}_c(0, n) = n^c$ and $\mathcal{T}_c(k+1, n) = c^{\mathcal{T}_c(k, n)}$.
For $k \geq 0$ we define k-\EXPSPACE~as the class of languages recognised by a deterministic Turing machine (\cite{DBLP:journals/jcss/Savitch70}) with space $\mathcal{T}_c(k, \mathcal{O}(n))$ for some $c$ (and similarly for time).
We define $(-1)$-\EXPSPACE~as \NLOGSPACE. Note that $0$-\EXPSPACE~=~\PSPACE.

\begin{theorem}\label{theo:upperBound}
	Model checking of a \HyperATLS~formula with $k$ complex and $l$ simple quantifiers is in $(2k + l)$-\EXPTIME. If $l \geq 1$ and the formula is linear it is also in $(2k+l-1)$-\EXPSPACE~(both in size of the formula). 
\end{theorem}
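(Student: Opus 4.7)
The plan is to carry out a size analysis of the automaton $\mathcal{A}_\varphi$ produced by the construction of \refSection{mc} and to combine it with an efficient emptiness test. Using \refProp{MCequiv}, and observing that $\varphi$ has no free path variables, model checking reduces to deciding $\mathcal{L}(\mathcal{A}_\varphi) \neq \emptyset$ over a singleton alphabet.

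By structural induction on the formula, I will show that if the subformula $\psi$ has $k'$ complex and $l'$ simple quantifiers then $\mathcal{A}_\psi$ has at most $\mathcal{T}_c(2k'+l', O(|\psi|))$ states. The LTL constructions of \refFig{ltltran} contribute only linear blowup; a negated quantifier (\refTheo{altneg}) is at most quadratic; a simple existential quantifier applies \refTheo{alt1} once to nondeterminize the inner APA, contributing one exponential before the product with the game; and a complex quantifier applies \refTheo{alt1} twice (alternating $\to$ nondeterministic $\to$ deterministic) to obtain the DPA used in the alternating transition function, contributing two exponentials. Simple universal quantifiers are handled by pushing the negation inward as $\neg\exists \pi.\,\neg(\cdot)$ and reduce to the existential case.

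Emptiness of the resulting APA can be decided in time polynomial in its size via the reduction to parity games of \cite{DBLP:conf/stoc/KupfermanV98}; since a polynomial in a tower $\mathcal{T}_c(j, O(n))$ remains in $\mathcal{T}_{c'}(j, O(n))$, composing with the size bound yields the $(2k+l)$-\EXPTIME{} claim.

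For the $(2k+l-1)$-\EXPSPACE{} bound on linear formulas with $l \geq 1$, the plan is to avoid materializing the top-level automaton. Since the formula is linear, we can arrange that the final construction step eliminates a simple quantifier and therefore produces a non-deterministic (rather than alternating) parity automaton; emptiness of such an NPA is decidable by guessing a lasso, which requires only $O(\log N)$ space in the NPA size $N$. With $N = \mathcal{T}_c(2k+l, O(n))$ and $\log \mathcal{T}_c(j, n) = O(\mathcal{T}_c(j-1, n))$, a full exponential level of space is saved. The main technical obstacle is the on-the-fly implementation: the states and transitions of the top-level NPA must be generated without materializing the inner automata in full, which calls for a recursive argument showing that each intermediate automaton admits a succinct representation as a tuple over game states and macrostates of the preceding construction, so that the overall working space stays within the target budget.
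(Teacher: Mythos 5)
Your argument for the $(2k+l)$-\EXPTIME{} bound matches the paper's: the same inductive size analysis ($\mathcal{T}_c(2k'+l',\mathcal{O}(|\psi|))$ states, one exponential per simple quantifier for nondeterminization, two per complex quantifier for determinization, polynomial overhead for the Boolean/temporal cases and for negation), followed by polynomial-time APA emptiness. That part is fine.

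There is, however, a genuine gap in your argument for the $(2k+l-1)$-\EXPSPACE{} bound. You write that ``since the formula is linear, we can arrange that the final construction step eliminates a simple quantifier.'' You cannot: the quantifier prefix is fixed by the formula and the construction is necessarily inside-out, so the last quantifier to be eliminated is the \emph{outermost} one, which may well be complex (e.g.\ $\llangle A \rrangle \pi_1.\,\exists \pi_2.\,\psi$ is linear with $l=1$, yet the top-level automaton is alternating). In that case your lasso-guessing/\NLOGSPACE{} argument does not apply, and your plan gives no saving at all. The paper closes exactly this case by a different observation: if the outermost quantifier is complex, then (since $l\geq 1$) some simple quantifier in the prefix sits directly beneath a complex one, and the determinization required for that enclosing complex quantifier starts from an automaton that is \emph{already non-deterministic}, so it costs only a single exponential instead of two (\refTheo{alt1}). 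The final APA then has size $\mathcal{T}_c(2k+l-1,\mathcal{O}(n))$ and polynomial-time emptiness already yields $(2k+l-1)$-\EXPTIME{}, which is contained in $(2k+l-1)$-\EXPSPACE. Your \NLOGSPACE{} argument is the right tool only for the complementary case where the outermost quantifier is simple; you need both cases (and the observation that they are exhaustive for a linear prefix with $l \geq 1$) to complete the proof. The on-the-fly representation issue you flag is real but is also left at the level of a remark in the paper.
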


The fact that we can derive a better upper bound when $l > 0$ follows from the fact that we can determine the emptiness of a non-deterministic automaton in \NLOGSPACE~\cite{DBLP:journals/iandc/VardiW94} and for an alternating automaton only in polynomial time (for a fixed number of colours) \cite{DBLP:conf/stoc/KupfermanV98}. 
Note that for the syntactic fragment of \HyperCTLS~our algorithm matches the algorithm in \cite{DBLP:conf/cav/FinkbeinerRS15}.

\section{Lower Bounds For HyperATL* Model Checking}\label{sec:lb}

\refTheo{upperBound} gives us an upper bound on the model checking problem for \HyperATLS.
We can show the following lower bound

\begin{theorem}\label{theo:lb}
	Model checking of a linear \HyperATLS~formula with $k$ complex and $l$ simple quantifiers is $(2k + l - 1)$-\EXPSPACE-hard in the size of the formula, provided $l \geq 1$.
\end{theorem}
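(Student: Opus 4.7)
The plan is to reduce from acceptance of a deterministic Turing machine using space $\mathcal{T}_c(2k+l-1, \mathcal{O}(n))$, so that a matching upper bound from \refTheo{upperBound} yields tight $(2k+l-1)$-\EXPSPACE-hardness in the formula size. For the base case $k=0, l\geq 1$, I would invoke the known $(l-1)$-\EXPSPACE-hardness of \HyperLTL~model checking for an alternating quantifier prefix of $l$ path quantifiers, which transfers to \HyperATLS~by \refProp{subsume}. The inductive step is then to show that each additional complex strategy quantifier lifts hardness by \emph{two} exponential levels, mirroring the alt-to-non-det followed by non-det-to-det blowups in the matching upper bound.

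The heart of the argument is the promised construction that encodes a doubly exponential counter inside a \emph{single} complex quantifier $\llangle A \rrangle \pi$. The idea is that the coalition $A$ must commit, via one strategy, to the values of a $\mathcal{T}_c(j+2, n)$-cell counter while the agents in $\agents \setminus A$ freely select which cell to inspect. Since a strategy in an MSCGS is a function of the finite history, the coalition cannot tailor its answer to the later adversarial choice; arranging the move order so that the adversary's query is issued only \emph{after} the coalition has committed to its bit forces the coalition's response to be correct against every possible query at that round. One such round buys a single exponential level in the classical two-player-game style; chaining two rounds within the same strategy quantifier — so that the adversary's address at the second round may depend on the coalition's committed symbol at the first, hence ranges over a set whose size is already one exponential larger — buys the second exponential level. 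Iterating this gadget $k$ times raises the addressable counter from $\mathcal{T}_c(l-1,n)$, achievable with $l$ simple quantifiers alone, to $\mathcal{T}_c(2k+l-1,n)$.

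The main obstacle will be designing the CGS so that this commitment mechanism is truly binding and composes cleanly through the quantifier prefix. Concretely: (i) the game structure must let $\agents\setminus A$ probe arbitrary counter addresses while the coalition is confined to a memoryful function of the past, enforceable either via stages in the MSCGS or by a CGS in which the coalition's move must syntactically precede the adversary's in the transition gadget; (ii) consistency of successive counter values must be expressible by the remaining $l$ simple quantifiers using the standard tape-comparison trick of \HyperLTL~reductions, without consuming auxiliary quantifier slots for bookkeeping; and (iii) the path-variable binding of the complex quantifier must be threaded through the subsequent $\forall/\exists$ chain in a way that preserves the linearity of the formula and does not increase the formula size beyond polynomial in $n$.

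Finally, I would verify that the constructed formula is indeed linear, that its size grows only polynomially in $n$, and that the underlying MSCGS has constant size, so that the reduction indeed lies within the required space bound. The assumption $l \geq 1$ is then used precisely to provide the final simple quantifier that launches the standard Turing-machine tape-consistency \LTL~body over the $\mathcal{T}_c(2k+l-1,n)$-sized counter produced by the $k$ strategy-quantifier gadgets.
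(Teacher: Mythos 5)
Your high-level plan coincides with the paper's: a reduction from space-bounded Turing machines via a yardstick, the base case $k=0$ imported from the known $(l-1)$-\EXPSPACE-hardness of \HyperLTL, and a gadget showing that each complex quantifier raises the yardstick by \emph{two} exponents. But the entire technical content of the theorem lives inside that gadget, and the mechanism you sketch for it does not work as described. The obstacle is that the game structure has constant size and the quantifier-free body must have size polynomial in $n$, so an adversarial ``query'' of a cell address in a counter with $\mathcal{T}_c(j+2,n)$ cells must itself be spelled out over $\mathcal{T}_c(j+1,n)$ time steps of the play; your proposal never says how a polynomial-size \LTL{} formula then relates that temporally-encoded address to the contents of the addressed cell, nor how consecutive values of the big counter --- which are doubly-exponentially many positions apart on a \emph{single} path --- are compared. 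Your ``chaining two rounds so that the second query ranges over an exponentially larger set'' is an assertion, not a construction: nothing in the semantics makes the adversary's second-round choice set grow, since each move still conveys $O(1)$ bits.

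The paper's gadget resolves exactly this. On each trace $\pi_i$ the coalition must emit \emph{two interleaved counters}: an $\tilde n$-bit $a$-counter (with $\tilde n = 2^{[i-1]}(n)$, verifiable either by explicit $\ltlnext^{\tilde n}$ in the base case or by the yardstick on the \emph{inner} trace $\pi_{i-1}$ in the inductive case --- note the quantifier order, $\pi_{i-1}$ is resolved after $\pi_i$ and so can position its yardstick against any point of the already-fixed $\pi_i$), and a $2^{\tilde n}$-bit $b$-counter whose bits sit at the block boundaries of the $a$-counter. The adversary inside the \emph{same} quantifier flags at most one claimed error in the $b$-counter; a polynomial-size formula ($\mathit{prevPos}$, resp.\ the $\mathit{errorStart}$ marking on $\pi_{i-1}$) then uses the $a$-counter values as \emph{addresses} to locate the matching bit in the previous $b$-count and checks local consistency, so the only winning strategy is a globally correct counter of length $2^{2^{\tilde n}}\cdot\tilde n\cdot 2^{\tilde n}$. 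The two exponents thus come from the address-counter/content-counter nesting plus the error-flagging game, not from two rounds of commitment. Your points (i)--(iii) correctly list the constraints any construction must meet, but without this (or an equivalent) gadget the proof does not go through; as it stands the proposal restates the theorem's difficulty rather than resolving it.
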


The proof of \refTheo{lb} proceeds by encoding space-bounded Turing machines into \HyperATLS.
We show that (complex) strategic quantification can be used to encode a incremental counter that grows by \emph{two exponents} with each quantifier, opposed to the increment by a single exponent for simple quantification \cite{DBLP:phd/dnb/Rabe16}.
The curial idea that allows us to encode a doubly-exponential counter is the fact that we can use a player within the game to check the correctness. While the first player can generate a counter, the second player tries to show that the counter is wrong. The only winning strategy for the former player is then to output a correct counter that holds up against all scrutiny by the latter player. 
As the construction of the counter is rather complex we refer the interested reader to a detailed proof in the appendix.

Note that \refTheo{lb} is conditioned on $l \geq 1$. This gives an interesting complexity landscape:
In cases where $l \geq 1$, model checking is $(2k+l-1)$-\EXPSPACE-complete (irrespective of $k$). 
If $l = 0$ we get an upper bound of $2k$-\EXPTIME~(\refTheo{upperBound}). 
In the special case where $l=0$ and $k =1$ we get a matching lower bound from the \ATLS~model-checking problem \cite{DBLP:journals/jacm/AlurHK02} (subsuming \LTL~realizability \cite{DBLP:conf/popl/PnueliR89,DBLP:conf/icalp/PnueliR89}) and thus $2$-\EXPTIME-completeness. If $k > 1$ the best lower bound is $(2k-2)$-\EXPSPACE. The exact complexity for the case where $k > 1$ and $l = 0$ is thus still open. 

\section{HyperATL* vs.~asynchronous HyperLTL}\label{sec:rel}

We have seen that our strategic logic can naturally express asynchronous hyperproperties.
In this section we compare our logic to \AHLTL~\cite{hyperaltl}, a recent extension of \HyperLTL~specifically designed to express such asynchronous properties. 
\AHLTL~is centred around the idea of a trajectory, which, informally speaking, is the stuttering of traces in a system.
In \AHLTL~an initial trace quantifier prefix is followed by a quantification over such a trajectory.
For example, a formula of the form $\forall \pi_1. \cdots \forall \pi_n. \mathbf{E}. \varphi$ means that for all paths $\pi_1, \cdots, \pi_n$ in the system there exists some stuttering of the paths, such that $\varphi$ is satisfied. 
\AHLTL~follows a purely trace-based approach where the stuttering is fixed knowing the full paths $\pi_1, \cdots, \pi_n$. In comparison, in our logic a \emph{strategy} must decide if to stutter based on finite a prefix in the system. 
Model checking \AHLTL~is, in general, undecidable \cite{hyperaltl}.
The largest known fragment for which an algorithm is known are formulas of the form $\forall \pi_1. \cdots \forall \pi_n. \mathbf{E}. \varphi$ where $\varphi$  is an \emph{admissible formula} \cite{hyperaltl} which is a conjunction of formulas of the form $\ltlg \bigwedge_{a \in P} a_{\pi_i} \leftrightarrow a_{\pi_j}$ (where $P$ is a set of atomic propositions) and stutter-invariant formulas over a single path variable.  
We can show the following (where $\mathcal{G}_\mathit{stut}$ is the stutter transformation from \refDef{stut}):

\begin{theorem}\label{theo:altl}
	For any Kripke structure $\mathcal{G}$ and \AHLTL~formula of the form $\forall \pi_1. \cdots \forall \pi_n. \mathbf{E}. \varphi$ it holds that if $\mathcal{G}_\mathit{stut} \models [\llangle \mathit{sched} \rrangle \pi_1.\cdots\llangle \mathit{sched} \rrangle \pi_n]~\varphi \land \bigwedge_{i \in \{1, \cdots, n\}} \mathit{fair}_{\pi_i}$ \textbf{(1)} then $\mathcal{G} \models_{\texttt{AHLTL}} \forall \pi_1. \cdots \forall \pi_n. \mathbf{E}. \varphi$ \textbf{(2)}.
	If $\varphi$ is an admissible formula, \textbf{(1)} and \textbf{(2)} are equivalent.
\end{theorem}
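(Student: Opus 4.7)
The plan is to prove the two parts of the statement separately, with $(1)\Rightarrow(2)$ handled uniformly and the converse relying on the structure of admissible formulas.

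For $(1)\Rightarrow(2)$, I would fix witnessing scheduler strategies $f_1,\dots,f_n$ in the parallel game on $\mathcal{G}_\mathit{stut}$ and pick an arbitrary tuple $t_1,\dots,t_n$ of traces of $\mathcal{G}$. In each copy $i$ of the parallel product, the non-scheduler agent (the non-determinism of $\mathcal{G}$, viewed as a one-player CGS) can be instructed, whenever the scheduler advances, to pick precisely the move that drives the current state to the next element of $t_i$. Combined with $f_i$, this yields a single play whose $i$-th projection, with the stutter bits discarded, is some stuttering of $t_i$. The conjunct $\mathit{fair}_{\pi_i}$ prevents the scheduler from stuttering $t_i$ forever, so the resulting stutterings form a valid \AHLTL~trajectory of $(t_1,\dots,t_n)$, and the outcome satisfies $\varphi$ by assumption. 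This establishes the \AHLTL~formula for the tuple.

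For $(2)\Rightarrow(1)$ under admissibility, write $\varphi = \bigwedge_j \psi_j$ where each $\psi_j$ is either of the global-equality form $\ltlg \bigwedge_{a\in P_j}(a_{\pi_{i_j}} \leftrightarrow a_{\pi_{i'_j}})$ or a stutter-invariant formula on a single path variable. The central idea is to use a greedy alignment strategy for each scheduler in the parallel game: scheduler $i$ observes the joint state of all copies and advances copy $i$ precisely when doing so would not violate any global-equality constraint involving $i$, and stutters otherwise. Concretely, for each global-equality conjunct $\psi_j$ with paths $(i_j,i'_j)$ and proposition set $P_j$, copy $i_j$ is blocked from crossing a $P_j$-change while copy $i'_j$ has not yet reached a matching $P_j$-change, and symmetrically. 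Stutter-invariant single-path conjuncts are preserved under any stuttering and require no active handling.

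The main obstacle is to show that this greedy strategy is non-blocking, i.e.\ that every copy is advanced infinitely often, so that $\mathit{fair}_{\pi_i}$ holds and the induced plays form proper trajectories. For this I plan to invoke assumption $(2)$: once the non-scheduler moves in the parallel play have determined traces $t_1,\dots,t_n$, assumption $(2)$ guarantees at least one trajectory $\tau^\ast$ satisfying $\varphi$ for this tuple; since the greedy strategy is maximally permissive (it advances unless explicitly forbidden by a global-equality constraint), a domination argument shows that the greedy strategy advances each copy at least as often, in the long run, as $\tau^\ast$ does, which secures fairness. Correctness of the global-equality conjuncts then holds by construction of the alignment, and the single-path conjuncts by stutter invariance, completing the equivalence.
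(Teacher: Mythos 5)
Your proposal is correct and follows essentially the same route as the paper's proof: the forward direction reads off a trajectory from the winning scheduler strategies while the underlying non-determinism tracks the given traces (with $\mathit{fair}_{\pi_i}$ ruling out infinite stuttering), and the converse for admissible formulas uses the greedy ``advance whenever no equality conjunct forbids it'' scheduler, whose winningness and fairness follow because advancing as early as possible is dominant and assumption (2) supplies a satisfying trajectory. Your domination argument for non-blocking is slightly more explicit than the paper's, but it is the same underlying idea.
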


\refTheo{altl} gives us a sound approximation of the (undecidable) \AHLTL~model checking.
Furthermore, for admissible formulas, \AHLTL~can be truthfully expressed in our logic. 
As shown in \cite{hyperaltl}, many interesting properties can be expressed using an admissible formula and can thus be (truthfully) checked in our logic.
Our framework can therefore express many interesting properties, is fully decidable and also subsumes the largest known decidable fragment of \AHLTL.

\section{Experimental Evaluation}\label{sec:proto}

While MC for the full logic is very expensive (\refTheo{lb}) and likely not viable in practice, formulas of the from $[\llangle A_1 \rrangle \pi_1.~\cdots \llangle A_n\rrangle \pi_n.] \varphi$ where $\varphi$ is quantifier free, can be checked very efficiently via a reduction to a parity game (see the appendix for details). 
Note that all alternation-free \HyperLTL~formulas, the reduction from the MC approach from \cite{DBLP:conf/cav/CoenenFST19} and the reduction in \refTheo{altl} fall in this fragment.
We implemented a prototype model checker for this fragment to demonstrate different security notions (both synchronous and asynchronous) on small example programs.
Our tool uses \texttt{rabinizer 4} \cite{DBLP:conf/cav/KretinskyMSZ18} to convert a \LTL~formula into a deterministic automaton and \texttt{pgsolver} \cite{DBLP:conf/atva/FriedmannL09} to solve parity games.
Our tool is available on GitHub at \url{https://github.com/reactive-systems/hyperatl}.

\subparagraph{Information-Flow Policies}
We have created a small benchmark of simple programs that distinguish different information-flow policies.
We checked the following properties:
\textbf{(OD)} is the standard (alternation-free) formula of observational determinism,
\textbf{(NI)} is a simple formulation of non-interference due to \cite{DBLP:conf/sp/GoguenM82a},
\textbf{(simSec)} is simulation security \cite{DBLP:conf/ershov/Sabelfeld03} as expressed in \refSection{examples1}.
Finally, \textbf{(sGNI)} is the simple game based definition of GNI resolved on the parallel-composition (as used in \cite{DBLP:conf/cav/CoenenFST19}). 
We designed small example programs that demonstrate the difference between security guarantees and present the results in \refTable{res1}.
Note that the model checking algorithm for $\forall^*\exists^*$ formulas from \cite{DBLP:conf/cav/CoenenFST19} is subsumed by our approach. As we reduce the search of a strategy for the existential player to a parity game opposed to a SMT constraint, we can handle much bigger systems (thousands of states vs.~$\leq 3$ states in \cite{DBLP:conf/cav/CoenenFST19}).

\subparagraph{Asynchronous Hyperproperties}
To showcase the expressiveness of our framework to handle asynchronous properties, we implemented the stuttering transformation from \refDef{stut}.
We evaluated our tool by checking example programs both on synchronous observational-determinism \textbf{(OD)} and asynchronous versions of OD \textbf{(OD$_{asynch}$)} and non-interference \textbf{(NI)$_{asynch}$}.  
Note that while \textbf{(OD$_{asynch}$)} can also express in the decidable fragment of \AHLTL, \textbf{(NI$_{asynch}$)} is not an admissible formula (and can not be handled in \cite{hyperaltl}). 
As non-interference only requires the outputs to align provided the inputs do, one needs to take care that the asynchronous scheduler does not ``cheat`` by deliberately missaligning inputs and thereby invalidating the premise of this implication. 
Our results are given in \refTable{res2}.
To demonstrate the state-explosion problem we tested the same program (\texttt{Q1}) with different bit-widths (programs \texttt{Q1}$_{i}$, \texttt{Q1}$_{ii}$, \texttt{Q1}$_{iii}$), causing an artificial blow-up in the number of states. 

\begin{table}
	
	\begin{subtable}{0.5\textwidth}
		\begin{tabular}{|c|c|c|c|c|}
			\hline
			& \textbf{(OD)} & \textbf{(NI)} &  \textbf{(simSec)} & \textbf{(sGNI)} \\
			\hline
			\texttt{P1} & \cmark ($15$) & \cmark ($16$) & \cmark ($16$) & \cmark ($46$) \\
			\hline
			\texttt{P2} & \xmark ($112$) & \cmark ($80$) & \cmark ($83$) & \cmark ($432$)\\
			\hline
			\texttt{P3} & \xmark ($70$) & \xmark ($44$) & \cmark ($54$) & \cmark ($112$) \\
			\hline
			\texttt{P4} & \xmark ($73$) & \xmark ($64$) & \xmark ($70$) & \cmark ($191$) \\
			\hline
		\end{tabular}
		\vspace{0.2cm}
		\subcaption{Examples for Information-flow policies}\label{tab:res1}
	\end{subtable}
	\begin{subtable}{0.5\textwidth}
		\begin{tabular}{|c|c|c|c|}
			\hline
			& \textbf{(OD)} & \textbf{(OD$_{asynch}$)} & \textbf{(NI$_{asynch}$)} \\
			\hline
			\texttt{Q1}$_{i}$ & \xmark ($112$) &\cmark ($788$) & \cmark ($812$) \\
			\hline
			\texttt{Q1}$_{ii}$ & \xmark ($281$) &\cmark ($3372$) & \cmark ($3516$) \\
			\hline
			\texttt{Q1}$_{iii}$ & \xmark ($1680$) &\cmark ($20756$) & \cmark ($24078$) \\
			\hline
			\texttt{Q2} & \xmark ($985$) &\xmark ($18141$) & \cmark ($6333$) \\
			\hline
		\end{tabular}
		\vspace{0.2cm}
		\subcaption{Examples for Asynchronous hyperproperties}\label{tab:res2}
	\end{subtable}
	
	\caption{Validity of various \HyperATLS~formulas on small benchmark programs. A \cmark (resp.~\xmark) means that the formula is satisfied (resp.~not satisfied). The time consumption is given in milliseconds. }
\end{table}

\section{Related Work}

There has been a lot of recent interest in logics for hyperproperties. Most logics are obtained by extending standard temporal or first-order/second-order logics with either path quantification or by a special equal-level predicate~\cite{DBLP:conf/stacs/Finkbeiner017}.   See \cite{DBLP:conf/lics/CoenenFHH19} for an overview.
To the best of our knowledge, none of these logics can express strategic hyperproperties.

\subparagraph{Alternating-time Temporal Epistemic Logic}
The relationship between epistemic logics and hyperlogic is interesting, as both reason about the flow of information in a system. 
As shown in \cite{DBLP:conf/fossacs/BozzelliMP15}, \HyperLTL~and \LTL$_\mathcal{K}$ (\LTL~extended with a knowledge operator \cite{DBLP:books/mit/FHMV1995}) have incomparable expressiveness.
In \texttt{HyperQPTL}, which extends \HyperLTL~with propositional quantification~\cite{DBLP:conf/cav/FinkbeinerHHT20}, the knowledge operator can be encoded by explicitly marking the knowledge positions via propositional quantification \cite[\S 7]{DBLP:phd/dnb/Rabe16}.
Alternating-time temporal logic has also been extended with knowledge operators \cite{DBLP:journals/sLogica/HoekW03a}. The resulting logic, \texttt{ATEL}, can express properties of the form ``if $\agent$ knows $\phi$, then she can enforce $\psi$ via a strategy.''
The natural extension of the logic in \cite{DBLP:journals/sLogica/HoekW03a}, which allows for arbitrary nesting of quantification and operators (i.e., an extension of \ATLS~instead of \ATL) is incomparable to \HyperATLS.

\subparagraph{Model Checking}
Decidable model checking is a crucial prerequisite for the effective use of a logic. Many of the existing (synchronous) hyperlogics admit decidable model checking, although mostly with non-elementary complexity (see \cite{DBLP:conf/vmcai/Finkbeiner21} for an overview). 
For alternating-time temporal logic (in the non-hyper realm), model checking is efficient (especially when one prohibits arbitrary nesting of temporal operators and quantifiers as in \ATL) \cite{DBLP:journals/jacm/AlurHK02,DBLP:conf/cav/AlurHMQRT98}. 
If one allows operators and quantifiers to be nested arbitrarily (\ATLS), model checking subsumes \LTL~satisfiability and realizability. This causes a jump in the model checking complexity to $2$-\EXPTIME-completeness.
As our lower bound demonstrates, the combination of strategic quantification and hyperproperties results in a logic that is algorithmically harder (for model checking) than non-strategic hyperproperties (as \HyperLTL) or non-hyper strategic logics (as \ATLS). 
The fragment of \HyperATLS~implemented in our prototype model checker subsumes alteration-free \HyperLTL~(see MCHyper \cite{DBLP:conf/cav/FinkbeinerRS15}), model checking via explicit strategies \cite{DBLP:conf/cav/CoenenFST19} and the (known) decidable fragment of \AHLTL~\cite{hyperaltl}.

\subparagraph{Asynchronous Hyperproperties}
Extending hyperlogics to express asynchronous properties has only recently started to gain momentum \cite{DBLP:journals/pacmpl/GutsfeldMO21,hyperaltl,lics2021hyper}. 
In \cite{hyperaltl} they extend \HyperLTL~with explicit trajectory quantification. 
\cite{DBLP:journals/pacmpl/GutsfeldMO21} introduced a variant of the polyadic $\mu$-calculus, $H_\mu$, able to express hyperproperties.
In \cite{lics2021hyper} they extended \HyperLTL~with new modalities that remove redundant (for example stuttering) parts of a trace.
Model checking is undecidable for all three logics.
The (known) decidable fragment of \cite{hyperaltl} can be encoded into \HyperATLS. The only known decidable classes for $H_\mu$ \cite{DBLP:journals/pacmpl/GutsfeldMO21} and $\texttt{HyperLTL}_S$ \cite{lics2021hyper} are obtained by bounding the asynchronous offset by a constant $k$, i.e., asynchronous execution may not run apart (``diverge'') for more than $k$ steps. For actual software, this is a major restriction. 

\section{Conclusion}

We have introduced \HyperATLS, a temporal logic for strategic hyperproperties. Besides the obvious benefits of simultaneously reasoning about strategic choice and information flow, \HyperATLS
provides a natural formalism to express \emph{asynchronous} hyperproperties, which has been a major challenge for previous hyperlogics.
Despite the added expressiveness, \HyperATLS~model checking remains decidable, with comparable cost to logics for synchronous hyperproperties (cf.~\refTheo{upperBound}).
\HyperATLS~is the first logic for asynchronous hyperproperties where model checking is decidable for the entire logic. Its expressivness and decidability, as well as the availability of practical model checking algorithms, make it a very promising choice for model checking tools for hyperproperties.



\bibliography{references}

\appendix

\section{Additional Material for \refSection{ex} - HyperATL*}

\newtheorem*{reimpl}{Restatement of Theorem \ref{theo:impl}}

\begin{theorem}
	For any MSCGS $\mathcal{G}$ and quantifier-free formula $\varphi$ it holds that if $\mathcal{G} \models [\forall \pi_1. \llangle A \rrangle \pi_2.] \varphi$ then $\mathcal{G} \models \forall \pi_1. \llangle A \rrangle \pi_2. \varphi$.
\end{theorem}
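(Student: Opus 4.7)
The plan is to prove the implication by a currying argument: given winning joint strategies for the parallel game, I will exhibit, for each fixed outer path, a nested strategy that simulates the joint strategy after committing the first copy to that path.

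First I will unfold the hypothesis $\mathcal{G} \models [\forall \pi_1. \llangle A \rrangle \pi_2.] \varphi$ to obtain a family of joint $2$-strategies $F_A = \{f_\xi\}_{\xi \in A}$, where each $f_\xi : (S^2)^+ \to \mathscr{M}$ reads finite histories of state pairs (plus, for MSCGS, earlier-stage move vectors on the second copy). By assumption, every play $(u_1, u_2) \in \mathit{out}(\mathcal{G}, (s_0, s_0), \emptyset, F_A)$ satisfies $[\pi_1 \mapsto u_1, \pi_2 \mapsto u_2] \models_\mathcal{G} \varphi$. To establish the nested statement, I fix an arbitrary path $t \in S^\omega$ of $\mathcal{G}$ starting in $s_0$ and define nested strategies $g^t_\xi$ for $\xi \in A$ by currying:
$$
g^t_\xi(s_1 \cdots s_n,\, \sigma) \;=\; f_\xi\bigl((t(0), s_1)(t(1), s_2) \cdots (t(n-1), s_n),\, \sigma\bigr),
$$
where $\sigma$ carries the partial move vector of earlier-stage agents on the single copy. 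Intuitively, $g^t_\xi$ replays the parallel strategy after pinning the outer copy to $t$.

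The key correspondence, which I will establish by induction on prefix length, is the following: for every $u_2 \in \mathit{out}(\mathcal{G}, s_0, \{g^t_\xi\}_{\xi \in A})$, the pair $(t, u_2)$ lies in $\mathit{out}(\mathcal{G}, (s_0, s_0), \emptyset, F_A)$. At each step, the move chosen by $\xi \in A$ in the single-copy play coincides, by definition of $g^t_\xi$, with its move under $f_\xi$ in the parallel play whose first component so far equals $t[0,i]$. Since $\pi_1$ has the empty agent set, \emph{any} successor for the first component is admissible in the parallel game, so $(t, u_2)$ is a legal parallel outcome. Because $\varphi$ is quantifier-free, its truth under $[\pi_1 \mapsto t,\, \pi_2 \mapsto u_2]$ depends only on the two traces; the hypothesis applied to $(t, u_2)$ therefore yields $[\pi_1 \mapsto t,\, \pi_2 \mapsto u_2] \models_\mathcal{G} \varphi$. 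As $u_2$ was arbitrary in the outcome set of $g^t$ and $t$ was an arbitrary path, the nested semantics is satisfied.

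The only subtlety I anticipate is the interaction with the stage ordering $d$ in an MSCGS: one has to verify that the curried $g^t_\xi$ respects the stage structure of the single-copy game, i.e., that it accesses only earlier-stage moves on that copy. This is immediate, since the parallel $f_\xi$ already uses exactly those earlier-stage moves (the outer copy contributes only its already-committed state history, which $g^t_\xi$ supplies from $t$). The heart of the argument is thus a monotonicity observation: a strategy that wins uniformly against all $\pi_1$ using only \emph{online} information about $\pi_1$ a fortiori induces, for each fixed $\pi_1$, a strategy with strictly more information at its disposal.
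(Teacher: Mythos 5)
Your proposal is correct and follows essentially the same route as the paper's proof: both curry the joint $2$-strategy by pinning the first component to the fixed outer path $t$ (the paper writes this as $f'_\agent(u) := f_\agent(u \times t_1[0,|u|])$), and both conclude by observing that every outcome of the curried strategy, zipped with $t$, is a legal outcome of the parallel game because the first copy is uncontrolled. Your explicit treatment of the stage ordering $d$ is a welcome addition, since the paper only sketches the single-stage case and asserts the generalization.
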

\begin{proof}
	To keep the notation simple we consider the case where the MSCGS has a single stage. The proof generalizes easily. 
	Assume that $\mathcal{G} \models [\forall \pi_1. \llangle A \rrangle \pi_2.] \varphi$. Let $F_A$ be the set of strategies for the agents in $A$ that is wining. 
	For each $\agent \in A$ we thus have a strategy $f_\agent : (S \times S)^+ \to \mathscr{M}$.

	We show that $\mathcal{G}  \models \forall \pi_1. \llangle A \rrangle \pi_2. \varphi$. Let $\Pi = [\pi_1 \mapsto t_1]$ be any path assignment for $\pi_1$ (which is universally quantified).
	We construct a winning strategy $f_\agent' : S^+ \to \mathscr{M}$. We define the function by
	$$f_\agent'(u) := f_\agent(u \times t_1[0, |u|])$$
	Here $\times$ denote the pointwise product of two traces.
	$f_\agent'$ thus disregards most of the already fixed path $t_1$ and simply queries $f_\agent$ based on the current prefix and the prefix of $t_1$. 
	
	Let $F_A'$ be the agent set for agents in $A$ be created in this way. It is easy to see that for each $t \in \mathit{out}(\mathcal{G}, s_0, F_A')$ the traces $u \times t_1$ (again taken pointwise) is in $\mathit{out}(\mathcal{G}, (s_0, s_0), \emptyset, F_A)$, i.e. allowed by the strategies in the self-composed version.
	$F_A'$ is thus a winning set of strategies and $\mathcal{G}  \models \forall \pi_1. \llangle A \rrangle \pi_2. \varphi$.
\end{proof}

\section{Additional Material for \refSection{ex} - Examples of Strategic Hyperproperties}

\newtheorem*{restratNIandGNI}{Restatement of Lemma \ref{lem:stratNIandGNI}}

\begin{restratNIandGNI}
	For any system $\mathcal{G}$ that is input-total, we have that if $\mathcal{G} \models \mathit{stratNI}$  then $\mathcal{G} \models \mathit{GNI}$.
\end{restratNIandGNI}
\begin{proof}
	Assume for contradiction that $\mathcal{G} \not\models \mathit{GNI}$, there thus exists traces $\pi_H$ and $\pi_{O}$ such that no trace in $\mathcal{G}$ agrees with the high inputs of $\pi_H$ and outputs of $\pi_{O}$.
	We show that $\mathcal{G} \not\models \mathit{stratNI}$. As the reference trace $\pi_1$ we choose $\pi_{O}$. 
	It is now easy to see that $\agent_N$ has no strategy to enforce $\mathit{stratNI}$, as the non-controlled input players can always choose the high inputs inputs from $\pi_H$ (as the system is input-total).
	The strategy for $\agent_N$ would thus need to construct a trace that does not exists in the system (even if we quantify existentially).
\end{proof}

\newtheorem*{reSeimSec}{Restatement of Lemma \ref{lem:simSec}}

\begin{reSeimSec}
	A input-total system $\mathcal{G}$ is \emph{simulation secure} if and only if it satisfies $\mathit{simNI}$.
\end{reSeimSec}
\begin{proof}
	We restrict ourself to a high-level proof omitting some details. The basic idea is similar to the fact that bisimulation (and simulation) can be characterised by simple turn based two player games. 
	
	\proofsubparagraph*{First Direction}
	Assume $\mathcal{G}$ is \emph{simulation secure} an let $R$ be the security simulation witnessing this.  
	We show that $\agent_N$ has a strategy in the parallel-composition that always guarantees that the states of both systems are related by $R$ (after shifting) or the low-security input on the paths leading to the states differs.
	If the low-security inputs differ, the premise in $\mathit{simNI}$ is invalid so the the formula is satisfied. 
	If the strategy can otherwise guarantee to stay in $R$ related states,  the output agrees by definition of a security simulation, so the implication is trivially true and the strategy winning.
	
	Now lets construct the wining strategy for $\agent_N$, that maintains in $R$ related states.
	Initially this is obvious as by assumption $s_0 R s_0$. 
	Note that due to the shifting, the $\agent_N$ can base its decision already knowing the move made in the other system.
	Let $s$ and $t$ be the states in both copies and $s R t$ (where we already accounted for the shifting)
	Now the $\agent_H$ and $\agent_L$ player can choose input propositions for the next state $s'$ on $\pi_1$. Similarly $\agent_H$ and $\agent_L$ choose the input for next state, called $i_L, i_H$.
	In case $\agent_N$ in the first and second copy choose different low-security inputs the statement is trivial, as the premise of the implication is false.
	If they choose the same low-security input propositions, by using $R$ there exists a state $t$ with $t \Rightarrow^{i_L}_{i_H} t'$ and $s' R t'$. $\agent_N$-strategy now selects this state. 
	The resulting strategy thus guarantees to be in $R$-related states.
	
	\proofsubparagraph*{Second Direction}
	The second direction follows in a similar way. Assume $\agent_N$ has a winning strategy. We construct the relation $R$ by defining $s R t$ iff there is a initial path of states on $\pi_1$ ending in $s$ such that the strategy for $\agent_N$ (on this initial path) allows a path ending in $t$ (and the initial paths agrees on the low inputs).
	By definition of $A$ and $\mathit{simNI}$ its is easy to see that $R$-related states agree on the outputs.
	Furthermore $R$ is a security simulation as every step in the first component must be matches in the second (as the strategy is wining).
\end{proof}

\subparagraph{Non-Deducibility of Strategies}

As a motivating example consider the following (first introduced in \cite{DBLP:conf/sp/WittboldJ90}): 
Suppose we have a system that reads a binary input \myvar{h} from a high-security input and outputs \myvar{o}. 
The system maintains a bit $b$ of information in its state, initially chosen non-deterministically.
In each step, the system reads the input \myvar{h}  and proceeds as follows: It outputs $\text{\myvar{h}}  \oplus b$ (where $\oplus$ is the xor-operation) and non-deterministically chooses a new value for $b$.
As $\oplus$ essentially encodes a onetime pad it is not hard to see, that this system is secure from a purely trace-based point of view: 
Every possible inputs sequence of \myvar{h} is compatible with any possible sequence of outputs so the system satisfies $\mathit{GNI}$.
If the high-input player is however able to observe the system, (in the context of \cite{DBLP:conf/sp/WittboldJ90} the system shares the internal bit on a private channel), she can communicate arbitrary sequence of bits to the low security environment.
Whenever she wants to send bit $c$, she inputs $c \oplus b$ where $b$ is the internal bit she has access to.
In particular the high-security inputs can be chosen such that a certain output sequence (which can be generated by the system) is impossible under that particular strategy. 
This leads to the notion of  \emph{non-deducibility of strategies} \cite{DBLP:conf/sp/WittboldJ90} which informally requires, that every possible observation made by the lower player is compatible with every possible \emph{strategy} for the high player.

Recall the NDS formulation in \HyperATLS:
$$
\neg\left(\exists \pi_1. \llangle \agent_H \rrangle \pi_2. \ltlg (\bigwedge_{a \in L} a_{\pi_1} \leftrightarrow a_{\pi_2}) \to \lozenge (\bigvee_{a \in O} a_{\pi_1} \not\leftrightarrow a_{\pi_2})\right)
$$
Assume the negation of this formula holds.
In this case there exists a trace $\pi_1$ such that $\agent_H$ has a stratgy to avoid the output on $\pi_1$ (no matter how hard the no-determinism tries). 
Given the same low-security inputs (as on $\pi_1$) the high-security player can thus enforce to different outputs, thereby communicating information to an untrusted source. 
The example system described before does not satisfy this formula (and is thus leaky).
(Note that in the example above no low-security inputs are present). 
We can show that non-deducibility of strategies does implies GNI.
The idea is, that whenever there is a trace of high-security inputs that avoids a specific output, then there always is a strategy that always chooses the inputs according to the trace and ignores all further information.
Viewing inputs as a trace is thus a special case of viewing inputs as given by strategies.

\newtheorem*{reNDS}{Restatement of Lemma \ref{lem:NDS}}

\begin{reNDS}
	If a system $\mathcal{G}$ does satisfy $\mathit{NDS}$ then it satisfies $\mathit{GNI}$.
\end{reNDS}
\begin{proof}
	We show that every system that does not satisfy $\mathit{GNI}$ also does not satisfies $\mathit{NDS}$, i.e., the contraposition.
	So lets assume $S \not\models \mathit{GNI}$, i.e., there exists traces $\pi_H$, $\pi_{O}$ such that \emph{no} trace agrees with the high-inputs of $\pi_H$ and outputs of $\pi_{O}$.
	In $\mathit{NDS}$ we take $\pi_1 := \pi_{O}$. The strategy for $\agent_H$ we selects the high-inputs according to the trace $\pi_H$.
	As there is no trace that combines the inputs from $\pi_H$ and outputs from $\pi_{O}$, every trace that conforms with the inputs chosen by $\agent_H$ and agrees with $\pi_1$ on low-security inputs, must differ in at least position from the low outputs in $\pi_1$, so $\mathit{NDS}$ is satisfied. 
\end{proof}

\section {Additional Material for \refSection{mc} - HyperATL*~Model Checking}

In this section we provide the missing construction of the alternating automaton for the case where $\varphi = [\llangle A_1 \rrangle \pi_1. \cdots \llangle A_k \rrangle \pi_k. \psi$. 
Again assume that we inductively created an alternating automaton $\mathcal{A}_\psi$ that is $\mathcal{G}$-equivalent to $\psi$.
$\mathcal{A}_\psi$ is an automaton over $S^{n+k}$. 
We wish to reduce the alphabet to $S^{n}$.
Assume $\mathcal{G} = (S, s_0, \agents, \mathscr{M}, \delta, \{d_s\}_{s \in S},  \atomic, L)$ and the deterministic parity automaton $\mathcal{A}^\mathit{det}_\psi = (Q, q_0, \lambda : Q \times S^{n+k} \to Q, c)$ constructed from $\mathcal{A}_\psi$ via \refTheo{alt1}.
For any set of agents $A$ we define $\mathit{level}(A, i) := A \cap d^{-1}(i)$. Conversely $\mathit{level}(\overline{A}, i) := (\agents \setminus A) \cap d^{-1}(i)$.
We now define $\mathcal{A}_\varphi := (S^k \times Q \cup \{q_\mathit{init}\}, q_\mathit{init}, S^n, \rho, c')$ where $\rho$ is defined by
{\small\begin{align*}
		\rho\left((q_\mathit{init}, q), [s_1, \cdots, s_n]\right) = &\bigvee\limits_{\sigma^1_1 : \mathit{level}(A_1, 1) \to \mathscr{M}} \cdots \bigvee\limits_{\sigma^k_1 : \mathit{level}(A_k, 1) \to \mathscr{M}} \\
		&\bigwedge\limits_{\sigma'^1_1 : \mathit{level}(\overline{A_1},  1) \to \mathscr{M}} \cdots \bigwedge\limits_{\sigma'^k_1 : \mathit{level}(\overline{A_k}, 1) \to \mathscr{M}} \\
		&\cdots \\
		&\bigvee\limits_{\sigma^1_m : \mathit{level}(A_1, m) \to \mathscr{M}} \cdots \bigvee\limits_{\sigma^k_m : \mathit{level}(A_k, m) \to \mathscr{M}} \\
		&\bigwedge\limits_{\sigma'^1_m : \mathit{level}(\overline{A_1},  m) \to \mathscr{M}} \cdots \bigwedge\limits_{\sigma'^k_m: \mathit{level}(\overline{A_k}, m) \to \mathscr{M}} \\
		&\big(\delta_{s_n^\circ}( \sum_{i=1}^{m} \sigma^1_i  + \sigma'^1_1), \cdots, \delta_{s_n^\circ}( \sum_{i=1}^{m} \sigma^k_i  + \sigma'^k_1), \lambda(q_0, [s_1, \cdots, s_n, s_n^\circ, \cdots, s_n^\circ])   \big)\\
\end{align*}}%
{\small\begin{align*}
		\rho\left(([t_1, \cdots, t_k], q), [s_1, \cdots, s_n]\right) = &\bigvee\limits_{\sigma^1_1 : \mathit{level}(A_1, 1) \to \mathscr{M}} \cdots \bigvee\limits_{\sigma^k_1 : \mathit{level}(A_k, 1) \to \mathscr{M}} \\
		&\bigwedge\limits_{\sigma'^1_1 : \mathit{level}(\overline{A_1},  1) \to \mathscr{M}} \cdots \bigwedge\limits_{\sigma'^k_1 : \mathit{level}(\overline{A_k}, 1) \to \mathscr{M}} \\
		&\cdots \\
		&\bigvee\limits_{\sigma^1_m : \mathit{level}(A_1, m) \to \mathscr{M}} \cdots \bigvee\limits_{\sigma^k_m : \mathit{level}(A_k, m) \to \mathscr{M}} \\
		&\bigwedge\limits_{\sigma'^1_m : \mathit{level}(\overline{A_1},  m) \to \mathscr{M}} \cdots \bigwedge\limits_{\sigma'^k_m: \mathit{level}(\overline{A_k}, m) \to \mathscr{M}} \\
		&\big(\delta_{t_1}( \sum_{i=1}^{m} \sigma^1_i  + \sigma'^1_1), \cdots, \delta_{t_k}( \sum_{i=1}^{m} \sigma^k_i  + \sigma'^k_1), \lambda(q, [s_1, \cdots, s_n, t_1, \cdots, t_k])   \big)
\end{align*}}
where we define $s_n^\circ := s_n$ of $n \geq 1$ and $s_n^\circ = s_0$ (the initial state of $\mathcal{G}$) otherwise.
We define $c'([t_1, \cdots, t_k], q) = c(q)$ and let $c'(q_\mathit{init})$ be chosen arbitrarily (it does not matter as we visit $q_\mathit{init}$ only once).

\subsection{Correctness Proof}

In this section we give a proof of \refProp{MCequiv}.
For sake of simplicity we focus on the non-parallel-composition semantics.
The proof extends easily to the automaton defined for formulas of the form $\varphi = [\llangle A_1 \rrangle \pi_1. \cdots \llangle A_k \rrangle \pi_k.] \psi$. 
We also assume that the game structure is CGS (i.e., only has one stage) as this simplifies notation.
The proof can, however, easily be extended. 
We prove \refProp{MCequiv} by structural induction on $\varphi$.
The case of \LTL~operators and boolean connectives is similar to the correctness of the standard \LTL~to APA translation \cite{DBLP:conf/lics/MullerSS88,DBLP:books/sp/cstoday95/Vardi95}.
We can therefore focus on the case where $\varphi = \llangle A \rrangle \pi. \psi$.
Let $\pi_1, \cdots, \pi_n$ be the path variables that are bound outside of $\varphi$.
We show that for all paths $t_1, \cdots, t_n \in S^\omega$, we have $[\pi_i \mapsto t_i]_{i = 1}^n \models_{\mathcal{G}} \varphi$ if and only if $\mathit{zip}(t_1, \cdots, t_n) \in \mathcal{L}(\mathcal{A}_\varphi)$.
For our proof we work with a modified version of the constructed $\mathcal{A}_\varphi$, that is more verbose and allows for an easier extraction of strategies.

\subparagraph{Alternative Construction}

Let $\mathcal{G} = (S, s_0, \agents, \mathscr{M}, \delta, \atomic, L)$ be a CGS and the already determiniszed automaton $\mathcal{A}^\mathit{det}_\psi = (Q, q_0, \lambda : Q \times S^{n+1} \to Q, c)$ obtained from the inductively constructed $\mathcal{A}_\psi$.
Instead of choosing the statespace as $S \times Q \cup \{q_\mathit{init}\}$ we consider an automaton with an extended statespace that is equivalent to $\mathcal{A}_\varphi$ but makes the disjunctive choices explicit.
We define the alternative automaton 
$$\tilde{\mathcal{A}}_\varphi = (S \times Q \times (A \to \mathscr{M}) \times (\overline{A} \to \mathscr{M}) \cup \{q_\mathit{init}\}, q_\mathit{init}, S^n, \rho, c')$$ 
where $\rho$ is defined by
{\small\begin{align*}
		\tilde{\rho}(q_\mathit{init}, [s_1, \cdots, s_n]) &= \bigvee\limits_{\sigma : A \to \mathscr{M}} \bigwedge\limits_{\sigma' : \overline{A} \to \mathscr{M}}   
		\big(\delta_{s^\circ_n}( \sigma  + \sigma'), \lambda(q_0, [s_1, \cdots, s_n, s^\circ_n]), \sigma, \sigma'\big)\\
		\tilde{\rho}\left((s, q, \_, \_), [s_1, \cdots, s_n]\right) &= \bigvee\limits_{\sigma : A \to \mathscr{M}} \bigwedge\limits_{\sigma' : \overline{A} \to \mathscr{M}}  
		\big(\delta_s( \sigma  + \sigma'), \lambda(q, [s_1, \cdots, s_n, s]), \sigma, \sigma' \big)
\end{align*}}%
Here $\_$ indicates that we do not care about the argument.
As before, we define $c'(s, q, \sigma_1, \sigma_1', \cdots, \sigma_m, \sigma_m') = c(q)$.
Note that focus on a CGS opposed to a MSCGS  simplifies notation as we only need to keep track of one disjunction.

The modified automaton $\tilde{\mathcal{A}}_\varphi$ behaves similarly to $\mathcal{A}_\varphi$ and in particular the behaviour on the $S \times Q$-components of the state-space is identical.
The only difference is that each state (apart from the initial state) now records the move vectors used to reach this state.
It is easy to see:

\begin{lemma}\label{lem:altConst}
	The languages of $\mathcal{A}_\varphi$ and $\tilde{\mathcal{A}}_\varphi$ are identical.
\end{lemma}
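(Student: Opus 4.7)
}
The plan is to exhibit an explicit bijection between accepting runs of $\mathcal{A}_\varphi$ and accepting runs of $\tilde{\mathcal{A}}_\varphi$ on any common input word. The two automata have essentially the same transition structure: in both, the transition from a state carrying the $(s,q)$-information is a disjunction over $\sigma : A \to \mathscr{M}$ of a conjunction over $\sigma' : \overline{A} \to \mathscr{M}$ of a single successor whose $S \times Q$-content is $(\delta_s(\sigma+\sigma'), \lambda(q, [s_1,\ldots,s_n,s]))$. The sole difference is that in $\tilde{\mathcal{A}}_\varphi$ the successor state additionally records the chosen pair $(\sigma,\sigma')$ in two extra components, which are never read by the transition function or by the colouring (since $c'$ and $\tilde c'$ both depend only on the $Q$-coordinate).

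First, I would define the forgetful projection $p : S \times Q \times (A \to \mathscr{M}) \times (\overline{A} \to \mathscr{M}) \to S \times Q$, extended by $p(q_\mathit{init}) = q_\mathit{init}$. Given an accepting run $(\tilde T, \tilde r)$ of $\tilde{\mathcal{A}}_\varphi$ on $w \in (S^n)^\omega$, I would form $(\tilde T, p \circ \tilde r)$ and verify by induction on nodes that it is a run of $\mathcal{A}_\varphi$: the disjunctive/conjunctive structure of $\rho$ and $\tilde\rho$ is identical, and for every child $\tilde r(\tau \cdot j) = (s',q',\sigma,\sigma')$ chosen in $\tilde\rho$ we obtain the matching child $(s',q') = p(\tilde r(\tau\cdot j))$ witnessing the corresponding disjunct/conjunct of $\rho$. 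Since $c$ depends only on the $Q$-component, the parity condition is preserved verbatim along every infinite branch.

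Conversely, given an accepting run $(T, r)$ of $\mathcal{A}_\varphi$ on $w$, I would lift it to a run of $\tilde{\mathcal{A}}_\varphi$ by annotating every non-root node with the pair $(\sigma,\sigma')$ that, by definition of $\rho$, witnesses its selection among the children of its parent. Concretely, each child $r(\tau\cdot j) = (s',q')$ of $r(\tau)$ must arise from some disjunct $\sigma$ (the unique one selected by that run on the outgoing disjunction) together with the conjunct $\sigma'$ whose resulting $(\delta_s(\sigma+\sigma'),\lambda(q,\ldots))$ equals $(s',q')$; I would annotate the child with exactly this $(\sigma,\sigma')$. The resulting labelled tree is a valid run of $\tilde{\mathcal{A}}_\varphi$ because the $S \times Q$-parts traverse the same successors, and the additional components are exactly the ones demanded by $\tilde\rho$. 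Acceptance transfers since $\tilde c'$ agrees with $c'$ on the $Q$-coordinate. Because these two constructions are mutually inverse up to the annotation, they establish $\mathcal{L}(\mathcal{A}_\varphi) = \mathcal{L}(\tilde{\mathcal{A}}_\varphi)$.

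The only mildly delicate point is ensuring that the conjunct $\sigma'$ used to annotate a child in the lifting step is uniquely determined; this is where I would rely on the fact that $\mathcal{A}^{\mathit{det}}_\psi$ is deterministic, so distinct $(\sigma,\sigma')$ pairs produce distinct successor pairs $(\delta_s(\sigma+\sigma'), \lambda(q,\ldots))$ only when the $\delta_s$-values differ; when several conjuncts collapse to the same child, any of the matching $(\sigma,\sigma')$ witnesses can be chosen without affecting acceptance. Beyond this bookkeeping, the argument is entirely routine.
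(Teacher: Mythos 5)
Your approach is the same one the paper takes: the paper offers no formal argument for this lemma beyond the observation that the two automata agree on their $S\times Q$-components and that the extra components are pure bookkeeping, and your projection/lifting construction is the natural elaboration of exactly that remark. The projection direction is correct as written: if some disjunct $\sigma$ is witnessed in $\tilde{\rho}$ by children covering all $\sigma'$, the images under the forgetful map witness the same disjunct of $\rho$, and acceptance is preserved because both colourings read only the $Q$-coordinate.

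There is, however, one step in the lifting direction that would fail as literally written. When several conjuncts $(\sigma,\sigma'_1),(\sigma,\sigma'_2),\dots$ collapse to the same successor $(s',q')$ of $\mathcal{A}_\varphi$ (which happens whenever $\delta_s(\sigma+\sigma'_1)=\delta_s(\sigma+\sigma'_2)$, the $\lambda$-component being independent of $\sigma'$ anyway), the original run may have a \emph{single} child labelled $(s',q')$. You propose to annotate that child with ``any of the matching witnesses,'' but the states $(s',q',\sigma,\sigma'_1)$ and $(s',q',\sigma,\sigma'_2)$ are distinct, and $\tilde{\rho}$ requires \emph{all} of them to occur among the children's labels; choosing one annotation leaves the conjunction over $\sigma'$ unsatisfied. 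The repair is standard but should be stated: duplicate the child's subtree once per witnessing $\sigma'$, annotating each copy's root with its own pair. Every infinite branch of the duplicated tree projects to a branch of the original accepting run, so acceptance is preserved, and language equality (which only needs the two inclusions, not a genuine bijection on runs) follows. With that adjustment your argument is complete and matches the paper's intent.
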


We can now show both direction separately.
\begin{lemma}\label{lem:dir1}
	If $\mathit{zip}(t_1, \cdots, t_n) \in \mathcal{L}(\mathcal{A}_\varphi)$ then $[\pi_i \mapsto t_i]_{i = 1}^n \models_{\mathcal{G}} \varphi$
\end{lemma}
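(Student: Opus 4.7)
The plan is to extract a winning strategy for coalition $A$ from an accepting run of the automaton, working with $\tilde{\mathcal{A}}_\varphi$ from \refLemma{altConst}: its enlarged state space explicitly records the disjunctive move vector $\sigma : A \to \mathscr{M}$ and the adversarial vector $\sigma' : \overline{A} \to \mathscr{M}$ used at each step, which is exactly what we need to read off strategy choices. Assume $\mathit{zip}(t_1,\ldots,t_n) \in \mathcal{L}(\tilde{\mathcal{A}}_\varphi)$ and fix an accepting run $(T,r)$. Since $\tilde{\rho}$ has the shape $\bigvee_{\sigma} \bigwedge_{\sigma'}$, each non-initial node $\tau$ with state $(s, q, \_, \_)$ carries a witnessing $\sigma$ such that for every $\sigma' : \overline{A} \to \mathscr{M}$ there is a child labelled $(\delta(s, \sigma + \sigma'), \lambda(q, [s_1,\ldots,s_n,s]), \sigma, \sigma')$.

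A key observation I would use is that $\mathcal{A}^\mathit{det}_\psi$ is deterministic and the input $\mathit{zip}(t_1,\ldots,t_n)$ is fixed, so the $q$-component of $r(\tau)$ is uniquely determined by the sequence of $s$-components of the ancestors of $\tau$. Both components of a node are thus a function of the $\pi$-history along the branch leading to it. From this I would define a joint strategy $F_A = \{ f_\agent \}_{\agent \in A}$ as follows: for each history $u = s^0 s^1 \cdots s^k \in S^+$ with $s^0 = s^\circ_n$ that is realised as the $s$-history of some branch of $(T,r)$, fix one such branch (e.g.\ the lexicographically least, under a chosen ordering of adversarial moves), read off the witnessing $\sigma$ at its endpoint, and set $f_\agent(u) := \sigma(\agent)$; for histories not reached by any branch, define $f_\agent$ arbitrarily. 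The conjunctive shape of the transition ensures every adversarial response $\sigma'$ is represented as a child, so once a strategy choice has been fixed at a node, every compatible $\mathcal{G}$-continuation under that choice still appears somewhere in $(T, r)$.

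To establish correctness, I would show by induction on $k$ that every outcome $t \in \mathit{out}(\mathcal{G}, s^\circ_n, F_A)$ corresponds to an infinite branch of $(T,r)$ whose $s$-components equal $t$ and whose $q$-components form the unique run of $\mathcal{A}^\mathit{det}_\psi$ on $\mathit{zip}(t_1,\ldots,t_n,t)$. Since $(T,r)$ is accepting, the parity condition holds along this branch, so $\mathit{zip}(t_1,\ldots,t_n,t) \in \mathcal{L}(\mathcal{A}^\mathit{det}_\psi) = \mathcal{L}(\mathcal{A}_\psi)$. Applying the induction hypothesis for $\psi$ yields $[\pi_i \mapsto t_i]_{i=1}^n[\pi \mapsto t] \models_\mathcal{G} \psi$; as $t$ was arbitrary, $[\pi_i \mapsto t_i]_{i=1}^n \models_\mathcal{G} \llangle A \rrangle \pi. \psi$.

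The main obstacle will be the coherent extraction of the strategy: different nodes of $(T,r)$ that share the same game-history may a priori carry different disjunctive choices, yet the final strategy must be a well-defined function of histories whose outcomes remain traceable through the run tree. The determinism of $\mathcal{A}^\mathit{det}_\psi$ forces the $q$-components to agree across such nodes, and combined with the canonical branch-selection above, this yields a strategy for which every admitted play is witnessed by an accepting branch of $(T,r)$, closing the inductive step.
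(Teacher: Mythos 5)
Your proposal is correct and follows essentially the same route as the paper's proof: pass to the verbose automaton $\tilde{\mathcal{A}}_\varphi$ of \refLemma{altConst}, read the coalition's move vector off the disjunctive witness at each node of the accepting run, use the conjunctive closure over adversarial vectors to show every outcome of the extracted strategy is traced by an infinite branch, and conclude via the determinism of $\mathcal{A}^\mathit{det}_\psi$, the parity condition, and the induction hypothesis for $\psi$. Your explicit treatment of well-definedness (canonical branch selection when several nodes share a game-history) is a minor refinement of a point the paper leaves implicit, not a different argument.
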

\begin{proof}
	As $\mathit{zip}(t_1, \cdots, t_n) \in \mathcal{L}(\mathcal{A}_\varphi)$ we have $\mathit{zip}(t_1, \cdots, t_n) \in \mathcal{L}(\tilde{\mathcal{A}}_\varphi)$ by \refLemma{altConst}.
	Let $(T, r)$ be an accepting run of $\tilde{\mathcal{A}}_\varphi$ on $\mathit{zip}(t_1, \cdots, t_n)$.
	The proof-idea is now to use the disjunctive part chosen in $(T, r)$ as the move selection of a strategy. The strategy will thus simulate any given prefix in $(T, r)$ and select the move that was selected disjunctively in the transition function of $\tilde{\mathcal{A}}_\varphi$.
	
	We construct a strategy $f_\agent$ for each $\agent \in A$ as follows: Given any finite play $u = s_0, \cdots, s_k$, we check if there exists a node $\tau$ in $(T, r)$ such that the path to $\tau$ is labelled by $u$, i.e., $r(\epsilon), r(\tau[0, 0]), r(\tau[0, 1]), \cdots, r(\tau[0, |\tau|-1]) = q_{\mathit{init}},(s_1, \_, \_, \_), \cdots (s_k, \_, \_, \_)$. Note that in this case $|\tau| = |u|$. Here $\_$ means that we ignore the value in that position. 
	If no such path exists, select a arbitrary move (we later see that the constructed strategy will ever reach such a situation). 
	Otherwise let $r(\tau) = (s_k, q, \_, \_)$ where $q$ is an state of the automaton.
	By construction of $\tilde{\mathcal{A}}_\varphi$ we have that the children of $\tau$ satisfy. 
	$$\bigvee\limits_{\sigma : A \to \mathscr{M}} \bigwedge\limits_{\sigma' : \overline{A} \to \mathscr{M}}   
		\big(\delta_{s_k}( \sigma  + \sigma'), \lambda(q, [t_1(|\tau|), \cdots, t_n(|\tau|), s_k]), \sigma, \sigma'\big)$$
	There must exist a child of $\tau$ and a move vector $\sigma : A \to \mathscr{M}$ such that for all $\sigma' : \overline{A} \to \mathscr{M}$ there is a node labelled by $(\delta_{s_k}( \sigma  + \sigma'), \lambda(q, [t_1(|\tau|), \cdots, t_n(|\tau|), s_k], \sigma, \sigma')$.
	For any $\agent \in A$ now define $f_\agent(u) := \sigma(\agent)$.
	I.e., choose the move selected in the strategy profile for $A$.
	Note that as \emph{all} $\sigma' : \overline{A} \to \mathscr{M}$ also appeard as a child, all possible moves of the adversarial agents also result in a transition in the tree.
	
	It is easy to see that for the resulting strategy $F_A$ we get that for all $u \in \mathit{out}(\mathcal{G}, t_n^\circ, F_A)$ there exist a path in $(T, r)$ labelled with $q_\mathit{init} (u(1), q_1, \_, \_) (u(2), q_2, \_, \_) \cdots$.
	By definition of $\tilde{\rho}$, the sequence of automaton state $q_1, q_2, \cdots$ is the unique run of $\mathcal{A}^\mathit{det}_\psi$ on $\mathit{zip}(t_1, \cdots, t_n, u)$.
	As $(T, r)$ is accepting this sequence of automata states is accepting, we thus get that $\mathit{zip}(t_1, \cdots, t_n, u) \in \mathcal{A}_\psi$ and by assumption ($\mathcal{A}_\psi$ is $\mathcal{G}$-equivalent to $\psi$), we get that $[\pi_i \mapsto t_i]_{i = 1}^n \cup [\pi \mapsto u] \models_{\mathcal{G}} \psi$.
	As this holds for all $u \in \mathit{out}(\mathcal{G}, t_n^\circ, F_A)$, $F_A$ is a winning set of strategies and $[\pi_i \mapsto t_i]_{i = 1}^n \models_{\mathcal{G}} \varphi$ by the semantics of \HyperATLS.
\end{proof}

\begin{lemma}\label{lem:dir2}
	If $[\pi_i \mapsto t_i]_{i = 1}^n \models_{\mathcal{G}} \varphi$ then $\mathit{zip}(t_1, \cdots, t_n) \in \mathcal{L}(\mathcal{A}_\varphi)$
\end{lemma}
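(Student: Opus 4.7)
The plan is to reverse the construction of Lemma~\ref{lem:dir1}: starting from a winning set of strategies $F_A$ for $\llangle A\rrangle\pi.\psi$, build an accepting run of the alternative automaton $\tilde{\mathcal{A}}_\varphi$ on $\mathit{zip}(t_1,\ldots,t_n)$, and then invoke Lemma~\ref{lem:altConst} to transfer it to $\mathcal{A}_\varphi$. By the semantics of $\llangle A\rrangle\pi.\,$, there exists $F_A$ such that for every $u\in\mathit{out}(\mathcal{G},t_n^\circ,F_A)$ we have $[\pi_i\mapsto t_i]_{i=1}^n\cup[\pi\mapsto u]\models_{\mathcal{G}}\psi$. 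Applying the induction hypothesis (that $\mathcal{A}_\psi$, and hence its equivalent determinization $\mathcal{A}^\mathit{det}_\psi$, is $\mathcal{G}$-equivalent to $\psi$), the unique run of $\mathcal{A}^\mathit{det}_\psi$ on $\mathit{zip}(t_1,\ldots,t_n,u)$ is accepting for every such outcome $u$.

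First I would construct the candidate run $(T,r)$ inductively on the depth. The root is labelled $q_\mathit{init}$. At any node $\tau$ at depth $k\geq 1$ already labelled by $(s,q,\sigma_{\mathrm{prev}},\sigma'_{\mathrm{prev}})$, there is a unique partial history $u=u(0)\cdots u(k-1)$ along the path from the root (with $u(0)=t_n^\circ$ and each $u(j+1)=\delta_{u(j)}(\sigma_j+\sigma'_j)$ read off from the labels). I pick the \emph{disjunctive} move $\sigma := (f_\agent(u))_{\agent\in A}$ dictated by the strategies in $F_A$, and for every $\sigma':\overline{A}\to\mathscr{M}$ I add a child labelled
$\bigl(\delta_s(\sigma+\sigma'),\ \lambda(q,[t_1(k),\ldots,t_n(k),s]),\ \sigma,\ \sigma'\bigr).$
The root is handled analogously with $s_n^\circ$ in place of $s$ and $q_0$ in place of $q$. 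By construction the set of children of $\tau$ satisfies the transition formula $\tilde{\rho}$, because the single chosen $\sigma$ witnesses the outer disjunction and the children cover \emph{all} $\sigma'$ for the inner conjunction.

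Next I would verify acceptance. Every infinite path in $(T,r)$ projects, via the first component of the labels, to a sequence $u\in S^\omega$. By construction $u(0)=t_n^\circ$, and each transition $u(j)\mapsto u(j+1)$ uses $\sigma+\sigma'$ where $\sigma$ is the move prescribed by $F_A$ on the prefix $u[0,j]$; hence $u\in\mathit{out}(\mathcal{G},t_n^\circ,F_A)$. The second component of the labels along this path is, again by construction and determinism of $\mathcal{A}^\mathit{det}_\psi$, exactly the run of $\mathcal{A}^\mathit{det}_\psi$ on $\mathit{zip}(t_1,\ldots,t_n,u)$, which is accepting by the hypothesis derived above. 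Since $c'(s,q,\sigma,\sigma')=c(q)$, the parity condition on this path of $(T,r)$ is satisfied. As this holds for every infinite path, $(T,r)$ is an accepting run, so $\mathit{zip}(t_1,\ldots,t_n)\in\mathcal{L}(\tilde{\mathcal{A}}_\varphi)=\mathcal{L}(\mathcal{A}_\varphi)$ by Lemma~\ref{lem:altConst}.

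The main obstacle I anticipate is bookkeeping: one must check that the partial history $u$ at each node is unambiguously determined by the path of labels (so that the strategy query $f_\agent(u)$ is well-defined), and that the correspondence between branches of the tree and elements of $\mathit{out}(\mathcal{G},t_n^\circ,F_A)$ is exact. This is straightforward in the CGS case presented here; the extension to MSCGS merely replaces the single $\bigvee\bigwedge$ layer by the nested stage-by-stage alternation, and the extension to the parallel-composition semantics $[\llangle A_1\rrangle\pi_1.\cdots\llangle A_k\rrangle\pi_k.]$ replaces single moves and single successor states by $k$-tuples thereof, with the same argument applied componentwise.
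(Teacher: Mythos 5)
Your proposal is correct and follows essentially the same argument as the paper: use the winning strategy $F_A$ to pick the disjunct at each node, add one child per adversarial move vector $\sigma'$, identify every infinite branch with an outcome in $\mathit{out}(\mathcal{G}, t_n^\circ, F_A)$, and conclude acceptance from the $\mathcal{G}$-equivalence of $\mathcal{A}_\psi$ and the determinism of $\mathcal{A}^\mathit{det}_\psi$. The only cosmetic difference is that you build the run in the verbose automaton $\tilde{\mathcal{A}}_\varphi$ and transfer via Lemma~\ref{lem:altConst}, whereas the paper constructs it directly in $\mathcal{A}_\varphi$; aside from a minor off-by-one in the history you feed to the strategy (it should include the current state $s$), the bookkeeping you flag is exactly what the paper handles.
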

\begin{proof}
	Let $F_A$ be a winning strategy for the agents in $A$, i.e., for all $u \in \mathit{out}(\mathcal{G}, t_n^\circ, F_A)$,  $[\pi_i \mapsto t_i]_{i = 1}^n \cup [\pi \mapsto u]\models_{\mathcal{G}} \psi$.
	We construct an accepting run $(T, r)$ for $\mathit{zip}(t_1, \cdots, t_n)$. We construct this tree incrementally (fomally by induction on the depth).
	
	For the root the label is obvious.
	Now let $\tau \in T$ with $r(\tau) = (s, q)$ be any node in this so far constructed automaton. Let $r(\epsilon)r(\tau[0,0])r(\tau[0, 1]) \cdots r(\tau[0, |\tau|-1])$ be the path of node leading to $\tau$ and $q_{\mathit{init}}(s_1, q_1) \cdots (s_k, q_k) (s, q)$ the label of those nodes.
	Now design a move vector $\sigma : A \to \mathscr{M}$ as follows: $\sigma(\agent) := f_\agent(s_n^\circ,s_1, \cdots, s_k, s)$ for each $\agent \in A$.
	We then construct the children of $\tau$ by iterating over all of the $\sigma' : \overline{A} \to \mathscr{M}$ move vectors and construct a node labelled with $(\delta(s, \sigma + \sigma'), \lambda(q, [t_1(|\tau|), \cdots, t_n(|\tau|), s])$.
	By construction of the transition function of $\mathcal{A}_\varphi$, those children satisfy the transition-relation.
	
	The constructed tree $(T, r)$ is thus a run on $\mathit{zip}(t_1, \cdots, t_n)$.
	We now claim that $(T, r)$ is accepting. Consider any infinite path in this tree labelled $q_\mathit{init} (s_1, q_1) (s_2, q_2), \cdots$. By construction of the tree it is easy to see that the path $u = t_n^\circ, s_1, s_2, \cdots$ is contained $\mathit{out}(\mathcal{G}, t_n^\circ, F_A)$. 
	As $F_A$ is winning, we thus have that $[\pi_i \mapsto t_i]_{i = 1}^n \cup [\pi \mapsto u] \models_{\mathcal{G}} \psi$ and by our hypothesis (we assume that $\mathcal{A}_\psi$ is $\mathcal{G}$-equivalent to $\psi$) we get $\mathit{zip}(t_1, \cdots, t_n, u) \in \mathcal{L}(\mathcal{A}_\psi)$.
	Now $\mathcal{A}_\varphi$ simulates the automaton $\mathcal{A}_\psi$ on the input sequence augmented with the states from $u$.
	In particular, the automaton on the infinite path in $(T, r)$ is the same sequence as in accepting run on $\mathit{zip}(t_1, \cdots, t_n, u)$.
	We therefore have that $(T, r)$ is \emph{accepting} and thus that $\mathit{zip}(t_1, \cdots, t_n) \in \mathcal{L}(\mathcal{A}_\varphi)$.
\end{proof}

We can finally prove \refProp{MCequiv}.

\newtheorem*{reMCequiv}{Restatement of Proposition \ref{prop:MCequiv}}

\begin{reMCequiv}
	$\mathcal{A}_\varphi$ is $\mathcal{G}$-equivalent to $\varphi$.
\end{reMCequiv}
\begin{proof}
	Follows by induction and directly from the correctness of the \LTL~to APA construction \cite{DBLP:conf/lics/MullerSS88,DBLP:books/sp/cstoday95/Vardi95}, \refLemma{dir1} and \refLemma{dir2}.
\end{proof}

\newtheorem*{reupper}{Restatement of Theorem \ref{theo:upperBound}}
\begin{reupper}
	Model checking of a \HyperATLS~formula with $k$ complex and $l$ simple quantifiers is in $(2k + l)$-\EXPTIME. If $l > 0$ and the formula is linear it is also in $(2k+l-1)$-\EXPSPACE~(both in size of the formula). 
\end{reupper}
\begin{proof}
	Let $\varphi$ be any \HyperATLS~in negation-normal form formula with $k$-complex and $l$ simple quantifiers.
	By trivial induction on $\varphi$ we get that the constructed alternating automaton is of size $\mathcal{T}_2(2k + l, \mathcal{O}(|\varphi|))$, as each complex quantifier requires a determinisation whereas simple quantifier only requires a non-determinisation (\refTheo{alt1}).
	
	As a upper bound we thus get is that the MC is problem is in $(2k + l)$-\EXPTIME, as an alternating parity automaton can be checked for emptiness in polynomial time (assuming a fixed set of colours). 
	
	If we have $l > 0$ and the formula is linear (the analogous to the syntactic restriction of \HyperCTLS vs \HyperLTL) we can actually do better.
	There are two cases:
	Either the simple quantifier (of which at least one exists) occurs in the scope of a complex quantifier in which case we do not need to determinise an alternating automaton but merely a non-deterministic or universal one (e.g. for a formula of the form $\llangle A \rrangle \pi_1. \exists \pi_2. \psi$ the resulting automaton $\mathcal{A}_{\exists \pi_2. \psi}$ is already non-deterministic, so we can determinise it with a single exponential blow-up. ).
	In this case we get a even better upper bound of $(2k + l - 1)$-\EXPTIME.
	In the other case the simple quantifier occurs as the outermost quantifier in which case we get $(2k + l-1)$-\EXPSPACE~upper bound as emptiness of non-deterministic automata is in \NLOGSPACE~\cite{DBLP:journals/iandc/VardiW94} (and we can construct the automaton very space-efficiently, i.e., can check emptiness without writing down the entire final automaton).

	Note that the requirement that the formula is linear is to strict. We could derive a more precise bound by defining a notion similar to the alternation depth in a formula (see \cite{DBLP:conf/cav/FinkbeinerRS15} for details).
	Every alternation between quantifiers counts $1$ and every complex quantifier $2$.
	Similar to \cite{DBLP:conf/cav/FinkbeinerRS15}, existential quantification in the left of an until or right of a release or universal quantification in the right of an until or left of a release contributes also to the counter. 
\end{proof}

\section{Additional Proofs for \refSection{lb} - Lower Bounds For HyperATL* Model Checking}

In this section we give a proof of \refTheo{lb}.
We prove the lower bound by casting the termination of space-bounded Turing machines into the \HyperATLS~model checking.

The basic idea is to encode a yardstick, i.e., a formula that specified that two atomic propositions occur a fixed number of steps apart.
Given a yardstick of length $f(n)$ we can encode $f(n)$-space bounded Turing machines as shown in \cite{DBLP:phd/dnb/Rabe16}.
The crucial insight to get the lower bound is thus to create a yardstick that increases by two exponents with each strategic quantifier. 
In the end we will show that checking a formula of the form  
$$\exists \pi. \llangle A \rrangle \pi_1. \cdots \llangle A \rrangle \pi_k. \varphi$$
is already $2k$-\EXPSPACE-hard. 
For $k, n \in \mathbb{N}$ we define the following tower $2^{[k]}(n)$ as follows:
\begin{align*}
	2^{[0]}(n) &= n\\
	2^{[k+1]}(n) &= 2^{2^{2^{[k]}(n)}} \cdot 2^{[k]}(n) \cdot 2^{2^{[k]}(n)}
\end{align*}
It is easy to see that for every $k \geq 1$ we have $2^{[k]}(n) \geq \mathcal{T}_2(2k, n)$ for every $n$.
We will show how to encode a yardstick of length $2^{[k]}(n)$ as a formula of the form $\llangle A \rrangle \pi_1. \cdots \llangle A \rrangle \pi_k. \varphi$. By appending a single existential quantifier we can then encode Turing machines of that length.
We recommand the reader to have a look at the lower bound proof in \cite{DBLP:phd/dnb/Rabe16} for a more gentle introduction into the proof idea.

\subparagraph{Structure}
This section is structured as follows.
We begin by giving a proof intuition for the base case, i.e., encoding a yardstick of length $2^{[1]}(n) = 2^{2^n} \cdot n \cdot 2^n$ with a single strategic quantifier. 
We then give a intuition on how to stack these counter.
Finally we give a formal description of the resulting formula.

\subsection{Intuition for the Base Case}

In  \LTL~it is straightforward to encode a counter of length $n$ and thus express a sequence of length $2^n$ (more precisely length $n \cdot 2^n$).
We now show how the strategic behaviour in \HyperATLS~allows us to encode a doubly exponential counter, i.e., a counter of already exponential length.
The game structure in our reduction describes a game between two players, one of which will be controlled via a strategy (called the $\exists$-player) and one will be unrestricted (called the $\forall$-player). 
Our formula will require that the $\exists$-player produces a counter of exponential length and use the $\forall$-player to control that this counter is correct.
The $\exists$-player should produce two counters simultaneously. 
The first counter is specified via atomic proposition $a$ and of length $n$ (i.e., count from $0$ to $2^n - 1$). We call this the $a$-counter.
The second counter is via proposition $b$ and of length $2^n$ (i.e., count from $0$ to $2^{(2^n)}-1$), called the $b$-counter. 
The structure of the counters produced by the system should be the following:

\begin{center}
	\begin{tikzpicture}
		
		\node[] at (-1, 0.1) () {$a$};

		\draw[draw=black] (0, 0) rectangle (1, 0.2);
		\draw[draw=black] (0.2, 0) -- (0.2, 0.2);
		\draw[draw=black] (0.4, 0) -- (0.4, 0.2);
		\node[] at (0.6, 0.1) () {\tiny$\cdots$};
		\draw[draw=black] (0.8, 0) -- (0.8, 0.2);
		
		\draw[draw=black] (1.2, 0) rectangle (2.2, 0.2);
		\draw[draw=black] (1.4, 0) -- (1.4, 0.2);
		\draw[draw=black] (1.6, 0) -- (1.6, 0.2);
		\node[] at (1.8, 0.1) () {\tiny$\cdots$};
		\draw[draw=black] (2, 0) -- (2, 0.2);
		
		\draw[draw=black] (2.4, 0) rectangle (3.4, 0.2);
		\draw[draw=black] (2.6, 0) -- (2.6, 0.2);
		\draw[draw=black] (2.8, 0) -- (2.8, 0.2);
		\node[] at (3, 0.1) () {\tiny$\cdots$};
		\draw[draw=black] (3.2, 0) -- (3.2, 0.2);
		
		\node[] at (3.9, 0.1) () {$\cdots$};
		
		\draw[draw=black] (4.4, 0) rectangle (5.4, 0.2);
		\draw[draw=black] (4.6, 0) -- (4.6, 0.2);
		\draw[draw=black] (4.8, 0) -- (4.8, 0.2);
		\node[] at (5, 0.1) () {\tiny$\cdots$};
		\draw[draw=black] (5.2, 0) -- (5.2, 0.2);

		\node[] at (0.1, -0.3) () {\small$;$};
		\node[] at (1.3, -0.3) () {\small$;$};
		\node[] at (2.5, -0.3) () {\small$;$};
		\node[] at (4.6, -0.3) () {\small$;$};
		
		\node[] at (0.1, -0.6) () {\small$\#$};
		
		
		\draw[draw=black] (0+5.6, 0) rectangle (1+5.6, 0.2);
		\draw[draw=black] (0.2+5.6, 0) -- (0.2+5.6, 0.2);
		\draw[draw=black] (0.4+5.6, 0) -- (0.4+5.6, 0.2);
		\node[] at (0.6+5.6, 0.1) () {\tiny$\cdots$};
		\draw[draw=black] (0.8+5.6, 0) -- (0.8+5.6, 0.2);
		
		\draw[draw=black] (1.2+5.6, 0) rectangle (2.2+5.6, 0.2);
		\draw[draw=black] (1.4+5.6, 0) -- (1.4+5.6, 0.2);
		\draw[draw=black] (1.6+5.6, 0) -- (1.6+5.6, 0.2);
		\node[] at (1.8+5.6, 0.1) () {\tiny$\cdots$};
		\draw[draw=black] (2+5.6, 0) -- (2+5.6, 0.2);
		
		\draw[draw=black] (2.4+5.6, 0) rectangle (3.4+5.6, 0.2);
		\draw[draw=black] (2.6+5.6, 0) -- (2.6+5.6, 0.2);
		\draw[draw=black] (2.8+5.6, 0) -- (2.8+5.6, 0.2);
		\node[] at (3+5.6, 0.1) () {\tiny$\cdots$};
		\draw[draw=black] (3.2+5.6, 0) -- (3.2+5.6, 0.2);
		
		\node[] at (3.9+5.6, 0.1) () {$\cdots$};
		
		\draw[draw=black] (4.4+5.6, 0) rectangle (5.4+5.6, 0.2);
		\draw[draw=black] (4.6+5.6, 0) -- (4.6+5.6, 0.2);
		\draw[draw=black] (4.8+5.6, 0) -- (4.8+5.6, 0.2);
		\node[] at (5+5.6, 0.1) () {\tiny$\cdots$};
		\draw[draw=black] (5.2+5.6, 0) -- (5.2+5.6, 0.2);

		\node[] at (0.1+5.6, -0.3) () {\small$;$};
		\node[] at (1.3+5.6, -0.3) () {\small$;$};
		\node[] at (2.5+5.6, -0.3) () {\small$;$};
		\node[] at (4.6+5.6, -0.3) () {\small$;$};
		
		\node[] at (0.1+5.6, -0.6) () {\small$\#$};

		\node[] at (0.1, -0.9) () {\tiny$\boxed{\#}$};

		
		\node[align=center] at (-1, -1.2) () {$b$};
		
		\draw[draw=black] (0, -1.2) rectangle (0.2, -1.4);
		\draw[draw=black] (1.2, -1.2) rectangle (1.4, -1.4);
		\draw[draw=black] (2.4, -1.2) rectangle (2.6, -1.4);
		\draw[draw=black] (4.4, -1.2) rectangle (4.6, -1.4);
		
		\draw[draw=black] (0+5.6, -1.2) rectangle (0.2+5.6,-1.4);
		\draw[draw=black] (1.2+5.6, -1.2) rectangle (1.4+5.6, -1.4);
		\draw[draw=black] (2.4+5.6, -1.2) rectangle (2.6+5.6, -1.4);
		\draw[draw=black] (4.4+5.6, -1.2) rectangle (4.6+5.6, -1.4);


		\draw[draw=black] (0, 0.3) -- (0, 0.5);
		\draw[draw=black] (0, 0.4) --node[above] {$n$} (1, 0.4);
		\draw[draw=black] (1, 0.5) -- (1, 0.3);

		\draw[draw=black] (0, 0.7) -- (0, 0.9);
		\draw[draw=black] (0, 0.8) --node[above] {$n \cdot 2^n$} (5.2, 0.8);
		\draw[draw=black] (5.2, 0.9) -- (5.2, 0.7);

	\end{tikzpicture}
\end{center}

The counter consist of atomic proposition $a, b,;, \#$ and $\boxed{\#}$. The $a$-counter is of length $n$ and separated via $;$ (that is $;$ occurs every $n$-positions). 
Each box in the row labelled $a$ represents counter bit where $a$ being set is understood as $1$ and $a$ not being set as $0$.
We call a word between two $;$ an $a$-count. The $a$-counter should continuously count from $0$ to $2^n-1$ and then repeat. 
We mark the beginning of a new counter sequence (i.e., $\#$ occurs whenever the next $a$-count is $0$) by $\#$ (Note that $\#$ thus always aligns with $;$).

Proposition $b$ is used to encode the second counter of length $2^n$. Its bits are always placed at the beginning of the smaller counter (indicated via $;$). 
The occurrence of $b$ at positions that do not align with $;$are irrelevant for the counter. Between each occur of $\#$ there thus exists exactly $2^n$ positions where the $b$-proposition counts towards the counter. 
We call such a sequence a $b$-count. 
As for the $a$ counter, the $b$-counter should continuously count from $0$ to $2^{2^n}-1$ and then restart. 
Proposition $\boxed{\#}$ is used to indicate the start of a new $b$-counter, i.e., it holds whenever the next $b$-count value is $0$. 

If this counter is correct, then any two occurrences of $\#$ are exactly $n \cdot 2^n$ steps apart and each occurrences of $\boxed{\#}$ is exactly $2^{2^n} \cdot n \cdot 2^n$ (as each $b$ count has length $n \cdot 2^n$).
We can thus use the occurrence of $\boxed{\#}$ as a yardstick of doubly exponential length. 

Our CGS now (at least for now) allows the $\exists$-player to generated arbitrary words over the atomic proportions $a, b, ;, \#, \boxed{\#}$.
Note that the size of the CGS is thus constant. 

The first objective is to verify the correctness of the $a$-counter. As the length of this counter is linear, we can directly express its correctness in \LTL~using the next operator.
The tricky part is the verification of the $b$-counter, as checking the correctness of this counter requires the compromising at two positions that are roughly $2^n$ positions apart. 
The idea is now to use the $\forall$-player to challenge the correctness of his counter.
We modify the CGS to allow the $\forall$-player to set a special atomic propositions $\mathit{error}$ at each location where proposition ``$;$`` holds, i.e, at every possible bit of the $b$-counter.
The $\forall$-player should set this proposition to indicate that the $\exists$-player made a mistake in the $b$-counter at the location the flag has been set.

We use the fact that for two $m$-bit counter $\alpha_0, \cdots, \alpha_m$ and $\beta_0, \cdots, \beta_m$ (where $\alpha_m, \beta_m$ are the least significant bits) it holds that $\beta = (\alpha + 1) \, \mathit{mod} \, 2^m$ if and only if for every $i$, $\alpha_i = \beta_i$ iff $\alpha_j = 0$ for some $j > i$. Call this fact \textbf{(count)}.
The first error in the $b$ counter is thus the first position in $\beta$ that does not satisfy this formula. 

In our formula we require that the $\forall$-player sets $\mathit{error}$ at at most one position on a trace (otherwise the $\exists$-player wins). 
Having marked the position of the error we can now compare positions via a \LTL~formula.
We wish to compare the marked position with the same position in the previous $b$-count, i.e., if $\mathit{error}$ marks the $j$ position in a $b$-count we wish to compare with the $j$-th position in the \emph{previous} $b$-count. 
The crux is that we can use the $a$-counter to find this position. 
We define the formula
\begin{align}
	\mathit{prevPos} := &; \land \tag*{(1)} \\
	&(\neg \#) \mathcal{U} (\# \land ((\neg \#) \mathcal{U} \mathit{error})) \land \tag*{(2)}  \\ 
	&\bigwedge_{i \in \{0, \cdots, n-1\}} \ltlnext^i a \leftrightarrow \ltlg (\mathit{error} \Rightarrow \ltlnext^i a) \tag*{(3)}
\end{align}
Here (1) states that the position aligns with a $;$-marker, i.e., a bit of the $b$-counter.
(2) states the we are in the previous $b$-counter, i.e., there is exactly one occurrence of $\#$ until we reach the $\mathit{error}$-position. 
(3) requires that for the next $n$ positions, the value of $a$ agrees exactly with the same letter in the $a$ counter after the $\mathit{error}$ location. 
$\mathit{prevPos}$ now holds exactly once on the trace (provided $\mathit{error}$ is set once and the $a$ counter is correct).
It holds at exactly the position that corresponds to the same position as $\mathit{error}$ in the previous $b$-word. 
As we can now compare both positions we can express that the counter at this positions is correct using \textbf{(count)} by the formula:
$$
(\ltlf \mathit{error}) \Rightarrow \left(\ltlg \left(\mathit{prevPos} \Rightarrow \left( (b \Leftrightarrow \ltlg (\mathit{error} \Rightarrow b))  \Leftrightarrow \mathit{before}(\#, ; \land \neg b) \right) \right)\right)
$$
This formula states that the two position pointed to via $\mathit{prevPos}$ and $\mathit{error}$ are correct, i.e., the $b$ value at both positions agree exactly if there is a zero bit on the $b$-counter after the $\mathit{prevPos}$ position (corresponding to \textbf{(count)}).
Here $\mathit{before} (\psi_1, \psi_2)$ expresses that $\psi_2$ should hold before $\psi_1$ holds, i.e., the zero-bit must occur before the end of the $b$-count in which $\mathit{prevPos}$ occurs.

The $\forall$-player can thus challenge every single bit of the $b$-counter so the only winning strategy for the $\exists$-player is to produce a correct counter where $a$ correctly counts from $0$ to $2^n-1$ and then repeats and $b$ counts from $0$ to $2^{2^n}-1$ and then repeats.
As $\boxed{\#}$ indicates the restart of the $b$ counter, each occurrence of $\boxed{\#}$ is exactly $2^{2^n} \cdot n \cdot 2^n$ positions apart.

\subsection{Intuition for the Inductive Case}

What remains now is to stack the counter to build a tower of exponentials.
To do so, we need to adjust the previous construction. 
Most importantly, the formula constructed in the previous section depend on $n$ as we use explicit nexts to compare consecutive positions (both in the verification of the $a$ and $b$ counter). 
In our inductive formula, we need to compare by using an inductively smaller yardstick. 
Our final formula has the from $\llangle \{\exists\} \rrangle \pi_k. \cdots \llangle \{\exists\} \rrangle \pi_1. \varphi$, where each trace $\pi_i$ should contain a counter/yardstick of length $2^{[i]}(n)$.
For the trace $\pi_1$ we already saw how to verify this above. 
For each $\pi_i$ where $i > 1$ we will use the yardstick on $\pi_{i-1}$ (which we can assume to be correct) to verify the counter on $\pi_i$.

To compare any possible positions on $\pi_i$ we modify the CGS such that the $\forall$-player can decide when to start the counter/yardstick. 
Any winning strategy for the $\exists$-player thus permits counter starting at every possible timepoint. As $\pi_{i-1}$ is resolved \emph{after} $\pi_i$ we can thus compare and check arbitrary positions. 

Let $\tilde{n} := 2^{[i-1]}(n)$ i.e., the yardstick length of the trace $\pi_{i-1}$.
We now sketch the high-level idea:
As before the $\exists$-player should produce a $\tilde{n}$-bit $a$ counter and a $2^{\tilde{n}}$-bit $b$ counter. 
The verification of $a$-counter is comparatively easy. As the $\forall$-player can start the counter on $\pi_{i-1}$ at any time, we can verify all positions of the counter to be correct.
In particular, the $\boxed{\#}$ proposition on $\pi_{i-1}$ occurs every $\tilde{n}$-steps and (as the $\forall$-player can determine the start) at every possible position.
This is similar to the universal quantification over yardsticks from \cite{DBLP:phd/dnb/Rabe16}.

To verify the $b$-counter, the $\forall$-player can set the $\mathit{error}$-proposition at a location he deems wrong.
Unlike in the base case, we can however not use the formula $\mathit{prevPos}$ to identify the same position in the previous $b$-count
(Note that the size of $\mathit{prevPos}$ depends on $n$.)
The trick is now to use the trace $\pi_{i-1}$ to identify this position. 
The modified CGS has thus two "modes". The $\forall$-player can either decide to start a yardstick/counter any time he wishes (this yardstick is used to verify the spacing and $a$-counter on trace $\pi_i$), or he can mark a positions if he thinks he has found an error on trace $\pi_i$.
In this case he sets the $\mathit{error}$ proposition on trace $\pi_i$ as before and sets a proposition $\mathit{errorStart}$ on trace $\pi_{i-1}$ at the same position in the previous $b$-count (i.e., the position where we used $\mathit{prevPos}$ in the base case). 
We can now check the correctness by comparing the positions pointed to by $\mathit{errorStart}$ on $\pi_{i-1}$ and $\mathit{error}$ on $\pi_i$.
The CGS now looks as follows where the black box represents the previous CGS in which the $\exists$-player produces traces and the $\forall$-player can set the $\mathit{error}$-proposition.

\begin{center}
	\begin{tikzpicture}
		\node[draw=red, circle, label=west:{\footnotesize$\emptyset$}] at (0,0) (ns) {\footnotesize$\forall$};
		
		\node[draw=red, circle, label=west:{\footnotesize$\emptyset$}] at (1,-1) (nostart) {\footnotesize$\forall$};
		
		\node[draw=red, circle, label=west:{\footnotesize$\emptyset$}] at (1,1) (startloop) {\footnotesize$\forall$};

		\node[draw=black, circle, label=north:{\footnotesize$\{\mathit{errorStart}\}$}] at (2,1) (start) {};
		
		\node[draw=blue, circle, label=north:{}] at (3,1) (startloopi) {\footnotesize$\exists$};

		\draw[->] (ns)+(-0.4, -0.4) -- (ns);
		\draw[->] (ns) -- (startloop);
		\draw[->] (ns) -- (nostart);
		
		\draw[->] (start) -- (startloopi);
		
		\path (startloopi) edge [loop below] node {} (startloopi);
		
		\path (startloop) edge [loop above] node {} (startloop);
		\path (nostart) edge [loop below] node {} (nostart);

		\draw[->] (startloop) -- (start);

		\draw[draw=black] (4, -1) rectangle (7,1);

		\draw[->] (nostart) -- (4, 0);
		\draw[->] (startloopi) -- (4, 0);

	\end{tikzpicture}
\end{center} 

If the $\forall$-player chooses the lower branch he can require the start of a counter at any possible timepoint.
If he found an error on trace $\pi_i$ he should choose the upper branch on trace $\pi_{i-1}$ and set the $\mathit{errorStart}$ position as discussed before. 

There is of course the possibility of the $\forall$-player to set the $\mathit{errorStart}$ flag on $\pi_{i-1}$ at the wrong position, i.e, at a position that does not correspond to the same position in the previous counter compared to $\mathit{error}$ on $\pi_i$. 
We therefore allow the $\exists$-player to question the correct placement of $\mathit{errorStart}$.
The $\exists$-player should demonstrate that the $a$-counter on $\pi_i$ starting at $\mathit{errorStart} 
_{\pi_{i-1}}$ and the $a$-counter on $\pi_i$ starting at $\mathit{error}_{\pi_i}$ do not agree.
He can do so by selecting a single bit that does not agree in both counters.
After the $\forall$-player has thus signaled $\mathit{errorStart}$ on $\pi_{i-1}$, the $\exists$-player can now determine when to start his yardstick of length $\tilde{n}$. He should start this yardstick such that the $a$-counter on $\pi_i$ does not agree at the specified position.
This yardstick (the starting position of which is fixed by the $\exists$-player) now signals $\boxed{\#}$ every $\tilde{n}$ steps. The first occurrence of $\boxed{\#}_{\pi_{i-1}}$ after $\mathit{errorStart}_{\pi_{i-1}}$ and the first occurrence of $\boxed{\#}_{\pi_{i-1}}$ after $\mathit{error}_{\pi_i}$ thus point to the same position in the two $a$-counts.
To prove the choice of the $\forall$-player wrong those bits should thus be different which implies that the $\mathit{errorStart}$-position was set incorrectly by the $\forall$-player. 

As $\pi_{i-1}$ is resolved after $\pi_i$ the forall player can verify all positions of the $a$-counter on $\pi_i$ (by choosing the lower branch in the model).
If he found an error on $\pi_i$ he can set the $\mathit{error}$ flag on $\pi_i$, then choose the upper branch in the model and set the $\mathit{errorStart}$ on $\pi_{i-1}$ at the correct position (i.e., the same position in the previous $b$ count). 
If he sets this position correctly, no matter how long the $\exists$-player prolongs the start of the counter, the position of the $a$-counter will always agree.
The only winning stratgy for the $\exists$-player is thus to produce a correct $a$ and $b$-counter.

\subsection{Formal Description}

We now give a full and formal description of the model and formula.

\subsubsection{The CGS}

The counter part of our model is a CGS where the $\exists$-player can generate arbitrary sequences over $\{a, b, ;, \#, \boxed{\#}\}$ and the $\forall$-player can choose the atomic proposition $\mathit{error}$ in each step.
The starting state is labelled by $\{;,\#, \boxed{\#}\}$.
The final CGS is the the following where the box is the CGS described above.
\begin{center}
	\begin{tikzpicture}
		\node[draw=red, circle, label=west:{\footnotesize$\emptyset$}] at (0,0) (ns) {\footnotesize$\forall$};

		\node[draw=red, circle, label=west:{\footnotesize$\emptyset$}] at (1,-1) (nostart) {\footnotesize$\forall$};

		\node[draw=red, circle, label=west:{\footnotesize$\emptyset$}] at (1,1) (startloop) {\footnotesize$\forall$};

		\node[draw=black, circle, label=north:{\footnotesize$\{\mathit{errorStart}\}$}] at (2,1) (start) {};
		
		\node[draw=blue, circle, label=north:{}] at (3,1) (startloopi) {\footnotesize$\exists$};

		\draw[->] (ns)+(-0.4, -0.4) -- (ns);
		\draw[->] (ns) -- (startloop);
		\draw[->] (ns) -- (nostart);
		
		\draw[->] (start) -- (startloopi);
		
		\path (startloopi) edge [loop below] node {} (startloopi);
		
		\path (startloop) edge [loop above] node {} (startloop);
		\path (nostart) edge [loop below] node {} (nostart);

		\draw[->] (startloop) -- (start);

		\draw[draw=black] (4, -1) rectangle (7,1);

		\draw[->] (nostart) -- (4, 0);
		\draw[->] (startloopi) -- (4, 0);
	\end{tikzpicture}
\end{center} 

The label of each node, indicates which player can choose the successor (i.e., the initial fragment is turn-based).
In the structure, the $\forall$-player can decide if he wants to start the counter at some point (by choosing the lower branch) of if he wants to include a $\mathit{errorStart}$ proposition before (by choosing the upper branch). 
In case he chooses the lower part, he can wait for any given number of time and then start the counter. 
Any possible strategy for $\exists$-player can therefore \emph{not} deciding when to start and must account for every possible starting point.
We can use these yardstick (which can start at any position) to very the $a$-counter and spacing of the next bigger counter.
If the $\forall$-player chooses the upper branch, the $\exists$-player can precisely determine when to start the yardstick. 
We will use this to allow the $\exists$-player to prove that the error-positions marked by the $\forall$-player are incorrect.

\subparagraph{The final Formula}

Our final formula has the from 
$$\llangle \{\exists\} \rrangle \pi_k. \cdots \llangle \{\exists\} \rrangle \pi_1. \bigwedge_{i \in \{1, \cdots, k\}} \mathit{correct}_i$$
where $\mathit{correct}_i$ encodes the that the counter on trace $\pi_i$ is correct. In particular the symbols $\boxed{\#}$ on trace $\pi_i$ are exactly $2^{[i]}(n)$ steps apart.
More concretely, every winning strategy for $\llangle \{\exists\} \rrangle \pi_k. \cdots \llangle \{\exists\} \rrangle \pi_1. \bigwedge_{i \in \{1, \cdots, k\}} \mathit{correct}_i$ only admins paths where the $\boxed{\#}_{\pi_k}$-occurs exactly $2^{[k]}(n)$-steps apart.
We distinguish between $\mathit{correct}_1$ and $\mathit{correct}_i$ for $i > 1$.

\subsubsection{The $i = 1$ case}

In the case of $i = 1$ we wish to guarantee that the counter on $\pi_i$ separates each $\boxed{\#}$ by exactly $2^{2^n}\cdot n \cdot 2^n$.
We follows the intuitions from before for constructing the base counter.
Formally,
\begin{align}
	\mathit{correct}_1 := 
	&\emptyset_{\pi_1} \mathcal{U} (\boxed{\#}_{\pi_1} \land \#_{\pi_1} \land ;_{\pi_1} ) \land\tag*{(1)}\\
	& \ltlg \left((\boxed{\#}_{\pi_1} \Rightarrow \#_{\pi_1}) \land (\#_{\pi_1} \Rightarrow ;_{\pi_1}) )\right) \land \ltlg ((;_{\pi_1}) \Rightarrow \ltlnext^n (;_{\pi_1}) \land \tag*{(2)}\\
	&\ltlg(\#_{\pi_1} \Leftrightarrow \ltlg^{\leq n - 1} \neg a_{\pi_1} ) \tag*{(3)}\\
	&\ltlg\left( (a_{\pi_1} \Leftrightarrow \ltlnext^n a_{\pi_1}) \Leftrightarrow \mathit{before}(;_{\pi_1}, \neg a_{\pi_1})  \right) \land \tag*{(4)}\\
	&\ltlg\left(\boxed{\#}_{\pi_1} \Leftrightarrow \left((;_{\pi_1} \Rightarrow \neg b_{\pi_1}) \mathcal{U} (\ltlnext \#_{\pi_1})\right) \right) \land \tag*{(5)}\\
	&\mathit{exactlyOnce}(\mathit{error}_{\pi_1}) \Rightarrow \mathit{refuteError}_1 \tag*{(6)}
\end{align}
(1) specifies that the first atomic proposition to appear must be $\boxed{\#}$, $\#$. Note that, as evident from the CGS, the counter must not start at position $0$ but may be started at any later point. 
We abbreviate $\emptyset$ to indicate that no proposition holds.
(2) specifies that the spacing of the separating proposition $;, \#$ and $\boxed{\#}$ is correct. In particular $;$ must be set every $n$ steps and whenever $\boxed{\#}$ is set so is $\#$ and when $\#$ is set so is $;$.
The exact placement of $\#$ and $\boxed{\#}$ is specified via the counter (see below).
(3) encodes that the a $\#$ letter appears exactly when the counter value of $a$ is zero. As (1) requires to start with a $\#$ this also requires the counter to be started at $0$.
(4) requires that the $a$-counter behaves as indented. 
Here we use the observation \textbf{(count)} from before.
We encode that $a$ agrees at two positions that are $n$ apart if and only if $\neg a$ holds for some lower bit, i.e., before $;$ indicates the start of the next counter. 
Here we use $\mathit{before}(\psi_1, \psi_2)$ which indicates that before $\psi_1$ holds for the first time, $\psi_2$ has been true.
Formally
$$\mathit{before}(\psi_1, \psi_2) = (\neg \psi_1) \mathcal{U} (\psi_2 \land \neg \psi_1)$$
(5) states that the $b$-counter is zero whenever $\boxed{\#}$ holds. Together with (1) this requires that the $b$-counter is initialized at $0$. 
Note that we only consider the positions of the $b$ counter to be those were $;$ holds. 
Finally (6) specifies the correct counting of the $b$-counter. As discussed before we use the $\forall$-player to mark any errors by setting $\mathit{error}$.
We only consider the case where the $\forall$-player set this flag exactly once.
$\mathit{exactlyOnce}(\psi)$ specifies that $\psi$ holds exactly once, i.e., 
$$\mathit{exactlyOnce}(\psi) := (\neg \psi) \mathcal{U} (\psi \land \ltlnext \ltlg(\neg \psi))$$
$\mathit{refuteError}_1$ then specifies that the $\mathit{error}$ flag set by the $\forall$-player does not mark an error, i.e., is a ``false alarm``.
To specify $\mathit{refuteError}_1$ we use a formula that specifies the same position in the previous $b$-count. 
Formally
\begin{align}
	\mathit{prevPos} := &;_{\pi_1} \land \tag*{(1)} \\
	&(\neg \#) \mathcal{U} (\#_{\pi_1} \land \ltlnext(\neg \#_{\pi_1}) \mathcal{U} \mathit{error}_{\pi_1}) \land \tag*{(2)}  \\ 
	&\bigwedge_{i \in \{0, \cdots, n-1\}} \ltlnext^i a_{\pi_1} \leftrightarrow \ltlg (\mathit{error} \Rightarrow \ltlnext^i a_{\pi_1}) \tag*{(3)}
\end{align}
Here (1) specifies that the position aligns with $;$, (2) that it occurs in the previous $b$-word, i.e., the position is separated from $\mathit{error}$ by exactly one $\#$. 
(3) states that the counter of $a$ (at the next $n$ positions) agrees with the one at the $\mathit{error}$-position. 
It is easy to see that if the $a$-counter is correct and $\mathit{error}$ occurs at some point, then $\mathit{prevPos}$ holds at exactly one location.
We can now express $\mathit{refuteError}_1$ via
$$
\ltlg \left(\mathit{prevPos} \Rightarrow \left( (b_{\pi_1} \Leftrightarrow \ltlg (\mathit{error}_{\pi_1} \Rightarrow b_{\pi_1}))  \Leftrightarrow \mathit{before}(\#_{\pi_1}, ;_{\pi_1} \land \neg b_{\pi_1}) \right) \right)
$$
This states that the $b$ counter is correct, i.e, it agrees at the two positions indicated via the formula $\mathit{prevPos}$ and the location of $\mathit{error}$ exactly if there is a $0$ after the $\mathit{prevPost}$ position. This can be express via $\mathit{before}(\#_{\pi_1}, ;_{\pi_1} \land \neg b_{\pi_1})$, i.e., at some position where $;$ holds (note that we only consider the $b$ proposition at those locations) we must have $\neg b$ before the current word end, i.e., before $\#$.

\subsubsection{The case for $i > 1$}

In the case where $i > 1$ we can no longer explicitly compare positions via the \LTL~next operator. 
In particular we can not verify the $a$-counter and the spacing of $;$ easily. 
Moreover, we can no longer use the formula $\mathit{prevPos}$ to give us the same position as $\mathit{error}$ in the previous word. 

Instead we use the indicatively defined yardstick on trace $\pi_{i-1}$. 
Note that, by construction of our CGS, the $\exists$-player can not decide when the counter should  be started. Every winning strategy on trace $\pi_{i-1}$ thus allows for correct counter starting at every possible position. 
As we can assume that the counter on $\pi_{i-1}$ is correct, we know that the $\boxed{\#}$-proposition on $\pi_{i-1}$ are exactly $\tilde{n} := 2^{[i-1]}(n)$ steps apart.
Our formula $\mathit{correct}_i$ now largely agrees with the case for $i = 1$. The only difference is that we use $\pi_{i-1}$ for measurements and use the proposition $\mathit{errorStart}$ on trace $\pi_{i-1}$ instead of the $\mathit{prevPos}$ formula constructed before. 
Define:
\begin{align}
	\mathit{correct}_i := 
	&\emptyset_{\pi_i} \mathcal{U} (\boxed{\#}_{\pi_i} \land \#_{\pi_i} \land ;_{\pi_i} ) \land\tag*{(1)}\\
	&\ltlg \left((\boxed{\#}_{\pi_i} \Rightarrow \#_{\pi_i}) \land (\#_{\pi_i} \Rightarrow ;_{\pi_i}) )\right) \land \ltlg (\boxed{\#}_{\pi_{i-1}} \land ;_{\pi_i} \Rightarrow \mathit{next}(\boxed{\#}_{\pi_{i-1}}, ;_{\pi_i}) ) \land \tag*{(2)}\\
	&\ltlg(\#_{\pi_1} \Leftrightarrow ((\neg a_{\pi_1}) \mathcal{U} (\ltlnext ;_{\pi_i})) ) \tag*{(3)}\\
	&\ltlg\left( \left(a_{\pi_i} \land \boxed{\#}_{\pi_{i-1}} \land \mathit{next}(\boxed{\#}_{\pi_{i-1}}, a_{\pi_i})\right) \Rightarrow  \mathit{before}( ;_{\pi_i}, \neg a_{\pi_i})  \right) \land \tag*{(4.1)}\\
	&\ltlg\left( \left(\neg a_{\pi_i} \land \boxed{\#}_{\pi_{i-1}} \land \mathit{next}(\boxed{\#}_{\pi_{i-1}}, \neg a_{\pi_i})\right) \Rightarrow  \mathit{before}( ;_{\pi_i}, \neg a_{\pi_i})  \right) \land \tag*{(4.2)}\\
	&\ltlg\left( \left(\neg a_{\pi_i} \land \boxed{\#}_{\pi_{i-1}} \land \mathit{next}(\boxed{\#}_{\pi_{i-1}}, a_{\pi_i})\right) \Rightarrow  a_{\pi_1} \mathcal{U} (\ltlnext ;_{\pi_i}) \right) \land \tag*{(4.3)}\\
	&\ltlg\left( \left(a_{\pi_i} \land \boxed{\#}_{\pi_{i-1}} \land \mathit{next}(\boxed{\#}_{\pi_{i-1}}, \neg a_{\pi_i})\right) \Rightarrow  a_{\pi_1} \mathcal{U} (\ltlnext ;_{\pi_i}) \right) \land \tag*{(4.4)}\\
	&\ltlg\left(\boxed{\#}_{\pi_i} \Leftrightarrow \left((;_{\pi_i} \Rightarrow \neg b_{\pi_i}) \mathcal{U} (\ltlnext \#_{\pi_i})\right) \right) \land \tag*{(5)}\\
	&\mathit{exactlyOnce}(\mathit{error}_{\pi_i}) \land \mathit{exactlyOnce}(\mathit{errorStart}_{\pi_{i-1}})  \Rightarrow \mathit{refuteError}_i \tag*{(6)}
\end{align}
As before (1) specifies that the counter starts correctly.
(2) again specifies the spacing of $;$.
Note how we use that yardstick to declare a spacing of $\tilde{n}$. As $\pi_{i-1}$ is resolved after $\pi_i$ and the $\exists$-player can not control when a counter starts, it must holds for all possible starting points of the counter.
Here we use the helper $\mathit{next}(\psi_1, \psi_2)$ which specifies that the very next time $\psi_1$ holds $\psi_2$ holds as well.
In (2) we thus express that if the yardstick on $\pi_{i-1}$ aligns with a $;$ then a $;$ occurs the next time the yardstick holds. $;$ thus occurs exactly every $\tilde{n}$ many steps. 
(3) specifies that $\#$ occurs exactly if the $a$-counter is zero. Together with (1) this implies a start of the counter at $0$.  
(4) specifies the correctness of the $a$-counter. 
Unlike, in the case for $i = 1$ we split this statement into (4.1) to (4.4) for each possible combination of $a$ holding at the current position and the position $\tilde{n}$ steps apart. 
If they agree (i.e., $a$ holds now and in $\tilde{n}$ steps or does not hold now and does not hold in $\tilde{n}$ steps) then a $0$ (i.e, $\neg a$) muss occur before the end of the counter. Otherwise (in cases 4.3 and 4.4) all successor bits must be $1$. 
Note again that as we consider every possible counter start on $\pi_{i-1}$ all positions of the $a$-counter must be correct. 
(5) specifies that the $\boxed{\#}$ symbol agree exactly when the next $b$-count is $0$. 

The tricky bit is now the refute error statement, i.e., the verification of the $b$-counter in (6). We only want to verify this when the $\forall$-player claims to have found an error, i.e., he must have set $\mathit{error}$ proposition on $\pi_i$ and also set the $\mathit{errorStart}$ proposition at $\pi_{i-1}$.
Note that by construction of the CGS, the occurrence of $\mathit{errorStart}$ on $\pi_{i-1}$ implies that the $\exists$-player can determine when to start his counter/yardstick. 
In the formula we will require that he must start his counter while still in the current $a$-count pointed to by $\mathit{errorStart}$.
We define
\begin{align*}
	\mathit{refuteError}_i := &\ltlg ( \mathit{errorStart}_{\pi_{i-1}} \Rightarrow \mathit{before}(;_{\pi_i}, \boxed{\#}_{\pi_{i-1}}) ) \land \tag*{(1)}\\
	&(\mathit{countCorrect}_i \lor \mathit{noValidPos}_i) \tag*{(2)}
\end{align*}
Here (1) requires that the $\exists$-player starts the counter soon enough. In the CGS whenever the $\forall$-player signals $\mathit{errorStart}$ the $\exists$-player can choose when the counter should start.
We want the start of the counter to signal the position at which the $a$-count at the positions $\mathit{error}_{\pi_i}$ and $\mathit{errorStart}_{\pi_{i-1}}$ do not align (see details below).
We thus want the $\exists$-player to start the counter while still in the $a$-count directly after  $\mathit{errorStart}_{\pi_{i-1}}$.
(1) now requires the counter to start before $;_{\pi_i}$ holds, i.e., the current $a$-count ends. 
(2) states that the positions pointed to by $\mathit{error}_{\pi_i}$ and $\mathit{errorStart}_{\pi_{i-1}}$ are no error by the $\exists$-player.
This can be for two reasons. Either the $b$ counter at those positions is correct, or the positions chosen by the $\forall$-player was not valid, i.e., did not point to the same position at two consecutive $b$-counts. 

We begin with $\mathit{countCorrect}_i$ which indicates that the $\exists$-player did not make a mistake at the positions marked by $\mathit{error}_{\pi_i}$ and $\mathit{errorStart}_{\pi_{i-1}}$.
Formally
\begin{align*}
	\mathit{countCorrect}_i := &\left(\ltlg (\mathit{errorStart}_{\pi_{i-1}} \Rightarrow b_{\pi_i}) \Leftrightarrow  \ltlg (\mathit{error}_{\pi_{i}} \Rightarrow b_{\pi_i})\right) \Leftrightarrow \\
	&\left(\ltlg (\mathit{errorStart}_{\pi_{i-1}} \Rightarrow   \mathit{before}(;_{\pi_i}, \neg b_{\pi_i}))\right)
\end{align*}
i.e., the bits at the two positions agree if and only if there is a $0$ bit after the earlier positions (pointed to by $\mathit{errorStart}_{\pi_{i-1}}$).

The second alternative is, that the $\forall$-player did not place the $\mathit{errorStart}_{\pi_{i-1}}$ at the correct location, i.e., the position of $\mathit{errorStart}_{\pi_{i-1}}$ and $\mathit{error}_{\pi_i}$ are not the same positions in two consecutive $b$-counts.
This is exactly the case when they either differ by more than one $\boxed{\#}$-symbol or the $a$-counter at both positions are not the same.
We now define $\mathit{noValidPos}_i$ to hold if either of those cases occur.
{\small\begin{align*}
		\mathit{noValidPos}_i := &\ltlg (\mathit{errorStart}_{\pi_{i-1}} \Rightarrow \neg \mathit{onceBefore}(\#_{\pi_i}, \mathit{error}_{\pi_i})  ) \tag*{(1)} \lor \\
		&\ltlg (\mathit{errorStart}_{\pi_{i-1}} \Rightarrow \mathit{next}(\boxed{\#}_{\pi_{i-1}}, a_{\pi_{i}}) ) \not\Leftrightarrow \ltlg (\mathit{error}_{\pi_i} \Rightarrow \mathit{next}(\boxed{\#}_{\pi_{i-1}}, a_{\pi_{i}}) ) \tag*{(2)}
\end{align*}}%
(1) encodes that $\mathit{errorStart}_{\pi_{i-1}}$ and $\mathit{error}_{\pi_i}$ are not in two consecutive $b$-counts, i.e, the separator $\#$ occurs more than once or not a single time between the positions of $\mathit{error}_{\pi_i}$ and $\mathit{errorStart}_{\pi_{i-1}}$.
(2) encodes that the $a$-count at the two positions does not match. Recall that the $\exists$-player can choose the start of counter (if the $\forall$-player marked a position via $\mathit{errorStart}$). We require that the counter was started at exactly the positions that does not match in the $a$ counter. 
As the yardstick (counter) on $\pi_{i-1}$ is exactly of length $\tilde{n}$ (2) compares the same position in the two $a$-counts. 
The first time the yardstick holds after the $\mathit{errorStart}_{\pi_{i-1}}$ and the first time after $\mathit{error}_{\pi_i}$, the $a$ value on $\pi_i$ should thus not agree.

\subsection{Lower Bound Proof}

\newtheorem*{relb}{Restatement of Theorem \ref{theo:lb}}
\begin{relb}
	Model checking of a linear \HyperATLS~formula with $k$ complex and $l$ simple quantifiers is $(2k + l - 1)$-\EXPSPACE-hard in the size of the formula, provided $l \geq 1$.
\end{relb}
\begin{proof}
	In the previous construction we constructed a formula of the form  $\llangle A_1 \rrangle \pi_1. \cdots \llangle A_k \rrangle \pi_k. \psi$ can encode a yardstick of length $\mathcal{T}(2k, n)$ for some fixed $n$. 
	A formula of the form $\exists \pi. \llangle A_1 \rrangle \pi_1. \cdots \llangle A_k \rrangle \pi_k. \psi$ can now encode the termination of a $\mathcal{T}(2k, n)$-space bounded Turing maschine by follwoing the proof of \cite{DBLP:phd/dnb/Rabe16}.
	Intuitively, the existential quantifier enumerates configurations of the Turing machine and we use the yardstick to verify the correctness of two consecutive configurations (see \cite[Lemma 5.6.3]{DBLP:phd/dnb/Rabe16} for details).
	
	In case $l > 1$ we can combine our counter approach with the simpler construction for \HyperLTL~given in \cite{DBLP:phd/dnb/Rabe16}.
	In particular with $k$-complex and $l$ simple quantifier we encode a yardstick of length $\mathcal{T}(2k+l-1, n)$ by extending our doubly exponential construction (for complex quantifier) with the construction from \cite{DBLP:phd/dnb/Rabe16} for simple quantification.
	(Note that in \HyperLTL, $l$ (simple) quantifiers can encode a yardstick of $l-1$ exponents).
	As in \cite{DBLP:phd/dnb/Rabe16} the resulting formula begins with an existential quantifier and can therefore encode the existential quantifier used to encode the Turing machine.
\end{proof}

We note that \refTheo{lb} does not give a satisfactory lower bound for the case where $l = 0$.  
The best lower bound we can currently show is $(2k - 2)$-\EXPSPACE~by simply regarding one of the complex quantifiers as simple. 
The exact results we know so far (in the size of the formula), are summarized in \refTable{uplow}.

\begin{table}
	
	\begin{tabular}{|c|c|c|c|}
		\hline
		& $k = 0$ & $k=1$ & $k>1$\\
		\hline
		$l=0$ & \xmark & $2$-\EXPTIME-complete \cite{DBLP:journals/jacm/AlurHK02} & \thead{$(2k-2)$-\EXPSPACE-hard\\ in $2k$-\EXPTIME}\\
		\hline
		$l > 0$ & $(l-1)$-\EXPSPACE-complete \cite{DBLP:conf/cav/FinkbeinerRS15} & \multicolumn{2}{c|}{$(2k+l-1)$-\EXPSPACE-complete}\\
		\hline
	\end{tabular}
	\vspace{0.2cm}
	\caption{Upper and lower bounds on the complexity of \HyperATLS~model checking with $k$ complex and $l$ simple quantifiers in the size of the formula.
	The \xmark~indicates that there is no formula within this fragment (i.e,, every \HyperATLS-formula has at least on quantifier).
	}\label{tab:uplow}
\end{table}

\section{Additional Material for \refSection{rel}}

\newtheorem*{realtl}{Restatement of Theorem \ref{theo:altl}}
\begin{realtl}
	For any Kripke structure $\mathcal{G}$ and \AHLTL~formula of the form $\forall \pi_1. \cdots \forall \pi_n. \mathbf{E}. \varphi$ it holds that if $\mathcal{G}_\mathit{stut} \models [\llangle \mathit{sched} \rrangle \pi_1. \cdots. \llangle \mathit{sched} \rrangle \pi_n]~\varphi \land \bigwedge_{i \in \{1, \cdots, n\}} \mathit{fair}_{\pi_i}$ \textbf{(1)} then $\mathcal{G} \models_{\texttt{AHLTL}} \forall \pi_1. \cdots \forall \pi_n. \mathbf{E}. \varphi$ \textbf{(2)}.
	If $\varphi$ is an admissible formula, \textbf{(1)} and \textbf{(2)} are equivalent.
\end{realtl}
\begin{proof}
	We first consider the first direction. Let 
	$$\mathcal{G}_\mathit{st} \models [\llangle \mathit{sched} \rrangle \pi_1. \cdots. \llangle \mathit{sched} \rrangle \pi_n]~\varphi \land \bigwedge_{i \in \{1, \cdots, n\}} \mathit{fair}_{\pi_i}$$%
	 and let $f_\mathit{sched}^i$ for $1 \leq i \leq n$ be the winning strategy for the scheduler.
	We construct a trajectory as follows: Assume fixed paths $t_1, \cdots t_n$. For each timestep $m \in \mathbb{N}$ we query $f^i_\mathit{sched}(\mathit{zip}(t_1, \cdots, t_n)[0, m])$. If the strategy chooses to stutter, traces $j$ does not progress in timestep $m$. 
	The resulting trajectory satisfies te formula as the strategy for $\mathit{sched}$ is wining. Due to the additional constraint $\bigwedge_{i \in \{1, \cdots, n\}} \mathit{fair}_{\pi_i}$, a system can not be stuttered forever so the resulting trajectory is fair (as required in the \AHLTL~semantics \cite{hyperaltl}).
	So $\mathcal{G} \models_{\texttt{AHLTL}} \forall \pi_1. \cdots \forall \pi_n. \mathbf{E}. \varphi$.

	Now lets show the other direction. Assume $\mathcal{G} \models_{\texttt{AHLTL}} \forall \pi_1. \cdots \forall \pi_n. \mathbf{E}. \varphi$ where $\varphi$ is admissible. $\varphi$ thus only consist of stutter invariant properties on a single trace (which as satisfied by any trace in the system) and a conjunction of formulas of the from $\bigwedge_{a \in P} a_{\pi_i} \leftrightarrow a_{\pi_j}$ for a set of atomic propositions $P$.
	We claim that in case there exists a trivial wining strategy. Due to the constraints on admissible formulas (which must hold at each timepoint) the future traces are irrelevant to fix the stuttering. 
	The winning strategy for the stuttering player in each copy is thus the following: Check if there exists any way to progress some of the traces without violating the property. If this is possible, allow a proper step. Otherwise stutter.
	The resulting strategy thus stutters a path until the earliest possible time point at which a progress is possible. 
	The proof of \cite{hyperaltl} identified this point via a \LTL formula in their reduction to synchronous \HyperLTL.
	The crucial observation is that if it is \emph{possible} to advance the system, it is always a dominant strategy to advance it and not wait longer. 
	As $\mathcal{G} \models_{\texttt{AHLTL}} \forall \pi_1. \cdots \forall \pi_n. \mathbf{E}. \varphi$ the resulting strategy is winning, so $\mathcal{G}_\mathit{st} \models [\llangle \mathit{sched} \rrangle \pi_1. \cdots. \llangle \mathit{sched} \rrangle \pi_n]~\varphi \land \bigwedge_{i \in \{1, \cdots, n\}} \mathit{fair}_{\pi_i}$ as required.
\end{proof}

\section{Details on Prototype Implementation}

In this section, we discuss our prototype implementation.
We structures this section as follows:
We begin by showing how we can reduce a MSCGS and an \HyperATLS~formula of the from $[\llangle A_1 \rrangle \pi_1. \cdots \llangle A_k \rrangle \pi_k]~\varphi$ where $\varphi$ is quantifier free (i.e., an \LTL~formula) to a parity game.
This avoids the construction and emptiness check of an alternating automata.

Our tool, can, in theory, check arbitrary MSCGS. As a more convenient formalism we constructed a frontend that accepts a programs and automatically translates it into a MSCGS which can the be checked. 
We present a simple imperative programming language over bit vectors and present a simple semantics into a CGS. 

\subsection{Parity Games}

A parity game is a tuple $\mathit{PG} = (V, E, v_0, p, c)$ where $V$ is a set of vertices $E \subseteq V \times V$ as set of edges, $v_0$ a dedicated initial vertex (this is omitted in most definitions in the literature), $p : V \to \{0, 1\}$ assigning each vertex to a player $0$ or player $1$ and $c : V \to \mathbb{N}$ labels each vertex with a colour. 
We assume that for every $v \in V$ there is a $v' \in V$ with $(v, v') \in E$.
We write $V_0 := V \cap p^{-1}(0)$ and $V_1 := V \cap p^{-1}(1)$ for the vertices controlled by player $0$ and $1$ respectively.

A strategy for a player now maps finite histories to successor states.
For states in $V_0$, player $0$ can choose the successor and similarly for states in $V_1$.
It is known that parity games admit positional strategies, i.e., a strategy can ignore the history of a play and decide on a successor based purely on the current vertex \cite{DBLP:journals/tcs/Zielonka98}.
Furthermore parity games are determined, i.e, for each vertex there exists a unique winning player \cite{borel}.
We therefore simply introduce positional strategies.

A positional strategy for player $l \in \{0, 1\}$ is mapping $f_l : V_l \to V$ such that for each $v \in V_l$, $(v, f_l(v)) \in E$.
Given a strategy $f_0$ for player $0$ and a strategy $f_1$ for player $1$ and a vertex $v$ we define $\mathit{path}(v,f_0, f_1) \in V^\omega$ as the unique run in $u_0u_1\cdots$ where $u(i) = v$ and for each $i \in \mathbb{N}$ we have that $u(i+1) \in f_l(u(i))$ where $l = p(u(i))$.
We call $u$ winning for player $0$ if the smallest priority occurring infinitely many times in even. 
We define $W_0$ as all vertices $v$ such that player $0$ wins (i.e., there exists a $f_0$ such that  for all $f_1$, the path $\mathit{path}(v,f_0, f_1)$ is winning for player $0$). We define $W_1 := V \setminus W_0$ (Recall that parity games are determined).
We say that player $0$ wins a parity game $\mathit{PG} = (V, E, v_0, p, c)$ if $v_0 \in W_0$.

\subsection{Reduction to Parity Games}

Assume we have a \HyperATLS~formula in the parallel-composition fragment.
We consider the general case where the formula uses extended path quantification and annotates quantifiers with game structures.
Let $\varphi = [\llangle A_1 \rrangle_{\mathcal{G}_1} \, \pi_1. \cdots \llangle A_n \rrangle_{\mathcal{G}_n} \, \pi_n]. \psi$ where $\mathcal{G}_i$ are MSCGSs and $\psi$ is quantifier free (i.e., a \LTL~formula).

Let $\mathcal{G}_i = (S_i, s_{i, 0}, \agents_i, \mathscr{M}_i, \delta_i, d_i, \atomic_i, L_i)$.
We assume that $\atomic_i \cap \atomic_j = \emptyset$ for all $i \neq j$.
Write $\Sigma := S_1 \times \cdots \times S_n$.
We begin by translating the \LTL~formula $\psi$ to a deterministic parity automaton $\mathcal{A}_\psi = (Q, q_0, \Sigma, \rho, c)$ where $\rho : Q \times \Sigma \to Q$, is the a deterministic transition function. 
We now describe the parity game construction. The stages in the MSCGS are naturally encoded as states in a parity game that model the incremental move selection.
Nodes in which agents controlled by a strategy (the ones in one of the $A_i$) select their moves are then controlled by player $0$ and other nodes by player $1$.
We describe the components of the resulting PG individually:

\subparagraph{Vertices}
The set of vertices is defined via 
$$V := (Q \times \prod_{i=1}^{n} S_i) \cup (Q \times \prod_{i=1}^{n} S_i \times \prod_{i=1}^{n} (\agents_i \rightharpoonup \mathscr{M}_i) \times \mathbb{N} \times \mathbb{B})$$
The initial vertex is $(q_0, s_{1, 0}, \cdots, s_{n, 0})$.
A vertex in $(Q \times \prod_{i=1}^{n} S_i \times \prod_{i=1}^{n} \agent_i \to \mathscr{M}_i \times \mathbb{N} \times \mathbb{B})$ has the from $(q, s_1, \cdots, s_n, \sigma_1, \cdots, \sigma_n, l, b)$. Here $q$ is the current automaton state, $s_1, \cdots, s_n$ the current state of the self-composed system, $\sigma_1, \cdots, \sigma_n$ are the partial assignments chosen so far (corresponding to stages of the MSCGS), $l$ is the current state and $b$ indicates if its current the turn of the agents controlled by a strategy to choose a move or the agents that are adversarial.
A node of the from $(q, s_1, \cdots, s_n, \sigma_1, \cdots, \sigma_n, l, b)$ is thus controlled by player $b$.
Nodes in $ (Q \times \prod_{i=1}^{n} S_i)$ can be controlled by either player as they have a unique outgoing edge (see below).
The colour of a node  $(q, s_1, \cdots, s_n, \sigma_1, \cdots, \sigma_n, l, b)$ or $(q, s_1, \cdots, s_n)$ is given by $c(q)$, i.e., the colouring of state $q$ in the DPA.

\subparagraph{Edges}
For a set of agents $A \subseteq  \agents_i$ and a level $l$ we define $\mathit{level}_i(A, l) := A \cap d^{-1}(l)$.
Edges in the parity game have the following form:

\begin{itemize}
	\item There is a edge from $(q, s_1, \cdots, s_n)$ to $(q', s_1, \cdots, s_n, \emptyset, \cdots, \emptyset, 0, \top)$ whenever $q' = \rho(q, (s_1, \cdots, s_n))$.
	Here $\emptyset$ denotes the partial move vector with empty domain.
	Such transitions model a step of the automaton and indicate that selection of moves begins in stage $0$.

	\item There is an edge from $(q, s_1, \cdots, s_n, \sigma_1, \cdots, \sigma_n, l, b)$ to $(q, s'_1, \cdots, s'_n)$ whenever $\sigma_1, \cdots, \sigma_n$ are complete (i.e., the domain of $\sigma_i$ is $\agents_i$) and $s'_i = \delta_i(s_i, \sigma_i)$ for all $i$.
	These transition model step of the system after the move vector was fixed for all agents in all copies.

	\item There is an edge from $(q, s_1, \cdots, s_n, \sigma_1, \cdots, \sigma_n, l, b)$ to $(q, s_1, \cdots, s_n, \sigma'_1, \cdots, \sigma'_n, l', \neg b)$ whenever some of the $\sigma_i$ is not total.
	If $b = \top$ the strategy profile should be updated by agents in $A$ in stage $l$. If $b = \bot$ the counteragents (the ones not in $A$) in stage $l$ choose their move.
	$\neg b$ negates the flipped value of $b$.
	
	\begin{itemize}
		\item If $b = \top$ we require that $\sigma'_i$ is obtained from $\sigma_i$ by fixing moves for all agents in $\mathit{level}_i(A_i, l)$. Formally $\mathit{dom}(\sigma'_i) = \mathit{dom}(\sigma_i) \cup \mathit{level}_i(A_i, l)$ and $\sigma_i =_{\mathit{level}_i(A_i, l)} \sigma'_i$. Here $f =_C g$ means that $f$ and $g$ agree on all inputs that are \emph{not} in $C$.
		In this case $l' = l$.
		
		\item If $b = \top$ we require that $\sigma'_i$ is obtained from $\sigma_i$ by fixing moves for all agents in $\mathit{level}_i(\overline{A_i}, l)$. Formally $\mathit{dom}(\sigma'_i) = \mathit{dom}(\sigma_i) \cup \mathit{level}_i(\overline{A_i}, l)$ and $\sigma_i =_{\mathit{level}_i(\overline{A_i}, l)} \sigma'_i$.
		In this case $l' = l + 1$.
	\end{itemize}
\end{itemize}

The edge relation thus requires the agents to fix their moves incrementally.
Initially we have a node of the form $(q', s_1, \cdots, s_n, \emptyset, \cdots, \emptyset, 0, \top)$. Thus the agents in stage $0$ contained in the agents set (as $b = \top$) fix their moves.
After they have fixed their move, the agents not included on the strategy set choose their move. We then increase the stage by one, so the agent controlled by a strategy in stage $1$ choose their moves. This process continuous until all agents fixed their move in which case we can compute the successor state of the system.
As all agents belong to some stage, the incremental move selection is finite, so a proper transition of the system is made eventually.

Note that the state space of the game is infinite as the stage counter is unbounded. As there must be a maximal stage at which all agents made their move, the fragment reachable from the fixed initial state is, however, finite.
Note that the size of the reachable fragment is polynomial in the size of the CGS and exponential in the number of agents.
It is easy to see that every winning strategy for player $0$ directly corresponds to the a winning strategy in the \HyperATLS~semantics.
We therefore get:

\begin{proposition}
	For any MSCGS $\mathcal{G}_1, \cdots, \mathcal{G}_n$ and $\varphi = [\llangle A_1 \rrangle_{\mathcal{G}_1} \, \pi_1. \cdots \llangle A_n \rrangle_{\mathcal{G}_n} \, \pi_n]~\psi$ where $\psi$ is quantifier free, we have that $\varphi$ is satisfied if any only if the parity game defined above is won by player $0$.
\end{proposition}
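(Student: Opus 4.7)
My plan is to prove both directions of the equivalence separately, relying on the fact that every infinite play of the parity game deterministically induces, via projection onto the $(s_1,\ldots,s_n)$-components of its vertices, an $n$-tuple of system paths, and that the $Q$-component of the vertex records the unique run of $\mathcal{A}_\psi$ on the zipping of this tuple. In both directions, strategies are transported between the two games by simulation; I do not need positional determinacy, but it is available if helpful.

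\textbf{Forward direction.} Suppose $\varphi$ holds and fix a winning family $\{f_\xi\}_{\xi \in A_i,\, 1 \le i \le n}$. I construct a (history-dependent) strategy $F$ for player $0$ as follows: given a play prefix that has just entered a player-$0$ vertex $(q, s_1, \ldots, s_n, \sigma_1, \ldots, \sigma_n, l, \top)$, extract from the prefix the sequence of $(Q \times \prod_i S_i)$-vertices visited so far, which uniquely lists the tuple history $(s_1^{(0)}, \ldots, s_n^{(0)}) \cdots (s_1^{(m)}, \ldots, s_n^{(m)})$. For each $\xi \in \mathit{level}_i(A_i, l)$, let $F$ extend $\sigma_i$ by $\sigma_i'(\xi) := f_\xi$ evaluated on this history together with (in the MSCGS case) the restriction of $\sigma_i$ to agents in strictly lower stages. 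Any infinite play consistent with $F$ then projects to an $n$-tuple in $\mathit{out}(\mathcal{G}_1, \ldots, \mathcal{G}_n, (s_{1,0}, \ldots, s_{n,0}), \{f_\xi\})$, which by hypothesis satisfies $\psi$. Since $\mathcal{A}_\psi$ is the deterministic parity automaton for $\psi$, its unique run on the zipping is accepting; as the $q$'s visited along the play are exactly this run and the colouring of parity-game vertices coincides with $c(q)$, the play is won by player $0$.

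\textbf{Backward direction.} Conversely, assume player $0$ has a winning strategy $F$. For each $\xi \in A_i$, I define $f_\xi$ by \emph{simulation}: given a history $h = (s_1^{(0)},\ldots,s_n^{(0)}) \cdots (s_1^{(m)},\ldots,s_n^{(m)})$ and (in the MSCGS case) a prefix $\tau$ of earlier-stage moves, walk through the unique parity-game play prefix whose $(Q \times \prod_i S_i)$-projection is $h$ and whose partial move vectors realise the moves recorded in $\tau$ together with everything $F$ and the opposing choices have already fixed; output the move that $F$ prescribes at the vertex at which $\xi$ is called to act. Conversely, every outcome $(t_1,\ldots,t_n)$ consistent with the resulting $\{f_\xi\}$ lifts, by reinserting the corresponding move-selection micro-steps, to an infinite parity-game play consistent with $F$. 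Since $F$ is winning, the parity condition holds, $\mathcal{A}_\psi$ accepts the zipping, and therefore $[\pi_i \mapsto t_i]_{i=1}^n \models \psi$.

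\textbf{Main obstacle.} The principal point of care is the granularity mismatch: a single system step corresponds to a finite \emph{block} of parity-game transitions consisting of the automaton read followed by all stages of the incremental move selection for all $n$ copies. One must verify that such a block is always well-defined and terminates, which follows because the codomain of each $d_i$ is finite, and that the projection and lifting operations between MSCGS plays and parity-game plays are mutually inverse up to this block structure. The fact that an agent in $\mathcal{G}_i$ may condition on moves of strictly earlier-stage agents within $\mathcal{G}_i$ is faithfully encoded by the partial move-vector components $\sigma_1, \ldots, \sigma_n$ carried in each parity-game vertex, so the translation preserves precisely the information each agent is entitled to observe. Once these correspondences are in place, the equivalence of the two acceptance conditions is transparent and the proposition follows.
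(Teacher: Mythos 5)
Your argument is essentially the one the paper intends: the paper itself offers no proof beyond the remark that winning strategies for player $0$ ``directly correspond'' to winning strategy profiles in the \HyperATLS~semantics, and your proposal is the natural elaboration of exactly that correspondence --- transporting strategies in both directions along the block structure (automaton step, then staged move selection), using that the $Q$-component tracks the unique run of the deterministic automaton on the zipped outcome and that the colouring is inherited from it. Both directions and the ``main obstacle'' you identify are the right ones.

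One point deserves repair. In the backward direction you define $f_\xi$ by walking ``the unique parity-game play prefix'' whose projection is the state history $h$. This prefix is \emph{not} unique in general: the intermediate vertices of past blocks record full move vectors, and since $\delta_i$ need not be injective, distinct adversarial move choices in a past block can yield the same successor states and hence the same $h$, while an arbitrary history-dependent $F$ may prescribe different moves on the two corresponding prefixes. So $f_\xi(h,\tau)$ is not well-defined as stated. The fix is cheap and you already have it in hand: invoke positional determinacy of parity games (stated in the paper's preliminaries), take $F$ positional, and note that the current vertex is determined by $h$, the current block's partial move vectors, the stage, and the turn bit --- all information the MSCGS strategy is entitled to. (Alternatively, fix one canonical $F$-consistent preimage prefix $\rho(h)$ per reachable history, inductively so that $\rho(h')$ extends $\rho(h)$ when $h$ is the immediate prefix of $h'$; the lifted play then differs from the ``actual'' one only in past-block move vectors, which do not affect states or colours.) With that adjustment the proof goes through as you describe.
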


\subsection{Simple Imperative Language}

We consider a simple imperative programming language that corresponds to the language supported by our model checker.
We define a canonical operational semantics from a program into a CGS, which can afterwards be checked against a \HyperATLS~formula. 

Fix a finite set of variables $\mathcal{V}$ and a domain function $z : \mathcal{V} \to \mathbb{N}_{\geq 1}$ giving the bitwidth of each variable.
Expressions are defined by:
$$
e := x \mid \mathit{true} \mid \mathit{false} \mid e_1 \land e_2 \mid e_1 \lor e_2 \mid \neg e \mid e_1 @ e_2 \mid e[n]
$$
where $x \in \mathcal{V}$ is a variable and $n \in \mathbb{N}$ an index used as an projection.
$\land, \lor$ and $\neg$ denote bitwise and or and negation (operators should have the same bit width). $@$ denotes concatenation of two values and $e[n]$ the projection of the $n$ bit.
A variable state is function $\sigma: \mathcal{V} \to \mathbb{B}^+$ such that for each $x \in \mathcal{V}$, $|\sigma(x)| = z(x)$.
Here $\mathbb{B} = \{\top,\bot\}$.
Given an expression $e$ and a variable state $\sigma \to \mathbb{B}^+$, we define $\llbracket e \rrbracket(\sigma) \in \mathbb{B}^+$ as follows:\\
\begin{minipage}{0.5\textwidth}
	\begin{align*}
		\llbracket x \rrbracket(\sigma) &= \sigma(x)\\
		\llbracket \mathit{true} \rrbracket(\sigma) &= [\top]\\
		\llbracket \mathit{false} \rrbracket(\sigma) &= [\bot]\\
		\llbracket \neg e \rrbracket(\sigma) &= ! \llbracket e \rrbracket(\sigma)
	\end{align*}
\end{minipage}%
\begin{minipage}{0.5\textwidth}
	\begin{align*}
		\llbracket e_1 \land e_2 \rrbracket(\sigma) &= \llbracket e_1 \rrbracket(\sigma) \&\& \llbracket e_2 \rrbracket(\sigma)\\
		\llbracket e_1 \lor e_2 \rrbracket(\sigma) &= \llbracket e_1 \rrbracket(\sigma) || \llbracket e_2 \rrbracket(\sigma)\\
		\llbracket e_1 @ e_2 \rrbracket(\sigma) &= \llbracket e_1 \rrbracket(\sigma) \cdot \llbracket e_2 \rrbracket(\sigma)\\
		\llbracket e[n] \rrbracket(\sigma) &= \llbracket e \rrbracket(\sigma)(n)
	\end{align*}
\end{minipage}

\vspace{0.2cm}

\noindent where $\&\&, ||, !, \cdot$ are bitwise and, bitwise or (only defined on operands of the same size), bitwise negation and string concatenation. 
Programs are then define via:
\begin{align*}
	P := x := e &\mid x := \texttt{Read}_H \mid x := \texttt{Read}_L \mid \texttt{if}~e~\texttt{then}~P_1~\texttt{else}~P_2 \\
	&\mid \texttt{if}~*~\texttt{then}~P_1~\texttt{else}~P_2 \mid \texttt{while} (e)~P \mid P_1 ; P_2 \mid \lightning
\end{align*}%
where $e$ is an expression.
$\lightning$ indicates a terminated program.
We assume the program to be well formed in the usual sense. 
For instance:  \textbf{(1)} operations are only performed on operants with compatible bit lengths, \textbf{(2)} for conditional statements the guarding expression is of length $1$.

The semantics of a program is as expected for assignments, conditionals, loops and sequential composition.
The \texttt{Read} construct allows to read inputs from a high-security or low-security sources. 
$\texttt{if} * \texttt{then}~P_1~\texttt{else}~P_2$ is a non-deterministic branching between both programs.
We formalize the semantics of a program via a relation $\to$ that maps pairs of program and variable state to pairs of program and variable state.
$\langle P, \sigma \rangle \to \langle P', \sigma' \rangle$ means that program $P$ on state $\sigma$ does, in one step, reduce to $P'$ and changes the state to $\sigma'$.
For most language constructs this relation is deterministic. The exceptions are: \textbf{(1) }for $\texttt{if}~*~\texttt{then}~P_1~\texttt{else}~P_2$ both programs are possibly successors, and \textbf{(2)} a read statement causes all possible values of the read variable to be possible successors states. 
The semantics is formalized in \refFig{sem} (Recall that $z$ gives the bit-width of each variable in use).

\begin{figure}
	\begin{minipage}{0.5\textwidth}
		\begin{prooftree}
			\AxiomC{$\sigma' = \sigma[x \mapsto \llbracket e \rrbracket(\sigma)]$}
			\UnaryInfC{$\langle x:= e, \sigma \rangle \to \langle \lightning, \sigma' \rangle$}
		\end{prooftree}
	\end{minipage}
	\begin{minipage}{0.5\textwidth}
		\begin{prooftree}
			\AxiomC{$\sigma' = \sigma[x \mapsto b]$}
			\AxiomC{$b \in \mathbb{B}^{z(x)}$}
			\AxiomC{$P \in \{L, H\}$}
			\TrinaryInfC{$\langle x := \texttt{Read}_P, \sigma \rangle \to \langle \lightning, \sigma' \rangle$}
		\end{prooftree}
	\end{minipage}
	
	\vspace{0.3cm}
	
	\begin{minipage}{0.5\textwidth}
		\begin{prooftree}
			\AxiomC{$\llbracket e \rrbracket(\sigma) = [\top]$}
			\UnaryInfC{$\langle \texttt{if}~e~\texttt{then}~P_1~\texttt{else}~P_2, \sigma \rangle \to \langle P_1, \sigma' \rangle$}
		\end{prooftree}
	\end{minipage}%
	\begin{minipage}{0.5\textwidth}
		\begin{prooftree}
			\AxiomC{$\llbracket e \rrbracket(\sigma) = [\bot]$}
			\UnaryInfC{$\langle \texttt{if}~e~\texttt{then}~P_1~\texttt{else}~P_2, \sigma \rangle \to \langle P_2, \sigma' \rangle$}
		\end{prooftree}
	\end{minipage}
	
	\vspace{0.3cm}
	
	\begin{minipage}{0.5\textwidth}
		\begin{prooftree}
			\AxiomC{}
			\UnaryInfC{$\langle \texttt{if}~*~\texttt{then}~P_1~\texttt{else}~P_2, \sigma \rangle \to \langle P_1, \sigma\rangle$}
		\end{prooftree}
	\end{minipage}%
	\begin{minipage}{0.5\textwidth}
		\begin{prooftree}
			\AxiomC{}
			\UnaryInfC{$\langle \texttt{if}~*~\texttt{then}~P_1~\texttt{else}~P_2, \sigma \rangle \to \langle P_2, \sigma \rangle$}
		\end{prooftree}
	\end{minipage}
	
	\vspace{0.3cm}
	
	\begin{minipage}{0.5\textwidth}
		\begin{prooftree}
			\AxiomC{$\llbracket e \rrbracket(\sigma) = [\bot]$}
			\UnaryInfC{$\langle \texttt{while}(e)~P, \sigma \rangle \to \langle \lightning, \sigma \rangle$}
		\end{prooftree}
	\end{minipage}%
	\begin{minipage}{0.5\textwidth}
		\begin{prooftree}
			\AxiomC{$\llbracket e \rrbracket(\sigma) = [\bot]$}
			\UnaryInfC{$\langle \texttt{while}(e)~P, \sigma \rangle \to \langle P;~\texttt{while}(e)~P, \sigma \rangle$}
		\end{prooftree}
	\end{minipage}
	
	\vspace{0.3cm}
	
	\begin{minipage}{0.33\textwidth}
		\begin{prooftree}
			\AxiomC{$\langle P_1, \sigma \rangle \to \langle \lightning, \sigma' \rangle$}
			\UnaryInfC{$\langle P_1 ; P_2, \sigma \rangle \to \langle P_2, \sigma' \rangle$}
		\end{prooftree}
	\end{minipage}%
	\begin{minipage}{0.33\textwidth}
		\begin{prooftree}
			\AxiomC{$\langle P_1 ; P_2, \sigma \rangle \to \langle P_1', \sigma' \rangle$}
			\AxiomC{$P_1' \neq \lightning$}
			\BinaryInfC{$\langle P_1 ; P_2, \sigma \rangle \to \langle P_1';P_2, \sigma' \rangle$}
		\end{prooftree}
	\end{minipage}%
	\begin{minipage}{0.33\textwidth}
		\begin{prooftree}
			\AxiomC{}
			\UnaryInfC{$\langle \lightning, \sigma \rangle \to \langle \lightning, \sigma \rangle$}
		\end{prooftree}
	\end{minipage}
	
	\caption{Small-step semantics of our imperative programming language.} \label{fig:sem}
\end{figure}

We also associate each program construct to a player from $\{\agent_N, \agent_H, \agent_L\}$. In our simple language the programs are complied into a CGS where in each state exactly one of the players has a decision and all other are irrelevant. 
The resulting CGS is therefore turn-based in the sense of \cite{DBLP:journals/jacm/AlurHK02}.
This mapping $\mathit{player} : P \to  \{\agent_N, \agent_H, \agent_L\}$ is inductively defined via:

\begin{minipage}{0.4\textwidth}
	\begin{align*}
		\mathit{player}(x := e) &= \agent_N\\
		\mathit{player}(x := \texttt{Read}_H) &= \agent_H\\
		\mathit{player}(x := \texttt{Read}_L) &= \agent_L\\
		\mathit{player}( \texttt{while} (e) P) &= \agent_N
	\end{align*}
\end{minipage}%
\begin{minipage}{0.6\textwidth}
	\begin{align*}
		\mathit{player}( \texttt{if}~e~\texttt{then}~P_1~\texttt{else}~P_2) &= \agent_N\\
		\mathit{player}( \texttt{if}~*~\texttt{then}~P_1~\texttt{else}~P_2) &= \agent_N\\
		\mathit{player}( P_1 ; P_2) &= \mathit{player}(P_1)\\
		\mathit{player}( \lightning) &= \agent_N
	\end{align*}
\end{minipage}

\vspace{0.2cm}
\noindent Note that only the cases for reading and non-deterministic branching are relevant, as in all other cases there exists a unique successor configuration.

\subparagraph{The final CGS}
Given a program $P_0$ and a set of variables $\mathcal{V}$ (including all variables used in $P$) and a bitwidth $z : \mathcal{V} \to \mathbb{N}_{\geq 1}$ the states in the CGS consist of all configuration of the from $\langle P, \sigma\rangle$ where $P$ is a program and $\sigma$ a variable state conforming to $z$. 
The initial state is $\langle P_0, \mathbf{0}\rangle$ where $P_0$ is the given program and $\mathbf{0}$ the variable state that maps each variable to the $0$ vector (of length given by $z$).
The successors in a state $\langle P, \sigma \rangle$ are given by $\to$. The player $\mathit{player}(P)$ is responsible for choosing a successor.
Atomic propositions have the from $xi$ where $x \in \mathcal{V}$ is a variable and $0 \leq i < z(x)$ a position. Proposition $xi$ is set to true in state $\langle P, \sigma \rangle$ iff $\sigma(x)(i) = \top$.
This gives a canonical reduction from a program into a CGS.

\section{Additional Material for \refSection{proto} - Experimental Evaluation}

In this section we provide the example programs and properties checked in our evaluation. 
Our tool is publicly  available at GitHub: \url{https://github.com/reactive-systems/hyperatl}

\subsection{Information-Flow Control}

\begin{figure}[t]
	\small
	\begin{subfigure}[b]{0.24\textwidth}
		\myvar{o} $\leftarrow$ \myconst{$\bot$}\\
		\mycontrol{while}(\myconst{true})\\
		\makebox[0.3cm]{} \myvar{h} $\leftarrow$ \mycontrol{read}$_H$()\\
		\makebox[0.3cm]{} \textcolor{controlColor}{if} (\myvar{h}~$=$~\myconst{$\top$} ) \textcolor{controlColor}{then}\\
		\makebox[0.6cm]{} \myvar{o} $\leftarrow$ !(\myvar{o})\\
		\makebox[0.3cm]{} \mycontrol{else}\\
		\makebox[0.6cm]{} \myvar{o} $\leftarrow$ !(\myvar{o}) $\land$~(\myvar{h}~$\lor$~!(\myvar{h}))
		
		\subcaption{\texttt{P1}}
	\end{subfigure}
	\begin{subfigure}[b]{0.24\textwidth}
		\myvar{l} $\leftarrow$ \myconst{$\bot$}\\
		\myvar{o} $\leftarrow$ \myconst{$\top$}\\
		\mycontrol{while}(\myconst{true})\\
		\makebox[0.3cm]{} \myvar{h} $\leftarrow$ \mycontrol{read}$_H$()\\
		\makebox[0.3cm]{} \myvar{b} $\leftarrow$ \myvar{l}\\
		\makebox[0.3cm]{} \myvar{l} $\leftarrow$ \mycontrol{read}$_L$()\\
		\makebox[0.3cm]{} \textcolor{controlColor}{if} (\myvar{h}~$=$~\myconst{$\top$} ) \textcolor{controlColor}{then}\\
		\makebox[0.6cm]{} \myvar{o} $\leftarrow$ \myvar{l}~$\lor$~\myvar{b}\\
		\makebox[0.3cm]{} \mycontrol{else}\\
		\makebox[0.6cm]{} \myvar{o} $\leftarrow$ \myvar{l}~$\lor$~\myvar{b}\\
		
		\subcaption{\texttt{P2}}
	\end{subfigure}
	\begin{subfigure}[b]{0.24\textwidth}
		\myvar{l} $\leftarrow$ \myconst{$\bot$}\\
		\myvar{o} $\leftarrow$ \myconst{$\top$}\\
		\mycontrol{while}(\myconst{true})\\
		\makebox[0.3cm]{} \myvar{h} $\leftarrow$ \mycontrol{read}$_H$()\\
		\makebox[0.3cm]{} \textcolor{controlColor}{if} ($\star$) \textcolor{controlColor}{then}\\
		\makebox[0.6cm]{} \myvar{o} $\leftarrow$ \myconst{$\top$}\\
		\makebox[0.3cm]{} \mycontrol{else}\\
		\makebox[0.6cm]{} \myvar{o} $\leftarrow$ \myconst{$\bot$}\\
		
		\subcaption{\texttt{P3}}
	\end{subfigure}
	\begin{subfigure}[b]{0.24\textwidth}
		\myvar{o} $\leftarrow$ \myconst{$\top$}\\
		\mycontrol{while}(\myconst{true})\\
		\makebox[0.3cm]{} \textcolor{controlColor}{if} ($\star$) \textcolor{controlColor}{then}\\
		\makebox[0.6cm]{} \myvar{h} $\leftarrow$ \mycontrol{read}$_H$()\\
		\makebox[0.6cm]{} \textcolor{controlColor}{if} (\myvar{h}~$=$~\myconst{$\top$}) \textcolor{controlColor}{then}\\
		\makebox[0.9cm]{} \myvar{o} $\leftarrow$ \myconst{$\top$}\\
		\makebox[0.6cm]{} \mycontrol{else}\\
		\makebox[0.9cm]{} \myvar{o} $\leftarrow$ \myconst{$\bot$}\\
		\makebox[0.3cm]{} \mycontrol{else}\\
		\makebox[0.6cm]{} \textcolor{controlColor}{if} (\myvar{h}~$=$~\myconst{$\top$}) \textcolor{controlColor}{then}\\
		\makebox[0.9cm]{} \myvar{o} $\leftarrow$ \myconst{$\bot$}\\
		\makebox[0.6cm]{} \mycontrol{else}\\
		\makebox[0.9cm]{} \myvar{o} $\leftarrow$ \myconst{$\top$}\\
		
		\subcaption{\texttt{P4}}
	\end{subfigure}

	\caption{Example programs to demonstrate information-flow control. } \label{fig:exProfgrams}
\end{figure}

The example benchmark we tested in \refTable{res1} are given in \refFig{exProfgrams}. 
Note that the bitwidth of each variable can be varied, causing a state-space explosion.
The formulas we checked are the following:
\begin{itemize}
	\item \textbf{(OD)}: $ [\forall \pi_1. \forall \pi_2.]~\ltlg (o_{\pi_1} \leftrightarrow o_{\pi_2})$
	\item \textbf{(NI)}: $ [\forall \pi_1. \forall \pi_2.]~(\ltlg (l_{\pi_1} \leftrightarrow l_{\pi_2})) \to \ltlg (o_{\pi_1} \leftrightarrow o_{\pi_2})$
	\item \textbf{(simSec)}: $ [\forall_\mathcal{G} \pi_1. \llangle \agent_N \rrangle_{\mathcal{G}_\mathit{shift}} \pi_2.]~(\ltlg (l_{\pi_1} \leftrightarrow \bigcirc l_{\pi_2})) \to \ltlg (o_{\pi_1} \leftrightarrow \bigcirc o_{\pi_2})$ where $\mathcal{G}_\mathit{shift}$ is the transition system shifted by one position. 
	
	\item \textbf{(sGNI)}: $ [\forall_\mathcal{G} \pi_1. \forall_\mathcal{G} \pi_2. \exists_{\mathcal{G}_{\mathit{shift}: k}} \pi_3.]~\ltlg (h_{\pi_1} \leftrightarrow \bigcirc^k h_{\pi_2}) \land \ltlg (o_{\pi_2} \leftrightarrow \bigcirc^k o_{\pi_3} \land l_{\pi_2} \leftrightarrow \bigcirc^k l_{\pi_3})$
	
	This is the standard definition of GNI extended to take the low-security input into account. We define $\mathcal{G}_{\mathit{shift}: k}$ as the system where the behaviour is shifted by $k$-positions. 
	By changing the $k$ we can thus allow the existential player a $k$-bounded view on the future of traces $\pi_1$ and $\pi_2$.
	For every $k$ this formula implies the trace based version of $\mathit{GNI}$.
	In our examples we choose $k = 3$. Note that for instance program \texttt{P4} (\refFig{exProfgrams}) is only satisfied if $k \geq 2$, as the existential player must observe the next input. 
	In particular note that correspondence to \cite{DBLP:conf/cav/CoenenFST19}, where they used a game based reading. As we mentioned in \refSection{hypteratl}, all formulas that can be checked in their approach can also be checked in our model checker.
	\textbf{(sGNI)} is a example of such a property, as it uses a game based reading of the existential quantifier. 
	Increasing the $k$-parameter (i.e., the clairvoyance of the existential player) can be seen as adding prophecy variables for the $k$-steps (see \cite{DBLP:conf/cav/CoenenFST19}).
\end{itemize}

We mention at this point that our model checker can be used for many more interesting programs and properties. We restrict to such a minimal class for space reasons. 

\subsection{Asynchronous Hyperproperties}

\begin{figure}[!t]
	
	\begin{subfigure}[b]{0.5\textwidth}
		\myvar{o} $\leftarrow$ \myconst{$\bot$}\\
		\mycontrol{while}(\myconst{true})\\
		\makebox[0.3cm]{} \myvar{h} $\leftarrow$ \mycontrol{read}$_H$()\\
		\makebox[0.3cm]{} \textcolor{controlColor}{if} (\myvar{h}~$=$~\myconst{$\top$} ) \textcolor{controlColor}{then}\\
		\makebox[0.6cm]{} \myvar{o} $\leftarrow$ !(\myvar{o})\\
		\makebox[0.3cm]{} \mycontrol{else}\\
		\makebox[0.6cm]{} \myvar{t} $\leftarrow$ !(\myvar{o})\\
		\makebox[0.6cm]{} \myvar{o} $\leftarrow$ \myvar{t}
		
		\subcaption{\texttt{Q1}}
	\end{subfigure}
	\begin{subfigure}[b]{0.5\textwidth}
		\myvar{o} $\leftarrow$ \myconst{$\bot$}\\
		\myvar{r} $\leftarrow$ \myconst{$\bot$}\\
		\mycontrol{while}(\myconst{true})\\
		\makebox[0.3cm]{} \myvar{r} $\leftarrow$ \myconst{$\top$}\\
		\makebox[0.3cm]{} \myvar{l} $\leftarrow$ \mycontrol{read}$_L$()\\
		\makebox[0.3cm]{} \myvar{r} $\leftarrow$ \myconst{$\bot$}\\
		\makebox[0.3cm]{} \textcolor{controlColor}{if} (\myvar{l}~$=$~\myconst{$\top$} ) \textcolor{controlColor}{then}\\
		\makebox[0.6cm]{} \myvar{o} $\leftarrow$ \myconst{$\top$}\\
		\makebox[0.3cm]{} \mycontrol{else}\\
		\makebox[0.6cm]{} \myvar{t} $\leftarrow$ \myconst{$\bot$}\\
		\makebox[0.6cm]{} \myvar{o} $\leftarrow$ \myvar{t}
		
		\subcaption{\texttt{Q2}}
	\end{subfigure}

	\caption{Example programs to demonstrate asynchronous hyperproperties}\label{fig:asycExample}
\end{figure}

The example programs used to verify the asynchronous hyperproperties in \refTable{res2} are given in \refFig{asycExample}.
The properties we verify are: 

\begin{itemize}
	\item \textbf{(OD)}: $ [\forall \pi_1. \forall \pi_2.]~\ltlg (o_{\pi_1} \leftrightarrow o_{\pi_2})$
	\item \textbf{(OD)$_\mathit{asynch}$}: $ [\llangle \mathit{sched} \rrangle \pi_1. \llangle \mathit{sched} \rrangle \pi_2.]~\ltlg (o_{\pi_1} \leftrightarrow o_{\pi_2}) \land \mathit{fair}_{\pi_1} \land \mathit{fair}_{\pi_2} $
	\item \textbf{(NI)$_\mathit{asynch}$}: $ [\llangle \mathit{sched} \rrangle \pi_1. \llangle \mathit{sched} \rrangle \pi_2.]~\left((\ltlg (l_{\pi_1} \leftrightarrow l_{\pi_2})) \to \ltlg (o_{\pi_1} \leftrightarrow o_{\pi_2})\right) \land \mathit{fair}_{\pi_1} \land \mathit{fair}_{\pi_2} \land \ltlg(r_{\pi_1} \leftrightarrow r_{\pi_2})$
\end{itemize}

We pay particular attention to $\textbf{(NI)}_\mathit{asynch}$, an asynchronous formulation of non-interference. Note that the implication in the formulation of non-interference (non-interference only requires the low outputs to agree in both runs if the low-inputs also agree) causes this formula not be expressible via an admissible formula in the sense of \cite{hyperaltl}.
To express a meaningful property we need to make sure that the asynchronous scheduler can not cheat by simply aligning the programs such that the input differs.
In particular the formula 
$$ [\llangle \mathit{sched} \rrangle \pi_1. \llangle \mathit{sched} \rrangle \pi_2.]~(\ltlg (l_{\pi_1} \leftrightarrow l_{\pi_2})) \to \ltlg (o_{\pi_1} \leftrightarrow o_{\pi_2})$$
is trivially satisfied.
To avoid this, we augmented the program \texttt{Q2} with a variable $r$, that indicates the reading position. We then ask the scheduler to always align the $r$ proposition and thereby also the read operations. 
Such restriction on the scheduler are e.g.~not possible in the setting of \cite{hyperaltl}.
This emphasises the point that \HyperATLS~is well suited to express many properties of interest and can easily incorporate domain knowledge in the specification (to e.g., express non-trivial properties such as non-interference).

\end{document}